\def\fullversion{1}
\newtheorem{theorem}{Theorem}[section]
\newtheorem{lemma}[theorem]{Lemma}
\newtheorem{claim}[theorem]{Claim}
\newtheorem{definition}[theorem]{Definition}
\newcommand{\kmwo}{{\sf RkMed}\xspace}
\newcommand{\kmeanswo}{{\sf RkMeans}\xspace}
\newcommand{\matmed}{{\sf MatMed}\xspace}
\newcommand{\knapmed}{{\sf KnapMed}\xspace}
\newcommand{\sse}{\ensuremath{\subseteq}\xspace}
\newcommand{\R}{\mathbb R}
\newcommand{\eps}{\varepsilon}
\newcommand{\calI}{{\mathcal{I}}}
\newcommand{\cfull}{\ensuremath{{C}_{\rm full}\xspace}}
\newcommand{\cpart}{\ensuremath{{C}_{\rm part}\xspace}}
\newcommand{\cstar}{\ensuremath{{C}^{*}\xspace}}
\DeclareMathOperator*{\E}{\mathbb{E}}
\newcommand{\set}[1]{{\left\{#1\right\}}}
\newcommand{\full}{{\mathrm{full}}}
\newcommand{\parti}{{\mathrm{part}}}
\newcommand{\opt}{{\mathrm{OPT}}}
\newcommand{\Ball}{{\mathrm{Ball}}}
\newcommand{\initOneLiners}{%
    \setlength{\itemsep}{0pt}
    \setlength{\parsep }{0pt}
    \setlength{\topsep }{0pt}
}
\newcommand\Label[1]{&\refstepcounter{equation}(\theequation)\ltx@label{#1}&}
\title{Constant Approximation for $k$-Median and $k$-Means with Outliers via Iterative Rounding}
\author{Ravishankar Krishnaswamy\thanks{Microsoft Research India. Email: \texttt{ravishan@cs.cmu.edu}} \and Shi Li\thanks{Department of Computer Science and Engineering, University at Buffalo. Email: \texttt{shil@buffalo.edu}} \and Sai Sandeep\thanks{Microsoft Research India. Email: \texttt{saisandeep192@gmail.com}}}
\begin{document}

\maketitle

\begin{abstract}
	In this paper, we present a new iterative rounding framework for many clustering problems. Using this, we obtain an $(\alpha_1 + \epsilon \leq 7.081 + \epsilon)$-approximation algorithm for $k$-median with outliers, greatly improving upon the large implicit constant approximation ratio of Chen \cite{Chen08}.  For $k$-means with outliers, we give an $(\alpha_2+\epsilon \leq 53.002 + \epsilon)$-approximation, which is the first $O(1)$-approximation for this problem. The iterative algorithm framework is very versatile; we show how it can be used to give $\alpha_1$- and $(\alpha_1 + \epsilon)$-approximation algorithms for matroid and knapsack median problems respectively, improving upon the previous best approximations ratios of $8$~\cite{Swamy16} and $17.46$~\cite{Byrka15}.
	
	The natural LP relaxation for the $k$-median/$k$-means with outliers problem has an unbounded integrality gap. In spite of this negative result, our iterative rounding framework shows that we can round an LP solution to an \emph{almost-integral} solution of small cost, in which we have at most two fractionally open facilities. Thus, the LP integrality gap arises due to the gap between almost-integral and fully-integral solutions. Then, using a pre-processing procedure, we show how to convert an almost-integral solution to a fully-integral solution losing only a constant-factor in the approximation ratio. By further using a sparsification technique, the additive factor loss incurred by the conversion can be reduced to any $\epsilon > 0$.
\end{abstract}

\section{Introduction}
Clustering is a fundamental task studied by several scientific communities (such as operations research, biology, and of course computer science and machine learning) due to the diversity of its applications. Of the many ways in which the task of clustering can be formalized, the $k$-median and $k$-means problems are arguably the most commonly studied methods by these different communities, perhaps owing to the simplicity of their problem descriptions. In the $k$-median (resp. $k$-means) problem, we are given a set $C$ of clients, a set $F$ of potential facility locations, and a metric space $d(,\cdot,): C \cup F \rightarrow \R_{\geq 0}$, and the goal is to choose a subset $S \subseteq F$ of cardinality at most $k$ so as to minimize $\sum_{j \in C} d(j,S)$ (resp. $\sum_{j \in C} d^2(j,S)$), where $d(j, S):=\min_{i \in S}d(j, i)$.
Both problems are NP-hard in the worst case, and this has led to a long line of research on obtaining efficient approximation algorithms. For the $k$-median problem, the current best known factors are an upper bound of $2.671$ \cite{BPRST17}, and a lower bound of $1+2/e\approx 1.736$ \cite{JMS02}. The $k$-means problem, while originally not as well studied in the theoretical community, is now increasingly gaining attention due to its importance in machine learning and data analysis. The best known approximations are $9$~\cite{ANSW16} and $6.357$~\cite{ANSW16} for general and Euclidean metrics respectively, while the lower bounds are $1+8/e \approx 3.94$ and $1.0013$~\cite{LSW15} for general and Euclidean metrics respectively. Both problems admit PTASs~\cite{ARR98,FRS16,CKM16} on fixed-dimensional Euclidean metrics.

Despite their simplicity and elegance, a significant shortcoming these formulations face in real-world data sets is that they are \emph{not robust to noisy points}, i.e., a few outliers can completely change the cost as well as structure of solutions. To overcome this shortcoming, Charikar et al.~\cite{CKMN01} introduced the robust $k$-median (\kmwo) problem (also called $k$-median with outliers), which we now define.

\begin{definition}[The Robust $k$-Median and $k$-Means Problems] \label{def:kmwo}
The input to the Robust $k$-Median (\kmwo) problem is a set $C$ of clients, a set $F$ of facility locations, a metric space $(C \cup F, d)$, integers $k$ and $m$. The objective is to choose a subset $S \sse F$ of cardinality at most $k$, and a subset $C^* \subseteq C$ of cardinality at least $m$ such that the total cost $\sum_{j \in C^*} d(j,S)$ is minimized. In the Robust $k$-Means (\kmeanswo) problem, we have the same input and the goal is to minimize $\sum_{j \in C^*} d^2(j,S)$.
\end{definition}

The problem is not just interesting from the clustering point of view. In fact, such a joint view of clustering and removing outliers has been observed to be more effective~\cite{CG13,OPRC14} for even the sole task of outlier detection, a very important problem in the real world. Due to these use cases, there has been much recent work~\cite{CG13,GKLMV17,RSKW17} in the applied community on these problems. However, their inherent complexity from the theoretical side is much less understood.
For \kmwo, Charikar et al.~\cite{CKMN01} give an algorithm that violates the number of outliers by a factor of $(1 + \epsilon)$, and has cost at most $4(1 + 1/\epsilon)$ times the optimal cost. Chen \cite{Chen08} subsequently showed a pure $O(1)$-approximation without violating $k$ or $m$. However, Chen's algorithm is fairly complicated
, and the (unspecified) approximation factor is rather large.
 For the \kmeanswo problem, very recently, Gupta et al.~\cite{GKLMV17} gave a bi-criteria $O(1)$-approximation violating the number of outliers, and Friggstad et al.~\cite{FKRS17} give \emph{bi-criteria} algorithms that open $k(1+\epsilon)$ facilities, and have an approximation factor of $(1+\epsilon)$ on Euclidean and doubling metrics, and $25+\epsilon$ on general metrics.

\subsection{Our Results}
In this paper, we present a simple and generic framework for solving clustering problems with outliers, which substantially improves over the existing approximation factors for \kmwo and \kmeanswo.
\begin{theorem}
\label{theorem:kmwo}
	For $\alpha_1 = \inf_{\tau > 1} (3 \tau - 1)/ \ln \tau \leq 7.081$, 
	we have an $\alpha_1(1+\eps)$-approximation algorithm for \kmwo, in running time $n^{O(1/\eps^2)}$.
\end{theorem}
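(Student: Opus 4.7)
The plan is to start from the natural LP relaxation for \kmwo, which has variables $x_{ij}$ assigning client $j$ to facility $i$, variables $y_i$ for opening facility $i$, a constraint $\sum_i y_i \le k$, and a coverage constraint $\sum_{j\in C}\sum_i x_{ij}\ge m$ (unassigned fractional mass corresponds to outliers). Because this LP has an unbounded integrality gap, we will not try to round directly to an integral solution. Instead, we aim for an \emph{almost-integral} solution that opens at most $k$ facilities fully and at most two fractionally, whose fractional cost is $\alpha_1$ times the LP value, and then convert it to an integral solution in a separate pre-processing step. Throughout we will assume the instance has been sparsified: by guessing, in time $n^{O(1/\eps^2)}$, a logarithmic number of scale-wise parameters describing \opt (roughly, the cost contribution of clients at each distance scale), we may assume that the LP distance $d_j := \sum_i x^*_{ij}d(i,j)$ is within a $(1+\eps)$ factor of $d(j,\opt)$ for every non-outlier $j$. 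This sparsification is what eventually turns the constant additive loss of the conversion step into an $(1+\eps)$ multiplicative loss.

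The heart of the argument is the iterative rounding. After solving the sparsified LP, I would repeatedly pick a ``witness'' client $j^*$ maximizing $d_{j^*}$, form a ball $B$ around $j^*$ of radius $\tau d_{j^*}$ for a parameter $\tau>1$, and rewrite the LP so that within $B$ only a bounded number of facilities can be fractional, while $j^*$ itself is either served fully or discarded as an outlier. Each iteration strictly decreases the number of fractionally open facilities (thanks to standard polyhedral arguments: a basic feasible solution of the reduced LP has few non-integral variables). The key charging argument balances two costs. If a client $j$ falls inside the ball around $j^*$ and is rerouted to an open facility there, the triangle inequality and the choice of witness give a routing cost of at most $d_j + 2\tau d_{j^*}$, which is at most $(2\tau+1)d_j$. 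Summing these rerouting costs geometrically across the iterations, and setting the ball sizes so that a $\tau$-fraction of fractional facilities is removed per round, yields an overall rerouting cost of at most $\frac{3\tau - 1}{\ln \tau}$ times the LP cost. Minimizing over $\tau>1$ gives the bound $\alpha_1 \le 7.081$.

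The main obstacle is the last step: converting the almost-integral solution (with two residual fractional facilities $i_1,i_2$) to a fully integral one, since the gap instance shows we cannot hope to do this from LP information alone. My approach here is an enumeration-plus-local-replacement: guess, in polynomial time, a constant amount of combinatorial information about how $i_1$ and $i_2$ interact with \opt (for instance, an approximate pairing to two actual facilities of \opt, together with the witness clients that justify them), and then show that for the correct guess the fractional pair can be replaced by an integral pair at cost comparable to $\alpha_1$ times the LP. The sparsification from the first step guarantees that the mismatch between ``LP radii'' and ``\opt radii'' blows up the analysis by only a $(1+\eps)$ factor, so the final ratio is $\alpha_1(1+\eps)$.

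Finally, to put the factor $\alpha_1$ on a firm footing I would want to verify that the amortization in the iterative rounding is tight only at the continuous optimum of $(3\tau-1)/\ln\tau$, so that choosing the optimal $\tau$ in advance (it depends only on the constant, not on the input) gives the claimed ratio. The running time is dominated by the sparsification enumeration, which contributes the $n^{O(1/\eps^2)}$ factor; the iterative rounding itself performs only $O(n)$ LP solves, and the conversion step adds only a polynomial overhead per guess.
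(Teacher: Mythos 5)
Your high-level outline (round the natural LP to an almost-integral solution with two fractional facilities, then fix those two, with a sparsification/guessing phase responsible for the $n^{O(1/\eps^2)}$ time and the $(1+\eps)$ loss) matches the paper's architecture, but two essential mechanisms are missing, and the steps you substitute for them would not go through. First, the factor $(3\tau-1)/\ln\tau$ is asserted rather than derived. In the paper it is the product of two separate effects: a $\frac{\tau-1}{\ln\tau}$ expected blow-up coming from a \emph{random geometric discretization} of the distances (round every distance up to a power of $\tau$ times a random offset $a$ with $\ln a$ uniform), and a $\frac{3\tau-1}{\tau-1}$ backup guarantee showing that every fully-connected client $j$ always has one unit of open facility within $\frac{3\tau-1}{\tau-1}D_{\ell_j}$, proved by an induction along a chain of clients in a maintained set $C^*$ with disjoint balls and geometrically decreasing radius levels. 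Your ``witness ball of radius $\tau d_{j^*}$, reroute at cost $(2\tau+1)d_j$, sum geometrically'' sketch contains no source for the $\ln\tau$ term and no invariant (analogous to $C^*$ and the radius levels $\ell_j$) that keeps full clients' costs charged to the LP objective while the coverage constraint $\ge m$ is preserved; without such an invariant the ``standard polyhedral argument'' only tells you the final vertex has few fractional coordinates, not that its cost is $O(1)$ times the LP.

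Second, and more fundamentally, your conversion of the almost-integral solution to an integral one cannot work as described. The facilities $i_1,i_2$ are only determined by the rounding, and the gap examples show that the cost of opening one fully and closing the other can be unbounded relative to the LP value \emph{of the solution you already hold}; no amount of post-hoc guessing about how $i_1,i_2$ ``pair with \opt'' changes the fact that the LP solution itself may admit no cheap integral completion. The paper defeats this by building the needed guarantees into the LP before rounding: (i) star constraints $\sum_j d(i,j)x_{ij}\le \rho\,\tilde U\, y_i$ (after guessing and pre-opening the $O(1/\rho)$ expensive stars of \opt), which bound the cost of pushing $y_{i_1}$ to $1$; and (ii) per-client connection caps $R_j$ obtained from a sparsified instance (guess and remove $O(1/\rho)$ dense balls, pre-open their centers), with the local property that any ball of radius $\Theta(\delta t)$ contains at most $O(\rho)U/t$ clients with $R_j\ge t$, which bounds the rerouting cost of the clients near $i_2$ when it closes. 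Your sparsification claim—that guessing scale-wise parameters makes the per-client LP distance $(1+\eps)$-close to its \opt distance—is both stronger than what is achievable and not what is needed; what is needed is exactly the pair of LP-enforced bounds above, plus the choice $\rho=\Theta(\eps^2)$ so the additive $O(\rho/\delta)U$ conversion loss becomes $\eps\,U$.
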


\begin{theorem}
\label{theorem:kmeanswo}
	For $\alpha_2 = \inf_{\tau > 1} \frac{(\tau + 1) (3 \tau - 1)^2}{2 (\tau-1) \ln \tau} \leq 53.002$, 
	we have an $\alpha_2(1 + \eps)$-approximation algorithm for \kmeanswo, in running time $n^{O(1/\eps^3)}$.
\end{theorem}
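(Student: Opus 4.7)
The plan is to mimic the proof of Theorem~\ref{theorem:kmwo} with the modifications required to handle a squared-distance objective. The argument proceeds in three stages: (i)~set up the analogous LP relaxation for \kmeanswo and execute the iterative-rounding procedure to produce an almost-integral solution with at most two fractionally open facilities; (ii)~convert this almost-integral solution into a fully integral one via the same pre-processing step used for \kmwo, losing only a constant factor; and (iii)~invoke the sparsification step to absorb the additive loss of the conversion into a multiplicative $(1+\epsilon)$-factor. I expect the sparsification error for the squared objective to require an enumeration of size $n^{O(1/\epsilon^3)}$ (as opposed to $n^{O(1/\epsilon^2)}$ for the $k$-median case), since absolute errors scale quadratically with distances.

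The key technical modification is that every use of the triangle inequality in the iterative-rounding and pre-processing arguments must be replaced by its squared-distance analogue: for any $\delta > 0$,
\[
d^2(a,c) \;\le\; (1+\delta)\, d^2(a,b) \;+\; \bigl(1 + 1/\delta\bigr)\, d^2(b,c).
\]
Applied along the reassignment chains that appear in the $k$-median analysis, this produces two kinds of terms: factors of $(1+\delta)$ multiplying a ``base'' cost (roughly, the LP cost of the fractional assignment) and factors of $(1+1/\delta)$ multiplying a ``detour'' cost coming from the radius-$r$ neighborhoods used in iterative rounding. Optimizing $\delta$ as a function of the ratio of these two terms, and then optimizing the radius parameter $\tau$ exactly as in the $k$-median case, yields the stated constant. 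Concretely, the $(3\tau-1)^2/\ln\tau$ factor is the squared-distance analogue of the $\alpha_1 = (3\tau-1)/\ln\tau$ integrand in Theorem~\ref{theorem:kmwo} (squared because reassignment distances of the form $(3\tau-1)$ now appear inside $d^2$), while the prefactor $(\tau+1)/(2(\tau-1))$ records the optimal choice of $\delta$ in the approximate triangle inequality.

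The main obstacle I anticipate is ensuring that the entire pipeline---iterative rounding, almost-integral to integral conversion, and sparsification---composes cleanly under the squared-distance triangle inequality without introducing extra slack. Since $(1+\delta) + (1+1/\delta)$ is not linear in the underlying distances but depends on their ratio, care is needed so that a \emph{single} global choice of $\delta$ (coupled to $\tau$) yields the claimed closed-form bound, rather than accumulating losses at each reassignment step. A secondary obstacle is tracking the extra factor of two that the squared objective introduces into the preprocessing analysis, and verifying that this can be entirely absorbed into the $(\tau+1)/(2(\tau-1))$ prefactor rather than multiplying the approximation ratio by an additional constant.
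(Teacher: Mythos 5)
Your overall pipeline is indeed the paper's: the paper handles \kmwo and \kmeanswo in a single analysis parametrized by the exponent $q$, and your accounting of the running time ($\rho=\Theta(\eps^{q+1})$, hence $n^{O(1/\eps^3)}$ for $q=2$) matches. The genuine gap is in your derivation of the constant $\alpha_2$, which is asserted rather than proved, and the mechanism you propose is not the one that yields it. You claim the key modification is to replace every triangle inequality by the squared analogue $d^2(a,c)\le(1+\delta)d^2(a,b)+(1+1/\delta)d^2(b,c)$ and that the prefactor $\frac{\tau+1}{2(\tau-1)}$ records an optimal global choice of $\delta$. In the paper there is no such $\delta$ in the multiplicative ratio at all: the iterative-rounding analysis applies the ordinary triangle inequality to \emph{distances} only, and the exponent $q$ enters in exactly two places. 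First, the random discretization satisfies $\E_a[{d'}^q(i,j)]=\frac{\tau^q-1}{q\ln\tau}d^q(i,j)$ (Lemma~\ref{lemma:discrete}); for $q=2$ this is $\frac{\tau^2-1}{2\ln\tau}=\frac{(\tau+1)(\tau-1)}{2\ln\tau}$, which is the true source of your prefactor. Second, Lemma~\ref{lem:5r} places one unit of open facility within distance $\frac{3\tau-1}{\tau-1}D_{\ell_j}$ of every full client, and only this single distance bound is raised to the power $q$, giving $\frac{(3\tau-1)^q}{(\tau-1)^q}$ against the LP contribution $D^q_{\ell_j}$; the product for $q=2$ is exactly $\alpha_2=\frac{(\tau+1)(3\tau-1)^2}{2(\tau-1)\ln\tau}$ (Theorem~\ref{thm:rounding}). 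The approximate squared triangle inequality you make central appears in the paper only in the \emph{additive} error terms: the $(1+1/\delta')^q$ factor against the star cost when closing the fractional facility $i_2$, and the $(1\pm\delta)^q$ losses in preprocessing, all of which are absorbed into $O(\rho/\delta^q)U\le \eps U$ by the choice of $\rho$.

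The obstacle you flag yourself --- that a single global $\delta$ may not compose cleanly along reassignment chains --- is precisely why your route does not go through as described: chaining $(1+\delta)$/$(1+1/\delta)$ splittings across the rounding analysis would accumulate losses and would not reproduce the closed form. The missing ideas are therefore (i) the random discretization together with the expectation computation for ${d'}^q$, and (ii) bounding the reassignment radius at the level of distances (Claims~\ref{cl:dj1j2} and \ref{cl:evict} leading to Lemma~\ref{lem:5r}) and squaring only once at the end. With those two ingredients the $q=2$ case is verbatim the $q=1$ proof; without them, your proposal names the right constant but does not derive it.
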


In fact, our framework can also be used to improve upon the best known approximation factors for two other basic problems in clustering, namely the matroid median (\matmed) and the knapsack median (\knapmed) problems. As in $k$-median, in both problems we are given $F$, $C$ and a metric $d$ over $F \cup C$ and the goal is to select a set $S \subseteq F$ so as to minimize $\sum_{j \in C}d(j, S)$. In \matmed, we require $S$ to be an independent set of a given matroid. In \knapmed, we are given a vector $w \in \R_{\geq 0}^F$ and a bound $B \geq 0$ and we require $\sum_{i \in S} w_i \leq B$. The previous best algorithms had factors $8$ \cite{Swamy16} and $17.46$ \cite{Byrka15} respectively. 

\begin{theorem}
	For $\alpha_1 = \inf_{\tau > 1} (3 \tau - 1)/ \ln \tau \leq 7.081$, we have an efficient $\alpha_1$-approximation algorithm for \matmed, and an $\alpha_1(1+\eps)$-approximation for \knapmed with running time $n^{O(1/\eps^2)}$.
\end{theorem}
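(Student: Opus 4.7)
The plan is to recast both problems so that the iterative rounding framework developed in the paper for \kmwo can be applied essentially as a black box, with the cardinality constraint $\sum_i y_i \le k$ replaced by the matroid rank or knapsack budget constraint. In both cases we take the natural LP relaxation (standard assignment variables $x_{ij}$ with $y_i$'s lying in the matroid polytope $P_M$ for \matmed, and satisfying $\sum_i w_i y_i\le B$ for \knapmed) and view the problem as \kmwo with $m=|C|$, no outliers, and a ``structured'' feasibility region for $y$. I would then track through the iterative rounding of \kmwo and verify that each pivot step (which moves mass between two fractional facilities $i,i'$ in a common cluster while preserving the LP cost bound and the $\sum_i y_i$ bound) goes through provided the ambient polytope for $y$ is downward-closed and supports the corresponding 2-element exchange. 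This gives, in both cases, an almost-integral feasible fractional solution of cost at most $\alpha_1\cdot\opt$ that has at most two fractionally open facilities per cluster.

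The first main step is specializing this almost-integral solution for \matmed. Here I would argue that the final residual LP, after iterative rounding, reduces to optimizing a linear function over the intersection of $P_M$ with a partition matroid polytope (the clusters induce a partition constraint: exactly one facility opens per cluster). Since the intersection of two matroid polytopes is integral by Edmonds' matroid intersection theorem, any vertex optimum of this residual LP is already integral. Thus there is no $(1+\eps)$-loss, no sparsification, and no guessing; we obtain a clean $\alpha_1$-approximation for \matmed. The only subtlety is checking that the matroid exchange axiom lets the iterative rounding pivots be carried out while staying in $P_M$, which follows because pivoting between two fractionally open facilities in the same cluster is an elementary exchange supported by the matroid.

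For \knapmed the knapsack polytope is not integral, so the residual LP cannot be rounded for free. The plan is to first apply the standard sparsification preprocessing: enumerate (in $n^{O(1/\eps)}$ time) the $O(1/\eps)$ most ``heavy'' facilities in $\opt$, add them to the open set, subtract their weight from the budget, and then restrict attention to facilities whose weight is at most an $\eps$-fraction of the residual budget. On the sparsified instance, I run the iterative rounding to get an almost-integral solution with at most two fractional facilities, each of small weight; rounding both up violates the knapsack budget by at most $(1+O(\eps))$. Converting this small budget violation into a $(1+\eps)$ multiplicative loss in the objective uses the standard Lagrangian/bicriteria-to-unicriteria trick that has been employed in prior knapsack median work (e.g.~\cite{Byrka15}): guess the Lagrangian multiplier, solve an unconstrained facility-location-type LP, and combine two solutions bracketing the true budget.

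The step I expect to be the main obstacle is verifying that the iterative rounding invariants, which in the \kmwo analysis are tailored to the outlier constraint $\sum_j z_j \ge m$ and the cardinality constraint on $y$, go through cleanly when the outlier constraint is trivial and the cardinality constraint is replaced by a matroid or knapsack constraint. In particular, one must show that the cluster structure produced by the preprocessing, together with the pivot operations used to reduce the number of fractional facilities, is compatible with the matroid exchange axiom (for \matmed) and with the small-weight invariant (for \knapmed); once this is done, the cost accounting is identical to that of \Cref{theorem:kmwo} and yields the stated bounds.
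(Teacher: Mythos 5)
Your \matmed argument is essentially the paper's: drop the outlier and cardinality constraints, run the iterative rounding with the matroid polytope in place of $y(F)\le k$, and observe that at termination the tight constraints form a laminar/partition-type family whose intersection with the given matroid polytope is integral, giving a clean $\alpha_1$-approximation with no preprocessing. (The paper re-solves the LP at every iteration and reads off vertex structure rather than performing explicit two-element pivots, but that is a cosmetic difference; no appeal to exchange axioms beyond matroid-intersection integrality is needed.)

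For \knapmed your route diverges from the paper and has a genuine gap. You sparsify on facility \emph{weights} (guess the $O(1/\epsilon)$ heaviest facilities of the optimum, keep only light facilities), open \emph{both} residual fractional facilities, accept a $(1+O(\epsilon))$ budget violation, and propose to remove the violation by a ``standard Lagrangian / bicriteria-to-unicriteria trick.'' No such trick is available: the budget is a hard constraint, the combining step in Lagrangian-relaxation arguments (interpolating two solutions bracketing the budget) loses a constant factor rather than $(1+\epsilon)$, and \cite{Byrka15} does not obtain its bound that way -- converting a budget-violating solution into a feasible one cheaply is precisely the difficulty of the problem. The paper never violates the budget: the two fractional facilities satisfy $y_{i_1}+y_{i_2}=1$, it opens the lighter and closes the heavier, so feasibility is automatic, and the price is reassigning the clients that were fractionally served by the closed facility. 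Bounding that reassignment cost is what requires the \emph{client-side} preprocessing your proposal omits for \knapmed: guessing and removing $O(1/\rho)$ dense client balls (this guessing, with $\rho=\Theta(\epsilon^2)$, is where the $n^{O(1/\epsilon^2)}$ running time comes from), deriving per-client radius bounds $R_j$, and adding the star-cost constraints $\sum_j d(i,j)x_{i,j}\le \rho U\, y_i$ to the LP; together these cap the reassignment cost at $O(\rho/\delta)\,U\le \epsilon U$. Without these ingredients, or a correct substitute for your budget-violation conversion, the \knapmed half of your argument does not go through.
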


\subsection{Our techniques} \label{sec:techniques}


For clarity in presentation, we largely focus on \kmwo in this section.
Like many provable algorithms for clustering problems, our starting point is the natural LP relaxation for this problem. Howeover, unlike the vanilla $k$-median problem, this LP has an unbounded integrality gap (see~\Cref{sec:bad-egs}) for \kmwo. Nevertheless, we can show that all is not lost due to this gap. Indeed, one of our main technical contributions is in developing an iterative algorithm which can round the natural LP to compute an \emph{almost-integral} solution, i.e., one with at most two fractionally open facilities. By increasing the two possible fractional $y$ values to $1$, we can obtain a solution with at most $k+1$ open facilities and cost bounded by $\alpha_1$ times the optimal LP cost. So, the natural LP is good if we are satisfied with such a \emph{pseudo-solution}\footnote{Indeed, an $O(1)$ pseudo-approximation with $k+1$ open facilities is also implicitly given in \cite{Chen08}, albeit with much a larger constant as a bound on the approximation ratio.}. So the unbounded integrality gap essentially arises as the gap between almost-integral solutions and fully-integral ones. In what follows, we first highlight how we overcome this gap, and then give an overview of the rounding algorithm.

\subsubsection{Overcoming Gaps Between Almost-Integral and Integral Solutions}
In the following, let $y_i$ denote the extent to which facility $i$ is open in a fractional solution. Note that once the $y_i$ variables are fixed, the fractional client assignments can be done in a greedy manner until a total of $m$ clients are connected fractionally. Now, assume that we have an almost-integral solution $y$ with two strictly fractional values $y_{i_1}$ and $y_{i_2}$. To round it to a fully-integral one, we need to increase one of $\set{y_{i_1}, y_{i_2}}$ to $1$, and decrease the other to $0$, in a direction that maintains the number of connected clients. Each of the two operations will incur a cost that may be unbounded compared to the LP cost, and they lead to two types of gap instances described in Section~\ref{sec:preprocessing}. We handle these two types separately.

The first type of instances, correspondent to the ``increasing'' operation, arise because the cost of the clients connected to the facility increased to $1$ can be very large compared to the LP cost. We handle this by adding the so-called ``star constraints'' to the natural LP relaxation, which explicitly enforces a good upper bound on the total connection to each facility. The second type of instances, correspondent to the ``decreasing'' operation, arise because there could be many clients very near the facility being set to $0$, all of which now incur a large connection cost. We handle this by a preprocessing step, where we derive an upper bound $R_j$ on the connection distance of each client $j$, which ensures that in any small neighborhood, the number of clients which are allowed to have large connection cost is small. The main challenge is in obtaining good bounds $R_j$ such that there exists a near-optimal solution respecting these bounds.

The above techniques are sufficient to give an $O(1)$ approximation for \kmwo and \kmeanswo. Using an additional sparsification technique,  we can reduce the gap between an almost-integral solution and an integral one to an additive factor of $\eps$. Thus, our final approximation ratio is essentially the factor for rounding an LP solution to an almost-integral one.  The sparsification technique was used by Li and Svensson \cite{LS13} and by Byrka et al.\ \cite{Byrka15} for the $k$-median and knapsack median problems respectively. The detail in how we apply the technique is different from those in \cite{LS13,Byrka15}, but the essential ideas are the same. We guess $O_\eps(1)$ balls of clients that incur a large cost in the optimum solution (``dense balls''), remove these clients,  pre-open a set of facilities and focus on the residual instance. We show that the gap between almost-integral and fully-integral solutions on the residual instance (where there are no dense balls) is at most $\epsilon$.

\subsubsection{Iterative Rounding to Obtain Almost-Integral Solutions} \label{sec:iter-tech}
We now highlight the main ideas behind our iterative rounding framework. 
At each step, we maintain a partition of $C$ into $\cfull$, the set of clients which need to be fully connected, and $\cpart$, the set of partially connected clients. Initially, $\cfull = \emptyset$ and $\cpart = C$. For each client, we also have a set $F_j$ denoting the set of facilities $j$ may connect to, and a ``radius'' $D_j$ which is the maximum connection cost for $j$; these can be obtained from the initial LP solution. Since the client assignments are easy to do once the $y$ variables are fixed, we just focus on rounding the $y$ variables. In addition to $\cfull$, we also maintain a subset \cstar~of full clients such that a) their $F_j$ balls are disjoint, and b) every other full client $j \notin \cstar$~is ``close'' (within $O(1) D_j$) to the set \cstar. Finally, for each client $j$, we maintain an \emph{inner-ball} $B_j$ which only includes facilities in $F_j$ at distance at most $D_j/\tau$ from $j$ for some constant $\tau  > 1$.

We then define the following \emph{core constraints}: (i) $y(F_j) = 1$ for every $j \in C^*$, where $y(S):=\sum_{i \in S}y_i$, (ii) $y(F) \leq k$, and (iii) the total number of connected clients is at least $m$.
As for non-core constraints, we define $y(F_j) \leq 1$ for all partial clients, and $y(B_j) \leq 1$ for all full clients.
These constraints define our polytope ${\mathcal{P}} \subseteq [0,1]^F$.

Then, in each iteration, we update $y$ to be a vertex point in $\mathcal{P}$ that minimizes the linear function $$\textstyle \sum_{j \in \cpart} d(i,j) y_i \allowbreak + \sum_{j \in \cfull} \left( \sum_{i \in B_j} d(i,j) y_i + (1 - y(B_j) ) D_j/\tau \right).$$  Now, if none of the non-core constraints are tight, then this vertex point $y$ is defined by a laminar family of equalities along with a total coverage constraint, which is almost integral and so we output this. Otherwise, some non-core constraint is tight and we make the following updates and proceed to the next iteration. Indeed, if $y(F_j) = 1$ for some $j \in \cpart$, we make it a full client and update $\cstar$ accordingly. If $y(B_j) = 1$ for some $j \in \cfull$, we update its $D_j$ to be $D_j/\tau$ and its $F_j$ to be $B_j$. In each iteration of this rounding, the cost of the LP solution is non-increasing (since $D_j$ and $F_j$ are non-increasing), and at the end, we relate the total connection cost of the final solution in terms of the LP objective using the property that every full client is within $O(D_j)$ from some client in $C^*$.

The iterative rounding framework is versatile as we can simply customize the core constraints.  For example, to handle the the matroid median and knapsack median problems, we can remove the coverage constraint and add appropriate constraints.  In matroid median, we require $y$ to be in the given matroid polytope, while in knapsack median, we add the knapsack constraint. For matroid median, the polytope defined by core constraints is already integral and this leads to our $\alpha_1$-approximation.  For knapsack median, the vertex solution will be almost-integral, and again by using the sparsification ideas we can get our $(\alpha_1+\epsilon)$-approximation. The whole algorithm can be easily adapted to \kmeanswo  to get an $(\alpha_2 + \epsilon)$-approximation.

\subsection{Related Work}
The $k$-median and the related uncapacitated facility location (UFL) problem are two of the most classic problems studied in approximation algorithms. There is a long line of research for the two problems \cite{LV92,STA97,JV99,CS03,KPR98,CG99,JMMSV03,JMS02,MYZ06,Byrka07,CGST99,AGKMMP01} and almost all major techniques for approximation algorithms have been applied to the two problems (see the book of Williamson and Shmoys \cite{WS11}). 
The input of UFL is similar to that of $k$-median, except that we do not have an upper bound $k$ on the number of open facilities, instead each potential facility $i \in F$ is given an opening cost $f_i \geq 0$. The goal  is to minimize the sum of connection costs and facility opening costs.  For the problem, the current best approximation ratio is 1.488 due to Li \cite{Li11} and there is a hardness of 1.463 \cite{GK98}.  For the outlier version of uncapacitated facility, there is a 3-approximation due to \cite{CKMN01}.  This suggests that the outlier version of UFL is easier than that of $k$-median, mainly due to the fact that constraints imposed by facility costs are soft ones, while the requirement of opening $k$ facilities is a hard one.


The $k$-means problem in Euclidean space is the clustering problem that is used the most in practice, and the most popular algorithm for the problem is the Lloyd's algorithm \cite{Lloyd06} (also called ``the $k$-means'' algorithm). However, in general, this algorithm has no worst case guarantee and can also have super-polynomial running time. There has also been a large body of work on bridging this gap, for example, by considering variants of the Lloyd's algorithm \cite{AV07}, or bounding the smoothed complexity~\cite{AMR09}, or by only focusing on instances that are ``stable/clusterable'' \cite{KK10,ABS10,ORSS12,BBG13,CS17}. 

The \matmed problem was first studied by Krishnaswamy et. al. \cite{KKNSS11} as a generalization of the \textit{red-blue median} problem\cite{HKK12}, who gave an $O(1)$ approximation to the problem. Subsequently, there has been work~\cite{CL12,Swamy16} on improving the approximation factor and currently the best upper bound is an $8$-approximation by Swamy \cite{Swamy16}.  As for \knapmed, the first constant factor approximation was given by Kumar \cite{Kumar12}, who showed a $2700$ factor approximation. The approximation factor was subsequently improved by \cite{CL12,Swamy16}, and the current best algorithm is a $17.46$-approximation~\cite{Byrka15}.

\subsection{Paper Outline}
We define some useful notations in Section~\ref{sec:prelim}. Then in Section~\ref{sec:preprocessing}, we explain our preprocessing procedure for overcoming the LP integrality gaps for \kmwo/\kmeanswo. Then in Section~\ref{sec:framework}, we give our main iterative rounding framework which obtains good almost-integral solutions.  Section~\ref{sec:open-k} will show how to covert an almost-integral solution to an integral one, losing only an additive factor of $\eps$. Finally, in~\Cref{sec:matroid,sec:knapsack}, we present our $\alpha_1$ and $(\alpha_1+\eps)$ approximation algorithms for \matmed and \knapmed respectively.

{\em Getting a pseudo-approximation:} if one is only interested in getting a pseudo-approximation for \kmwo/\kmeanswo, i.e, an $O(1)$-approximation with $k + 1$ open facilities, then our algorithm can be greatly simplified.  In particular, the preprocessing step and the conversion step in Sections~\ref{sec:preprocessing} and~\ref{sec:open-k} are not needed, and proofs in Section~\ref{sec:framework} can be greatly simplified.  Such readers can directly jump to Section~\ref{sec:framework} after reading the description of the natural LP relaxation in Section~\ref{sec:preprocessing}. In Section~\ref{sec:framework} we use the term pseudo-approximation setting to denote the setting in which we only need a pseudo-approximation. 

\section{Preliminaries}
\label{sec:prelim}

To obtain a unified framework for both \kmwo and \kmeanswo, we often consider an instance $\mathcal{I} = (F, C, d, k, m)$ which could be an instance of either problem, and let a parameter $q$ denote the particular problem: $q=1$ for \kmwo instances and $q=2$ for \kmeanswo instances. Because of the preprocessing steps, our algorithms deal with \emph{extended} \kmwo/\kmeanswo instances, denoted as $(F, C, d, k, \allowbreak m, S_0)$. Here, $F, C, d, k$ and $m$ are defined as before, and $S_0 \subseteq F$ is a set of \emph{pre-opened facilities} that feasible solutions must contain.
We assume that $d(i, i') > 0$ for $i \neq i' \in F$ since otherwise we can simply remove $i'$ from $F$; however, we allow many clients to be collocated, and they can be collocated with one facility.

We use a pair $(S^*, C^*)$ to denote a solution to a (extended) \kmwo or \kmeanswo instance, where $S^* \subseteq F$ is the set of open facilities and $C^* \subseteq C$ is the set of connected clients in the solution, and each $j \in C^*$ is connected to its nearest facility in $S^*$. Note that given $S^*$, the optimum $C^*$ can be found easily in a greedy manner, and thus sometimes we only use $S^*$ to denote a solution to a (extended) \kmwo/\kmeanswo instance.
Given a set $S^* \subseteq F$ of facilities, it will be useful in the analysis to track the nearest facility for every $p \in F \cup C$ and the corresponding distance. To this end, we define the \emph{nearest-facility-vector-pair} for $S^*$ to be $(\kappa^* \in (S^*)^{F\cup C}, c^* \in \R^{F \cup C})$, where $\kappa^*_p = \arg\min_{i \in S^*} d(i, p)$ and $c^*_p = d(\kappa^*_p, p) = \min_{i \in S^*}d(i, p)$ for every $p \in F \cup C$. Though we only connect clients to $S^*$, the distance from a facility to $S^*$ will be useful in our analysis.

We use $y$ (and its variants) to denote a vector in $[0, 1]^F$. For any $S \subseteq F$, we define $y(S) = \sum_{i \in S}y_i$ (same for the variants of $y$). Finally, given a subset  $P \subseteq F \cup C$, a point $p \in F\cup C$ and radius $r \in \R$, we let $\Ball_P(p, r):= \set{p' \in P: d(p, p') \leq r}$ to denote the set of points in $P$ with distance at most $r$ from $p$.

\section{Preprocessing the \kmwo/\kmeanswo Instance}
\label{sec:preprocessing}
In this section, we motivate and describe our pre-processing step for the \kmwo/\kmeanswo problem, that is used to reduce the integrality gap of the natural LP relaxation. First we recap the LP relaxation and explain two different gap examples which, in turn illustrate two different reasons why this integrality gap arises. Subsequently, we will describe our ideas to overcome these two sources of badness.

\subsection{Natural LP and Gap Examples} \label{sec:bad-egs}
The natural LP for the \kmwo/\kmeanswo problem is as follows.
\begin{equation}
	\min \qquad \sum_{i \in F, j \in C}x_{i, j} d^q(i, j) \qquad \text{s.t.} \tag{$\text{LP}_{\text{basic}}$} \label{LP:basic}
\end{equation} \vspace*{-10pt}

	\begin{minipage}{0.5\linewidth}
		\begin{alignat*}{2}
			\sum_i y_i &\leq k\\
			x_{i, j} &\leq y_i &\quad &\forall i \in F, j \in C
		\end{alignat*}
	\end{minipage}
	\begin{minipage}{0.5\linewidth}
		\begin{alignat*}{2}
			\sum_{i \in F}x_{i, j} &\leq 1 &\quad &\forall j \in C\\
			\sum_{j \in C} \sum_{i \in F}x_{i, j} &\geq m
		\end{alignat*}
	\end{minipage}\smallskip

In the correspondent integer programming, $y_i \in \{0, 1\}$ indicates whether a facility $i \in F$ is open or not, and $x_{i, j} \in \{0, 1\}$ indicates whether a client $j$ is connected to a facility $i$ or not. The objective function to minimize is the total connection cost and all the above constraints are valid. In the LP relaxation, we only require all the variables to be non-negative.

Notice that once the $y$ values are fixed, the optimal choice of the $x$ variables can be determined by a simple greedy allocation, and so we simply state the values of the $y$ variables in the following gap examples. For simplicity, we focus on \kmwo problem when talking about the gap instances.

\begin{figure}[h]
	\centering
\ifdefined\fullversion
	\includegraphics[width=\textwidth]{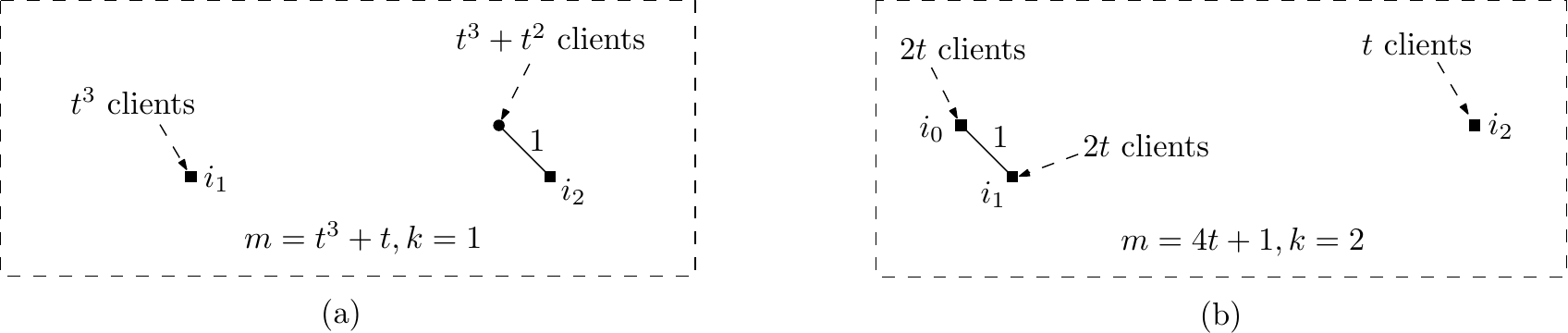}
\else
	\includegraphics[width=0.6\linewidth]{badinstances-V}
\fi
	\caption{Two gap instances.} \label{fig:badinstances} For instance (a), the optimum solution has cost $t^3 + t$. An LP solution with $y_1 = 1-1/t$ and $y_2 = 1/t$ has cost $t^2 + t$.  For instance (b), the optimum solution has cost $t + 1$. An LP solution with $y_0 = 1, y_1 = 1-1/t$ and $y_2 = 1/t$ has cost $2$.
\end{figure}

\medskip \noindent {\bf Gap Example 1.}
Instance (a) in Figure~\ref{fig:badinstances} contains two separate clusters in the metric: in the first cluster, there is a facility $i_1$ collocated with $t^3$ clients, and in the second cluster (very far from the first cluster), there is a facility $i_2$ at unit distance from $t^3 + t^2$ clients. Suppose $k = 1$ and $m = t^3 + t$. Clearly, the integral optimal solution is forced to open a center at $i_2$, thereby incurring a cost of $m = t^3 + t$. However, note that an LP solution can open $1 -1/t$ fraction of facility $i_1$ and $1/t$ fraction of facility $i_2$. This can give a solution with cost $1/t \times (t^3 + t^2) = t^2 + t$, and connecting $(1-1/t)\times t^3 + 1/t \times (t^3 + t^2) = t^3 + t = m$ clients.  Hence the integrality gap is $\frac{t^3+t}{t^2 + t}$, which is unbounded as $t$ increases.

\medskip \noindent {\bf Gap Example 2.}
Instance (b) contains 3 facilities $i_0, i_1$ and $i_2$, collocated with $2t$, $2t$ and $t$ clients respectively. $d(i_0, i_1) = 1$ and $i_2$ is far away from $i_0$ and $i_1$.  We need to open $k = 2$ facilities and connect $m = 4t + 1$ clients. An integral solution opens $i_0$ and $i_2$, connect the $3t$ clients at $i_0$ and $i_2$, and connect $t+1$ clients at $i_1$ to the open facility at $i_0$, incurring a total cost of $t+1$. On the other hand, in an LP solution, we can set $y_0 = 1, y_1 = 1 - 1/t$ and $y_2 = 1/t$. We fully connect the $4t$ clients at $i_0$  and $i_1$, and connect each client at $i_2$ to an extent of $1/t$. 
The total number of connected clients is $4t + t \times 1/t = 4t + 1 = m$. Each client at $i_1$ will be connected to $i_0$ with fraction $1-y_1 = 1/t$, incurring a total cost of $2t \times 1/t = 2$. Thus, the integrality gap is $\frac{t+1}{2}$, which is again unbounded as $t$ increases.



\subsection{Preprocessing}
 In both the examples in~\Cref{sec:bad-egs}, the unbounded integrality gap essentially arises from needing to round an \emph{almost-integral} solution $y$ into a \emph{truly} integral one. This is no co-incidence as we show in Section~\ref{sec:framework} that we can indeed obtain an almost-integral solution with cost at most $\alpha_q$ times the optimal LP cost. Let us now examine the difficulty in rounding these last two fractional variables. As in both instances in Figure~\ref{fig:badinstances}, suppose $i_1$ and $i_2$ are the two facilities with fractional $y$ values and let their $y$ values be $y_1 = 1-\rho$ and $y_2 = \rho$ respectively, where $\rho > 0$ is a sub-constant. A natural idea is to increase one of these variables to $1$ and decrease the other to $0$. Suppose that, in order to maintain the number $m$ of connected clients, we are forced to increase $y_2$ to 1 and decrease $y_1$ to 0 (as in the two gap instances). Then two costs, corresponding to the two gap instances detailed above, will be incurred by this operation. The first cost is incurred by increasing $y_2$ to 1. Some clients were partially connected to $i_2$ in the almost-integral solution. In order to satisfy the coverage requirement, most of these clients need to be fully connected to $i_2$ in the integral solution. The instance in Figure~\ref{fig:badinstances}(a) gives the example where this incurred cost is large compared to the optimal LP value.
The second cost is incurred by decreasing $y_1$ to 0. Some clients were fully connected in the almost-integral solution, each being connected to $i_1$ with fraction $y_1 = 1-\rho$ and to some other facility, say, $i_0 \notin \set{i_1, i_2}$ to extent $1-y_1 = \rho$.  Then, to satisfy the coverage requirement of these clients, we need to connect them to $i_0$ in the integral solution, and this cost could be large (see Figure~\ref{fig:badinstances}(b)).

\medskip \noindent {\bf Preprocessing Idea 1: Bounding Star Cost.} \label{sec:star}
The first kind of bad instances described above are fairly straightforward to handle. Indeed, when increasing $y_2$ from $\rho$ to $1$, the incurred cost is at most the cost of the ``star'' associated with $i_2$, i.e., the total connection cost of clients which are fractionally connected to $i_2$. We know the cost of such a star in the optimum solution is at most $\opt$ and thus we can add the following constraints to overcome this type of bad instances:  $\sum_{j \in C}x_{i, j}d^q(i, j) \leq y_i \cdot \opt$ for every $i \in F$. 

In fact, we can bring down this additive error to $\rho \cdot \opt$ (where $\rho \to 0$ as $\eps \to 0$) by \emph{guessing} the set $S_1$ of centers corresponding to the expensive stars (whose connection cost exceeds $\rho \cdot \opt$) in the optimum solution and opening them. 
%
More importantly, we can strengthen \eqref{LP:basic} by enforcing the constraints bounding the connection cost to facilities $i \notin S_1$:
$\sum_{j\in C}x_{i,j}d^q(i,j) \leq y_i \cdot \rho \cdot \opt $.

\medskip \noindent {\bf Preprocessing Idea 2: Bounding Backup Cost.} \label{sec:backup}
Let us now explain how we handle the large incurred cost when we decrease $y_1$ to 0, as in Figure~\ref{fig:badinstances}(b).  
Note that in the optimal solution, for any $R \geq 0$, the number of clients whose connection cost is at least $R$ is at most $\opt/R$. In particular, in the example in Figure~\ref{fig:badinstances}(b),
if $\opt$ is indeed $\Theta(1)$, then the number of clients at $i_1$ with connection cost at least $1$ can be at most $\opt$. To this end, suppose we are able to \emph{guess} a \emph{specified} set of $\opt = \Theta(1)$ clients located at $i_1$, and \emph{disallow} the remaining collocated clients from connecting to facilities which are at distance $1$ or greater. Then we could set $x_{ij} = 0$ for all disallowed connections which in turn will make the LP infeasible when our guess for $\opt$ is $\Theta(1)$ for the bad instance from  Figure~\ref{fig:badinstances}(b).
 While it is difficult to get such good estimates on the disallowed connections of clients on a global scale, we show that we can indeed make such restrictions \emph{locally}, which is our second pre-processing idea. 
We show that we can efficiently obtain a vector $(R_j)_{j \in C}$ of upper bounds on connection costs of clients which satisfies the following two properties: (i) for any $R$ and some constant $\delta > 0$, the number of clients in \emph{any ball} of radius $\delta R$ with $R_j$ values more than some $\Theta(R)$ is at most $O(1) \opt/R^q$, and (ii) there exists a feasible solution of cost at most $O(1) \opt$ that respects these upper bounds $(R_j)_{j \in C}$ on client connections. This will then help us bound the re-assignment cost incurred by decreasing $y_1$ to $0$ (i.e., shutting down facility $i_1$).


\medskip \noindent {\bf Preprocessing Idea 3: Sparsification.} \label{sec:sparse} To reduce the additive factor we lose for handling the second type of bad instances to any small constant $\epsilon$, we use the so-called \emph{sparsification} technique.  Now, 
suppose we had the situation that,  in the optimum solution,  the number of clients in any ball of radius $\delta R$ with connection cost at least $R$ is miraculously at most $\rho \cdot \opt/R^q$ for some small $0 \leq \rho < 1$. Informally, we call such instances as sparse instances. Then, we can bound total increase in cost by re-assigning these clients when shutting down one facility by $O(\rho) \opt$. Indeed, our third pre-processing idea is precisely to correctly guess such ``dense'' regions (the clients and the facility they connect to in an optimum solution) so that the remaining clients are locally sparse. We note that this is similar to ideas employed in previous algorithms for $k$-median and \knapmed.

\medskip

Motivated by these ideas, we now define the notion of \emph{sparse extended instances} for \kmwo/\kmeanswo.

\begin{definition}[Sparse Instances]
	\label{def:sparse}
	Suppose we are given an extended \kmwo/\kmeanswo instance ${\mathcal{I}'} = (F, C', d, k, m', S_0)$, and parameters $\rho, \delta \in (0, 1/2)$ and $U \geq 0$. Let $(S^*, C'^*)$ be a solution to $\mathcal{I}'$ with cost at most $U$, and $(\kappa^*, c^*)$ be the nearest-facility-vector-pair for $S^*$. Then we say ${\mathcal{I}'}$ is $(\rho,\delta, U)$-sparse w.r.t the solution $(S^*, C'^*)$, if
	\begin{enumerate}[itemsep=0pt,topsep=3pt,label=(\ref{def:sparse}\alph*),leftmargin=*]
		\item \label{property:sparse-star-cheap} for every $i \in S^* \setminus S_0$, we have $\sum_{j \in C'^*:\kappa^*_j = i}(c^*_j)^q \leq \rho U$,
		\item \label{property:sparse-sparse} for every $p \in F \cup C'$, we have $\big|\Ball_{C'^*}(p, \delta c^*_p)\big|\cdot (c^*_p) ^q\leq \rho U$.
	\end{enumerate}
\end{definition}
Property~\ref{property:sparse-star-cheap} requires the cost incurred by each $i \in S^* \setminus S_0$ in the solution $(S^*, C'^*)$ is at most $\rho U$. Property~\ref{property:sparse-sparse} defines the sparsity requirement: roughly speaking, for every $p  \in F \cup C'$, in the solution $(S^*, C'^*)$, the total connection cost of clients in $C'^*$ near $p$ should be at most $\rho U$.

We next show in the following theorem, that we can effectively reduce a general \kmwo/\kmeanswo instance to a sparse extended instance with only a small loss in the approximation ratio.

\begin{theorem}
	\label{thm:reduce-to-sparse-instances}
	Suppose we are given a \kmwo/\kmeanswo instance ${\mathcal{I}} = (F, C, d, k, m)$, also parameters $\rho, \delta \in (0, 1/2)$ and an upper bound $U$ on the cost of the optimal solution $(S^*, C^*)$ to $\mathcal{I}$ (which is not given to us). 
	Then there is an $n^{O\left(1/\rho\right)}$-time algorithm that outputs $n^{O\left(1/\rho\right)}$ many extended \kmwo/\kmeanswo instances such that one of the instances in the set, say, ${\mathcal {I'}}$,  has the form $(F, C' \subseteq C, d, k, m' = |C^* \cap C'|, S_0 \subseteq S^*)$ and satisfies:
		\begin{enumerate}[itemsep=0pt,topsep=3pt,label=(\ref{thm:reduce-to-sparse-instances}\alph*),leftmargin=*]
			\item \label{property:reduce-instance-sparse} $\mathcal{I'}$ is $(\rho, \delta, U)$-sparse w.r.t the solution $(S^*, C^* \cap C')$,
			\item \label{property:reduce-instance-good} $\frac{(1-\delta)^q}{(1+\delta)^q}\sum_{j \in C^*\setminus C'}d^q(j, S_0) + \sum_{j \in C^* \cap C'} d^q(j, S^*) \leq U$.
		\end{enumerate}
\end{theorem}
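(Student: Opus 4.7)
The plan is to reduce $\mathcal{I}$ to a family of $n^{O(1/\rho)}$ extended instances by enumerating the outcome of a guess-and-prune procedure whose existential description is driven by the unknown optimum $(S^*, C^*)$. Two conceptually different guesses are needed: one to force property~\ref{property:sparse-star-cheap}, and another to iteratively carve out ``dense balls'' so that property~\ref{property:sparse-sparse} is eventually attained.

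For the star-cost property, observe that $\sum_{i \in S^*}\sum_{j:\kappa^*_j = i}(c^*_j)^q$ equals the optimum value, hence at most $U$. Consequently, the set $S_1 \subseteq S^*$ of facilities whose star cost exceeds $\rho U$ has size at most $1/\rho$. I would guess $S_1$ by exhaustive enumeration over subsets of $F$ of size at most $1/\rho$ (contributing $n^{O(1/\rho)}$ candidates) and commit to $S_1 \subseteq S_0$. Once the correct $S_1$ is guessed, property~\ref{property:sparse-star-cheap} holds for every $i \in S^* \setminus S_0 \subseteq S^* \setminus S_1$, and remains true as $C'$ only shrinks later.

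For the sparsity property, consider the following witness-extraction process (defined relative to the unknown $(S^*, C^*)$): start with $C' \leftarrow C$ and $S_0 \leftarrow S_1$; while some $p \in F \cup C'$ violates property~\ref{property:sparse-sparse} against $C^* \cap C'$ and $S^*$, set $i^* := \kappa^*_p$, add $i^*$ to $S_0$, and remove $\Ball_{C'}(p, \delta c^*_p)$ from $C'$. Every removed client $j$ obeys $c^*_j \geq (1-\delta)c^*_p$ by the triangle inequality, since for every $i' \in S^*$, $d(j, i') \geq d(p, i') - d(p, j) \geq c^*_p - \delta c^*_p$. Hence each iteration removes clients whose contribution to the optimum is at least $(1-\delta)^q \rho U$, so at most $1/((1-\delta)^q\rho) = O(1/\rho)$ iterations occur. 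The actual algorithm simulates this process by enumerating all ordered sequences of at most $O(1/\rho)$ pairs $(p, i^*) \in (F \cup C) \times F$ together with a value $m' \in \{0, \dots, m\}$, and executing the updates literally. This produces $n^{O(1/\rho)}$ candidate extended instances, one of which matches the thought-process and is therefore $(\rho, \delta, U)$-sparse w.r.t.\ $(S^*, C^* \cap C')$.

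Finally, property~\ref{property:reduce-instance-good} follows from a per-client charging argument: when $j \in C^* \setminus C'$ is removed while processing witness $(p, i^* = \kappa^*_p)$, the triangle inequality gives $d(j, S_0) \leq d(j, i^*) \leq (1+\delta)c^*_p \leq \tfrac{1+\delta}{1-\delta} c^*_j$, so $\tfrac{(1-\delta)^q}{(1+\delta)^q} d^q(j, S_0) \leq (c^*_j)^q$. Summing this over removed clients and adding the identity $\sum_{j \in C^* \cap C'} d^q(j, S^*) = \sum_{j \in C^* \cap C'}(c^*_j)^q$ gives the left-hand side of~\ref{property:reduce-instance-good} bounded by $\sum_{j \in C^*}(c^*_j)^q \leq U$. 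The main obstacle I anticipate is keeping the guess space polynomial while still pinning down the correct radius for each violating ball; the key trick is to guess the pair $(p, \kappa^*_p)$ rather than $(p, c^*_p)$, turning the ``radius'' into the discrete quantity $d(p, \kappa^*_p)$ and bounding the total guess count by $n^{O(1/\rho)}$ through the termination analysis above.
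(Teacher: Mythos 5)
Your proposal is correct and follows essentially the same route as the paper: guess the $O(1/\rho)$ expensive-star facilities, then iteratively remove dense balls $\Ball_{C'}(p,\delta c^*_p)$ while adding $\kappa^*_p$ to $S_0$, bound the number of iterations by charging each removed ball at least $(1-\delta)^q\rho U$ of the optimum cost, and prove Property~\ref{property:reduce-instance-good} via the same $\frac{1+\delta}{1-\delta}$ triangle-inequality charging. Your explicit enumeration of the witness pairs $(p,\kappa^*_p)$ is just a slightly more spelled-out version of the paper's observation that only $n^{O(1/\rho)}$ outcomes $(C',S_0,m')$ are possible, so the two arguments coincide.
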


Before we give the proof, we remark that Property~\ref{property:reduce-instance-good} means that the cost of the solution $(S^*, C^* \cap C')$ for the residual sparse instance $\mathcal {I'}$ plus the approximate cost of reconnecting the $m - m'$ clients $C^* \setminus C'$ to the guessed facilities $S_0$ is upper bounded by $U$. 
\begin{proof}
	Let us first assume we know the optimum solution $(S^*, C^*)$ to the instance $\mathcal{I}=(F, C, d, k, m)$, and we will dispense with this assumption later. Let $(\kappa^*, c^*)$ be the nearest-facility-vector-pair for $S^*$. We initialize $C' = C$ and $S_0 = \emptyset$, and construct our instance iteratively until it becomes $(\rho,\delta,U)$-sparse w.r.t $(S^*, C^* \cap C')$. Our instance is always defined as $\mathcal{I'} =(F, C', d, k, m'= |C^* \cap C'|, S_0)$.

Indeed, to satisfy Property~\ref{property:sparse-star-cheap}, we add to $S_0$ the set of facilities $i \in S^*$ with $\sum_{j \in C^*: \kappa^*_j = i} (c^*_j)^q > \rho U$. There are at most $1/\rho$ many such facilities. After this, we have $\sum_{j \in C^* \cap C':\kappa^*_j = i}(c^*_j)^q\leq \sum_{j \in C^*:\kappa^*_j = i}(c^*_j)^q \leq \rho U$ for every $i \in S^* \setminus S_0$.  This will always be satisfied since we shall only add facilities to $S_0$.

To guarantee Property~\ref{property:sparse-sparse}, we run the following iterative procedure: while there exists $p \in F \cup C$ such that $\big|\Ball_{C^*\cap C'}(p, \delta c^*_p)\big|\cdot (c^*_p)^q > \rho U$, we update $S_0 \gets S_0 \cup \set{\kappa^*_p}$ and $C' \gets C' \setminus \Ball_{C'}(p, \delta c^*_p)$. By triangle inequality, each client $j$ removed from $C'$ has $c^*_j \geq c^*_p - d(j, p) \geq (1-\delta)c^*_p$. Also, $d(j, S_0) \leq d(j, \kappa^*_p) \leq d(p, \kappa^*_p) + d(j, p) \leq c^*_p + \delta c^*_p \leq \frac{1+\delta}{1-\delta}c^*_j$. Moreover, the total $(c^*_j)^q$ over all clients $j \in \Ball_{C' \cap C^*}(p, \delta c^*_p)$ is at least $\big|\Ball_{C^*\cap C'}(p, \delta c^*_p)\big|\cdot (1-\delta)^q(c^*_p)^q > (1-\delta)^q\rho U$. Thus, the procedure will terminate in less than $\frac{1}{\rho(1-\delta)^q} = O(1/\rho)$ iterations.  
	After this procedure, we have $|\Ball_{C^* \cap C'}(p, \delta c^*_p)|\cdot (c^*_p)^q \leq \rho U$ for every $p \in F \cup C$. That is, Property~\ref{property:sparse-sparse} holds for the instance $\mathcal{I'} = (F, C', d, k, |C^* \cap C'|, S_0)$ w.r.t solution $(S^*, C^* \cap C')$. Thus Property~\ref{property:reduce-instance-sparse} holds.
	
	Now we prove Property~\ref{property:reduce-instance-good}.
\ifdefined\fullversion
	\begin{align*}
		\frac{(1-\delta)^q}{(1+\delta)^q}\sum_{j \in C^* \setminus C'} d^q(j, S_0) + \sum_{j \in C^* \cap C'} d^q(j, S^*)
		\leq \sum_{j \in C^* \setminus C'} (c^*_j)^q + \sum_{j \in C^* \cap C'} (c^*_j)^q
		= \sum_{j \in C^*} (c^*_j)^q \leq U.
	\end{align*}
\else
	\begin{align*}
		&\frac{(1-\delta)^q}{(1+\delta)^q}\sum_{j \in C^* \setminus C'} d^q(j, S_0) + \sum_{j \in C^* \cap C'} d^q(j, S^*) \\
		&\leq \sum_{j \in C^* \setminus C'} (c^*_j)^q + \sum_{j \in C^* \cap C'} (c^*_j)^q 
		= \sum_{j \in C^*} (c^*_j)^q \leq U.
	\end{align*}
\fi
	The first inequality used the fact that for every $j \in C^* \setminus C'$, we have $d(j, S_0) \leq \frac{1+\delta}{1-\delta}c^*_j$.  Thus Property~\ref{property:reduce-instance-good} holds.
	
	Since we do not know the optimum solution $(S^*, C^*)$ for $\mathcal{I}$, we can not directly apply the above procedure. However, note that $C'$ is obtained from $C$ by removing at most $O(1/\rho)$ balls of clients, $S_0$ contains at most $O(1/\rho)$ facilities, and there are $m$ possibilities for $m' = |C^* \cap C'|$. Thus, there are at most $n^{O\left(1/\rho\right)}$ possible instances $(F, C', d, k, m', S_0)$, and we simply output all of them.
\end{proof}

We next show how, for $(\rho,\delta,U)$-sparse instances, we can effectively guess good upper bounds $R_j$ on the maximum connection cost of each client. The notion of sparse instances is only to get this vector of $R_j$'s and will not be needed after the proof of this theorem.

\begin{theorem}
	\label{thm:preprocessing}
	Let ${\mathcal{I'}} = (F, {C'}, d, k, {m'}, S_0)$ be a $(\rho, \delta, U)$-sparse instance w.r.t some solution $(S^* \supseteq S_0, {C'}^*)$ of ${\mathcal{I'}}$, for some $\rho, \delta \in (0, 1/2)$ and $U \geq 0$. Let $U' \leq U$ be the cost of $(S^*, {C'}^*)$ to ${\mathcal{I'}}$. Then given $\calI', \rho, \delta, U$, we can efficiently find a vector $R = (R_j)_{j \in {C'}} \in \R_{\geq 0}^{C'}$, such that:
	\begin{enumerate}[topsep=3pt, itemsep=0pt, label=(\ref{thm:preprocessing}\alph*),leftmargin=*]
		\item \label{property:prep-sparse} for every $t > 0$ and $p \in F \cup {C'}$, we have
		\begin{align}
			\left|\set{j \in \Ball_{C'}\left(p, \frac{\delta t}{4 + 3\delta}  \right): R_j \geq t} \right| \leq \frac{\rho (1 + 3\delta/4)^q}{(1-\delta/4)^q} \cdot \frac{U}{t^q}, \label{inequ:prep-sparse}
		\end{align}
		\item \label{property:prep-good} there exists a solution to ${\mathcal{I'}}$ of cost at most $(1+\delta/2)^qU'$ where if $j$ is connected to $i$ then  $d(i, j)\leq R_j$; moreover, the total cost of clients connected to any facility $i \notin S_0$ in this solution is at most $\rho(1+\delta/2)^q U$.
	\end{enumerate}
\end{theorem}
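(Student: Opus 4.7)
My plan is to define $R_j$ as a purely local function of the instance (a ball-density threshold around $j$), so that it is computable in polynomial time without any knowledge of $(S^*, C'^*)$, verify property \ref{property:prep-sparse} by a short triangle-inequality argument, and then verify property \ref{property:prep-good} by starting from $(S^*,C'^*)$ and re-routing the few clients $j\in C'^*$ whose optimum connection overshoots $R_j$.

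Writing $\gamma := \delta/(4+3\delta)$ and $\beta := (1+3\delta/4)^q/(1-\delta/4)^q$, I would define, for each $j \in C'$,
\[
R_j \ :=\ \max\Bigl\{r\ge 0 \ :\ \bigl|\Ball_{C'}(j, 2\gamma r)\bigr|\cdot r^q \ \le\ \rho\beta U\Bigr\}.
\]
The left-hand side is monotone non-decreasing in $r$ with only $|C'|$ step-points, so $R_j$ is well-defined and efficiently computable by sorting distances from $j$.

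For property \ref{property:prep-sparse}, fix $p\in F\cup C'$ and $t>0$, and let $N:=\{j\in \Ball_{C'}(p,\gamma t):R_j\ge t\}$. If $N=\emptyset$ the inequality is trivial; otherwise fix any $j_0\in N$. The triangle inequality gives $N\subseteq \Ball_{C'}(j_0, 2\gamma t)$, while the defining rule of $R_{j_0}$ evaluated at $r=t$ (using $R_{j_0}\ge t$ and monotonicity) yields $|\Ball_{C'}(j_0,2\gamma t)|\cdot t^q \le \rho\beta U$. Combining these gives $|N|\le \rho\beta U/t^q$, which is exactly the inequality \eqref{inequ:prep-sparse}.

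For property \ref{property:prep-good}, I would modify the optimum solution $(S^*,C'^*)$ as follows: keep $S^*$ open (so $S_0\subseteq S^*$), connect $j\in C'^*$ to $\kappa^*_j$ whenever $R_j\ge c^*_j$, and re-route $j$ otherwise. For each re-routed $j$, the defining inequality of $R_j$ fails at $r=c^*_j$, giving $|\Ball_{C'}(j,2\gamma c^*_j)|\cdot (c^*_j)^q > \rho\beta U$. Property \ref{property:sparse-sparse} applied at $j$ bounds the $C'^*$-portion of the (larger) ball $\Ball_{C'^*}(j,\delta c^*_j)$ by $\rho U/(c^*_j)^q$, so many outlier clients in $C'\setminus C'^*$ must sit inside $\Ball_{C'}(j, 2\gamma c^*_j)$. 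From one such outlier, a further sparsity invocation at a shifted center would let me identify a point $j'\in \Ball_{C'}(j,2\gamma c^*_j)$ whose nearest facility $\kappa^*_{j'}\in S^*$ lies within both $R_j$ and $(1+\delta/2)c^*_j$ of $j$, using that the numerical relationships $2\gamma + (1-\delta/4)/(1+3\delta/4) \le 1$ and the choice of $\beta$ are precisely what the triangle inequality needs. Routing $j\to \kappa^*_{j'}$ inflates its connection cost by at most a factor $(1+\delta/2)^q$; summing over all $j\in C'^*$ yields total cost $(1+\delta/2)^q U'$, and the per-star bound $\rho(1+\delta/2)^q U$ on stars of $i\notin S_0$ follows directly from property \ref{property:sparse-star-cheap} since each re-routed client joins an already-counted star.

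The main obstacle is precisely this re-routing argument: property \ref{property:sparse-sparse} controls only $|\Ball_{C'^*}|$, whereas the failure condition of $R_j$ involves $|\Ball_{C'}|$, which can be polynomially larger because of outliers. Bridging that gap by invoking sparsity at a well-chosen outlier $j''$ in the dense ball, and verifying that the constants $2\delta/(4+3\delta)$, $1+3\delta/4$, $1-\delta/4$, and $1+\delta/2$ line up so that every triangle-inequality invocation closes with exactly the claimed slack, is where the bulk of the bookkeeping lives.
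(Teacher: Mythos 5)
Your verification of Property~\ref{property:prep-sparse} from the local definition of $R_j$ is clean (modulo the small point that the maximum in your definition need not be attained, so it should be a supremum or restricted to the finitely many relevant radii), and it is arguably slicker than the paper's greedy Algorithm~\ref{alg:construct-R}. The gap is in Property~\ref{property:prep-good}, and it is twofold. First, the re-routing step as you describe it cannot work: if $j \in {C'}^*$ has $R_j < c^*_j$, then since $c^*_j = \min_{i \in S^*} d(i,j)$, \emph{every} facility of $S^*$ --- in particular $\kappa^*_{j'}$ for any $j'$ --- is at distance at least $c^*_j > R_j$ from $j$, so no reassignment of $j$ itself can ever satisfy $d(i,j) \le R_j$. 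The only way out, and what the paper actually does, is to change \emph{which} client gets connected: drop $j$ from the solution and instead connect a nearby, not-yet-used client $j'$ with a large radius bound ($R_{j'} \ge c^*_j$, roughly) to $\kappa^*_j$, maintaining the count $m'$ through a one-to-one map $f : {C'}^* \to C'$; the radius constraint is then checked against $R_{j'}$, not $R_j$.

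Second, even granting that swap, your purely local density rule for $R_j$ does not supply the needed partner $j'$, because it assigns the \emph{same} small value to all clients of a dense cluster. Concretely, put three clients co-located at a point $x$, exactly one of them in ${C'}^*$, with the nearest facility at distance $D$ where $D^q = \rho U$ and no other facility or client nearby; this instance is $(\rho,\delta,U)$-sparse (Properties~\ref{property:sparse-star-cheap} and~\ref{property:sparse-sparse} hold with equality at worst). Your rule gives every cluster client $R_j = (\beta/3)^{1/q} D < D$ (as $\beta = (1+3\delta/4)^q/(1-\delta/4)^q < 3$ for $\delta < 1/2$), so under the $R$-constraints no client at $x$ can be connected to anything, and no solution connects $m'$ clients at all --- Property~\ref{property:prep-good} fails for your vector, independently of how the re-routing is argued. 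The paper's sequential assignment in decreasing order of scale is exactly what avoids this: whenever a client ends up with $\widehat R_j < c^*_j$, it is \emph{because} a budget-filling set of other clients within distance $O(\delta)c^*_j$ already received radii at least $c^*_j$, and sparsity (Property~\ref{property:sparse-sparse}) then guarantees one of them is still free for the swap. That adaptivity --- a few designated large-radius clients per dense region --- cannot be recovered from a symmetric, purely local max-density threshold, so the definition itself, not just the bookkeeping, has to change.
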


Property~\ref{property:prep-sparse} says that in a ball of radius $\Theta(t)$, the number of clients $j$ with $R_j \geq t$ is at most $O(\rho) \cdot \frac{U}{t^q}$. The first part of Property~\ref{property:prep-good} says that there is a near-optimal solution which respects these upper bounds $R_j$ on the connection distances of all clients $j$, and the second part ensures that all the star costs for facilities in $F \setminus S_0$ are still bounded in this near-optimal solution (akin to Property~\ref{property:sparse-star-cheap}).

\begin{proof}[Proof of Theorem~\ref{thm:preprocessing}]
We will now show that the following algorithm efficiently finds a vector $(\widehat{R}_j)_{j \in {C'}} \in \R_{\geq 0}^{C'}$, such that the following properties hold. 
	\begin{enumerate}[topsep=3pt, itemsep=0pt, label=(\roman*), leftmargin=*]
		\item \label{property:proof-prep-sparse} for every $t > 0$ and $p \in F \cup {C'}$, we have
		\begin{align}
			\left|\set{j \in \Ball_{C'}\left(p, \frac{\delta t}{4}  \right): \widehat{R}_j \geq t} \right| \leq \frac{\rho}{(1-\delta/4)^q} \cdot \frac{U}{t^q}, \label{inequ:proof-prep-sparse}
		\end{align}
		\item \label{property:proof-prep-good} there exists a solution to ${\mathcal{I'}}$ of cost at most $(1+\delta/2)^qU'$ where if $j$ is connected to $i$ then  $d(i, j)\leq (1+ 3\delta/4) \widehat{R}_j$; moreover, the total cost of clients connected to any facility $i \notin S_0$ in this solution is at most $\rho(1+\delta/2)^q U$.
	\end{enumerate}
The proof then follows by setting $R_j = \widehat{R}_j (1 + 3\delta/4)$.

	\begin{algorithm}[h]
		\caption{Construct $\widehat{R}$} \label{alg:construct-R}
		\begin{algorithmic}[1]
			\State \textbf{for} every $j \in {C'}$ \textbf{do}: $\widehat{R}_j \gets 0$
			\For{every $t' \in \set{d(i, j): i \in F, j \in {C'}}\setminus \{0\}$, in decreasing order}
				\For{every $j' \in {C'}$ such that $\widehat{R}_{j'} = 0$}
					\If{letting $\widehat{R}_{j'} = t'$ will not violate \eqref{inequ:proof-prep-sparse} for $t = t'$ and any $p \in F \cup {C'}$}
						\State $\widehat{R}_{j'} \gets t'$ \label{state:prep-update-R}
					\EndIf
				\EndFor
			\EndFor
		\end{algorithmic}
	\end{algorithm}

	Consider Algorithm~\ref{alg:construct-R} that constructs the vector $\widehat{R}$. Property~\ref{property:proof-prep-sparse} holds at the beginning of the procedure since all clients $j$ have $\widehat{R}_j = 0$. We show that the property is always maintained as the algorithm proceeds. To this end, focus on the operation in which we let $\widehat{R}_{j'} = t'$ in Step~\ref{state:prep-update-R}.
	\begin{itemize}[topsep=3pt,itemsep=0pt]
		\item This will not violate \eqref{inequ:proof-prep-sparse} for $t = t'$ and any $p \in F \cup {C'}$ as guaranteed by the condition.
		\item It will not violate \eqref{inequ:proof-prep-sparse} for any $t > t'$ and $p \in F \cup {C'}$ since on the left side of \eqref{inequ:proof-prep-sparse} we only count clients $j$ with $\widehat{R}_j \geq t$ and setting $\widehat{R}_{j'}$ to $t' < t$ does not affect the counting.
		\item Focus on some $t < t'$ and $p \in F \cup {C'}$. Firstly, note that $\big|\Ball_{C'}(p, \delta t/4):\widehat{R}_j \geq t\big| \leq \big|\Ball_{C'}(p, \delta t'/4):\widehat{R}_j \geq t \big| = \big|\Ball_{C'}(p, \delta t'/4):\widehat{R}_j \geq t'\big|$. The first inequality holds because we consider a bigger ball on the RHS, and the following equality holds since at this moment any $j$ has either $\widehat{R}_j \geq t'$ or $\widehat{R}_j = 0$, which implies $\widehat{R}_j \geq t \Leftrightarrow \widehat{R}_j \geq t'$. 
		Now, since \eqref{inequ:proof-prep-sparse} holds for $t'$ and $p$, we have
\ifdefined\fullversion		
		\begin{align*}
			\left|\Ball_{C'}\left(p,  \delta t/ 4\right):\widehat{R}_j \geq t\right| &\leq \left|\Ball_{C'}\left(p, \delta t'/4 \right):\widehat{R}_j \geq t'\right|
			\leq \frac{\rho}{(1-\delta/4)^q} \cdot \frac{U}{(t')^q} \leq  \frac{\rho}{(1-\delta/4)^q} \cdot \frac{U}{t^q}.
		\end{align*}
\else
		\begin{align*}
			&\quad \left|\Ball_{C'}\left(p,  \delta t/ 4\right):\widehat{R}_j \geq t\right| \leq \left|\Ball_{C'}\left(p, \delta t'/4 \right):\widehat{R}_j \geq t'\right|\\
			&\leq \frac{\rho}{(1-\delta/4)^q} \cdot \frac{U}{(t')^q} \leq  \frac{\rho}{(1-\delta/4)^q} \cdot \frac{U}{t^q}.
		\end{align*}
\fi
		 Thus, \eqref{inequ:proof-prep-sparse} also holds for this $t$ and $p$.		
	\end{itemize}
	
	This finishes the proof of Property~\ref{property:proof-prep-sparse}. Now consider Property~\ref{property:proof-prep-good}. Let $(\kappa^*, c^*)$ be the nearest-facility-vector-pair for $S^*$. Notice that the cost of the $(S^*, {C'}^*)$ is at most $U'$ as specified in the lemma statement. We build our desired solution incrementally using the following one-to-one mapping $f: {C'}^* \rightarrow C'$ such that the final set of clients which will satisfy~Property~\ref{property:proof-prep-good} will be $f({C'}^*)$. Initially, let $f(j) = j$ for every $j \in {C'}^*$. To compute the final mapping $f$, we apply the following procedure. At a high-level, if a client $j \in {C'}^*$ has connection cost $c^*_j > (1+3\delta/4) \widehat{R}_j$, then it means that there is a nearby ball with many clients with $\widehat{R}$ value at least $\widehat{R}_j$. We show that, since the instance is sparse, at least one of them is not yet mapped in $f({C'}^*)$ and has a larger radius bound, and so we set $f(j)$ to $j'$.

Formally, for every client $j \in {C'}^*$, in non-decreasing order of $c^*_j$, if $c^*_j > (1+3 \delta/4) \widehat{R}_j$, we update $f(j)$ to be a client in ${C'} \setminus f({C'}^*)$ such that
	\begin{enumerate}[topsep=3pt,itemsep=0pt,label=(\Alph*)]
		\item $d(f(j), j) \leq \delta c^*_j/2$, and
		\item $\widehat{R}_{f(j)} \geq c^*_j$.
	\end{enumerate}
Assuming we can find such an $f(j)$ in every iteration, then at the end of the procedure, we have that $f$ remains one-to-one. As for connections, for every $j \in {C'}^*$, let us connect $f(j)$ to $\kappa^*_j$. Then, we have $d(f(j), \kappa^*_j) \leq d(j, \kappa^*_j) + d(j, f(j)) \leq (1+\delta/2)c^*_j$. Now it is easy to see that Property~\ref{property:proof-prep-good} is satisfied. Indeed, the cost of the solution $(S^*, f({C'}^*))$ is clearly at most $(1+\delta/2)^qU'$, and the connection satisfies $d(f(j), \kappa^*_j) \leq (1 + 3\delta/4) \widehat{R}_{f_j}$. Moreover, for every facility $i \in S^* \setminus S_0$, the total cost of clients connected to $i$ is at most $(1+\delta/2)^q\rho U$.

Thus, our goal becomes to prove that  in every iteration we can always find an $f_j$ satisfying properties (A) and (B) above. To this end, suppose we have succeeded in all previous iterations and now we are considering $j \in {C'}^*$ with $c^*_j> (1+3 \delta/4)  \widehat{R}_j$. We thus need to update $f(j)$. Notice that at this time, we have $d(j', f({j'}))\leq \delta c^*_{j'}/2$ for every $j' \in {C'}^*$, by the order we consider clients and the success of previous iterations. Now, focus on the iteration $t' = c^*_j$ in Algorithm~\ref{alg:construct-R}. Then the fact that $c^*_j> (1+3 \delta/4)  \widehat{R}_j$ means that $j$ is not assigned a radius before and during iteration $t'$. Thus, there must be some $p \in F \cup {C'}$, such that $d(p, j) \leq \frac{\delta c^*_j}{4}$, and the set $H_j := \set{j' \in \Ball_{C'}\left(p, \delta c^*_j/4 \right):\widehat{R}_{j'} \geq c^*_j}$ such that $H_j \cup \{j\}$ has cardinality strictly more than $\frac{\rho}{(1-\delta/4)^q}\cdot\frac{U}{(c^*_j)^q}$. We claim that this set $H_j$ has a free client which $j$ can map to. Indeed, if there exists $j' \in H_j$ such that $j' \notin f({C'}^*)$, then we can set $f(j) = j'$. This  satisfies property (A) since $d(j',j) \leq d(j,p) + d(j',p) \leq \delta c^*_j /2$, and property (B) trivally, from the way $H_j$ is defined.

 We now show there exists a $j' \in H_j \setminus f({C'}^*)$, by appealing to the sparsity of the instance. Indeed, suppose $H_j \sse f({C'}^*)$ for the sake of contradiction.
 Now, because $d(j', f_{j'}) \leq \delta c^*_j/2$ for every $j' \in {C'}^*$ at this time, we get that every client in $f^{-1}(H_j)$ has distance at most $\delta c^*_j/4 + \delta c^*_j/2 = 3\delta c^*_j/4$ from $p$. Thus,
	\begin{align*}
		\big|\Ball_{{C'}^*}(p, 3\delta c^*_j/4)\big| \geq |f^{-1}(H_j)| = |H_j| \geq \frac{\rho}{(1-\delta/4)^q}\cdot\frac{U}{(c^*_j)^q}.
	\end{align*}
	By triangle inequality, note that $c^*_p \geq c^*_j - d(j, p) \geq (1-\delta/4)c^*_j \geq (3 \delta/4 ) c^*_j$, as $\delta < 1$. Hence, we have
\ifdefined\fullversion
	\begin{align*}
		&\big|\Ball_{{C'}^*}(p, \delta c^*_p)\big|\cdot (c^*_p)^q
		> \frac{\rho}{(1-\delta/4)^q}\cdot\frac{U}{(c^*_j)^q} \cdot (1-\delta/4)^q(c^*_j)^q = \rho U,
	\end{align*}
\else
	\begin{align*}
		\big|\Ball_{{C'}^*}(p, \delta c^*_p)\big|\cdot (c^*_p)^q
		&> \frac{\rho}{(1-\delta/4)^q}\cdot\frac{U}{(c^*_j)^q} \cdot (1-\delta/4)^q(c^*_j)^q\\ &= \rho U,
	\end{align*}
\fi
	contradicting Property~\ref{property:sparse-sparse} of the $(\rho,\delta,U)$-sparsity of ${\mathcal {I}}'$.
\end{proof}

\subsection{Putting Everything Together}
Now we put everything together to prove~\Cref{theorem:kmwo,theorem:kmeanswo}. We will use, as a black-box, the following theorem, which will be the main result proved in~\Cref{sec:framework,sec:open-k}.
\begin{theorem}[Main Theorem]
	\label{thm:main}
	Let $\mathcal{I'} = (F, C', d, k, m', \allowbreak S_0)$ be an extended \kmwo/\kmeanswo instance, $\rho, \delta \in (0, 1/2)$, $0 \leq U' \leq U$ and $R \in \R_{\geq 0}^{C'}$ such that Properties~\ref{property:prep-sparse} and \ref{property:prep-good} hold. Then there is an efficient randomized algorithm that, given $\calI', \rho, \delta, U$ and $R$, computes a solution to $\mathcal{I'}$ with expected cost at most $\alpha_q  (1 + \delta/2)^q U' + O(\rho/\delta^q)U$.
\end{theorem}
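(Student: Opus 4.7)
The plan is to strengthen the natural LP relaxation~\eqref{LP:basic} using the preprocessing output, apply an iterative rounding loop to produce an \emph{almost-integral} solution with at most two fractionally-open facilities, and finally convert this to a fully integral solution using the sparsity properties~\ref{property:prep-sparse} and~\ref{property:prep-good}. The strengthened LP will (i) hard-wire $y_i = 1$ for every $i \in S_0$ and $x_{i,j} = 0$ whenever $d(i,j) > R_j$, which is feasible by the first part of Property~\ref{property:prep-good}; and (ii) add the \emph{star constraints} $\sum_{j \in C'} x_{i,j}\, d^q(i,j) \leq y_i \cdot \rho (1+\delta/2)^q U$ for every $i \in F\setminus S_0$, justified by the second part of Property~\ref{property:prep-good}. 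This gives an LP of cost at most $(1+\delta/2)^q U'$.

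Next I would execute the iterative rounding framework described in~\Cref{sec:iter-tech}. Throughout, I maintain a partition of $C'$ into full clients $\cfull$ and partial clients $\cpart$, a representative subset $\cstar \subseteq \cfull$ whose sets $F_j$ are pairwise disjoint and whose union $O(1)$-covers every full client, a radius $D_j$ for each client (derived from the current LP), and an inner ball $B_j := \Ball_{F_j}(j, D_j/\tau)$. The ``core'' constraints $y(F_j)=1$ for $j \in \cstar$, $y(F) \leq k$, and the total coverage lower bound, together with $y\ge 0$, define a polytope whose vertices are almost-integral. Each iteration takes a vertex of the intersection of this polytope with the ``non-core'' upper bounds $y(F_j) \leq 1$ for $j \in \cpart$ and $y(B_j) \leq 1$ for $j \in \cfull$, minimising the natural linear surrogate $\sum_{j\in\cpart}\sum_i d(i,j)\, y_i + \sum_{j\in\cfull}(\sum_{i\in B_j}d(i,j)\,y_i + (1-y(B_j))D_j/\tau)$ for the connection cost. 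Whenever a non-core constraint becomes tight, either a partial client is promoted into $\cfull$ (and $\cstar$ is updated to preserve disjointness) or a full client has its radius $D_j$ shrunk to $D_j/\tau$ with $F_j$ replaced by $B_j$; in either case the LP value is non-increasing and only polynomially many iterations can occur. Upon termination no non-core constraint is tight, so the vertex is determined by a laminar system of equalities plus the single total-coverage knapsack, giving at most two fractional coordinates.

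The cost of this almost-integral solution I would bound by $\alpha_q (1+\delta/2)^q U'$ using a standard per-client charging: partial clients pay their LP connection cost, while a full client $j$ whose fractional extent into $F_j$ drops is routed either to the (at most unit) mass inside $B_j$ or, via a representative in $\cstar$, at an additional detour of $O(D_j)$. Optimising $\tau>1$ in the resulting trade-off yields the constants $\alpha_1 = \inf_{\tau>1}(3\tau-1)/\ln\tau$ and the analogous $\alpha_2$ for $q=2$. In the pseudo-approximation setting, simply rounding the two fractional values up would finish the proof modulo an extra open facility, so this step already captures the essential approximation factor.

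The main obstacle is converting the almost-integral solution into a fully integral one at an extra additive cost of only $O(\rho/\delta^q)U$. Let $i_1, i_2$ be the two fractional facilities; to restore integrality and preserve $y(F)\le k$ together with the coverage bound in expectation, I would randomly round one of them to $1$ and the other to $0$ proportionally to the fractional values. The ``open'' move incurs at most the star cost of the chosen facility, and by the star constraint added to the LP this is bounded by $O(\rho)U$. The ``close'' move reassigns each client $j$ previously served through the closed facility to its backup, paying at most $O(R_j)$ per client; this is where Property~\ref{property:prep-sparse} becomes essential, since it guarantees that within any ball of radius $\Theta(R_j)$ the number of clients with $R$-value at least $R_j$ is $O(\rho U/R_j^q)$. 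Bucketing the affected clients by geometrically-spaced values of $R_j$ and summing the resulting geometric series yields the desired $O(\rho/\delta^q)U$ bound, and combining this with the almost-integral cost completes the proof. The hardest part will be carefully executing this bucketing so that the $(1/\delta)^q$ dependence (and no worse) actually comes out of the geometric summation.
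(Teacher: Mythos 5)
Your overall architecture (strengthened LP with $y_i=1$ on $S_0$, the $R_j$-cutoffs and star constraints; iterative rounding to an almost-integral vertex; then a final conversion paid for by sparsity) matches the paper, but the last step as you describe it has a genuine gap. You claim that when the fractional facility $i_2$ is closed, every client served through it can be reassigned at cost $O(R_j)$ and that Property~\ref{property:prep-sparse} bounds the number of clients with $R$-value at least $t$ \emph{in a ball of radius $\Theta(t)$}; in fact the property only controls balls of radius $\Theta(\delta t)$ around a point. Consequently your geometric bucketing over $R_j$ cannot count the clients that are served through $i_2$ but lie at distance between $\Theta(\delta t')$ and $R_j$ from it (where $t'=d(i_2,i^*)$ is the distance to the nearest surviving facility): sparsity says nothing about them, and their reassignment cost is $\Theta(d(j,i_2)/\delta)$, not $O(R_j)$ summed against a sparsity count. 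The paper handles exactly this by splitting the affected full clients $J=\{j:i_2\in B_j\}$ into $J_1$ (far from $i_2$ relative to $\delta' t'$), whose cost is charged to the \emph{star cost} of $i_2$ via Property~\ref{property:y-star-star}, giving $(1+1/\delta')^q\cdot 2\rho\tilde U=O(\rho/\delta^q)U$, and $J_2$ (within $\delta' t'$ of $i_2$), which is the only set to which \eqref{inequ:prep-sparse-restated} is applied after showing $R_j\geq\Theta(t')$. Your plan puts the star constraint to work only for the ``open'' move, so the far-but-served-through-$i_2$ clients are left unbounded.

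Two further points. First, rounding $i_1,i_2$ randomly in proportion to their fractional values only preserves the coverage constraint in expectation, which does not yield a feasible solution; the paper instead uses $y^*_{i_1}+y^*_{i_2}=1$ (Claim~\ref{cl:i1i2}) and deterministically opens the facility whose exclusive set of partial clients is larger ($|C_1|\geq|C_2|$, Lemma~\ref{lem:mcov}), bounding the cost of connecting $C_1$ by the star constraint (Lemma~\ref{lem:c1cost}). Second, the randomness in the theorem actually lives elsewhere: the constant $\alpha_q$ requires the random discretization of distances with $\E_a[{d'}^q]=\frac{\tau^q-1}{q\ln\tau}d^q$ (Lemma~\ref{lemma:discrete}); running your surrogate on the true distances $d$ and ``optimising $\tau$'' gives a strictly worse trade-off (roughly $\tau(3\tau-1)/(\tau-1)$ for $q=1$ rather than $(3\tau-1)/\ln\tau$), so without this step you do not reach the claimed $\alpha_q(1+\delta/2)^qU'$ term.
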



Given a \kmwo/\kmeanswo instance $\mathcal{I} = (F, C, d, k, m)$ and $\epsilon > 0$, we assume (by a standard binary-search idea) that we are given an upper bound $U$ on the cost of the optimum solution $(S^*, C^*)$ to $\mathcal{I}$ and our goal is to find a solution of cost at most $\alpha_q (1 + \epsilon)U$.  To this end, let $\delta = \Theta({\epsilon})$ and let $\rho = \Theta(\epsilon^{q+1})$ be such that $(1+\delta/2)^q \leq 1 + \epsilon/2$ and the $O(\rho/\delta^q)$ term before $U$ (in~\Cref{thm:main}) is at most $\epsilon$.

Using Theorem~\ref{thm:reduce-to-sparse-instances}, we obtain $n^{O(1/\rho)}$ many extended instances. We run the following procedure for each of them, and among all the computed solutions, we return the best one that is valid. Thus, for the purpose of analysis, we can assume that we are dealing with the instance $\mathcal{I'} = (F, C' \subseteq C, d, k, m'=|C^* \cap C'|, S_0 \subseteq S^*)$ satisfying properties~\ref{property:reduce-instance-sparse} and~\ref{property:reduce-instance-good} of Theorem~\ref{thm:reduce-to-sparse-instances}. Defining $C'^* = C^* \cap C'$ and applying~\Cref{thm:preprocessing, thm:main} in order, we obtain a solution $(\tilde S \supseteq S_0, \tilde C \subseteq C')$ to $\mathcal{I'}$ whose expected cost is $\alpha_q (1+\delta/2)^q U' + O(\rho/\delta^q) U \leq \alpha_q(1+\epsilon/2)U' + \epsilon U$.  To extend the solution be feasible for $\mathcal{I}$, we simply greedily connect the cheapest $m - m'$ clients in $C \setminus C'$; the connection cost of these clients is at most $\sum_{j \in C^* \setminus C'}d^q(j, S_0)$. The expected cost of this solution is then at most
$\sum_{j \in C^* \setminus C}d^q(j, S_0) + \alpha_q(1+\epsilon/2)U' + \epsilon U$, which in turn is at most
\ifdefined\fullversion
\begin{align*}
	&\leq \max\set{\frac{(1+\delta)^q}{(1-\delta)^q}, \alpha_q(1+\epsilon/2)} \cdot \left(\frac{(1-\delta)^q}{(1+\delta)^q}\sum_{j \in C^* \setminus C'}d^q(j, S_0) + \sum_{j \in C^* \cap C'}d^q(j, S^*)\right) + \epsilon U\\
	& \leq \alpha_q(1+\epsilon/2)U + \epsilon U \leq \alpha_q(1+\epsilon) U.
\end{align*}
\else
\begin{align*}
	&\leq \max\set{\frac{(1+\delta)^q}{(1-\delta)^q}, \alpha_q(1+\epsilon/2)} \\
	&\cdot \left(\frac{(1-\delta)^q}{(1+\delta)^q}\sum_{j \in C^* \setminus C'}d^q(j, S_0) + \sum_{j \in C^* \cap C'}d^q(j, S^*)\right) + \epsilon U\\
	& \leq \alpha_q(1+\epsilon/2)U + \epsilon U \leq \alpha_q(1+\epsilon) U.
\end{align*}
\fi
The second inequality uses Property~\ref{property:reduce-instance-good} and $\frac{(1+\delta)^q}{(1-\delta)^q} \leq \alpha_q$. The overall running time is $n^{O(1/\rho)} = n^{O(1/\epsilon^{q+1})}$, which is $n^{O(1/\epsilon^2)}$ for $q=1$ and $n^{O(1/\epsilon^3)}$ for $q=2$.

\section{The Iterative Rounding Framework}
\label{sec:framework}

This section and the next one are dedicated to the proof of the main theorem (Theorem~\ref{thm:main}). \emph{For notational convenience, we replace the $\mathcal{I'}, C', {C'}^*$ and $m'$ with $\mathcal{I}, C, C^*$ and $m$ subsequently.
For clarity in presentation, we restate what we are given and what we need to prove after the notational change, without referring to properties stated in Section~\ref{sec:preprocessing}.} The input to our algorithm is an instance $\mathcal{I} = (F, C, d, k, m, S_0)$, parameters $\rho, \delta \in (0, 1/2), U \geq 0$ and a vector $R \in \R_{\geq 0}^C$. There is a parameter $U' \in [0, U]$ (which is not given to our algorithm) such that the following properties are satisfied:
\begin{enumerate}[topsep=3pt, itemsep=0pt, label=(\ref{thm:preprocessing}\alph*'),leftmargin=*]
	\item \label{property:prep-sparse-restated} for every $t > 0$ and $p \in F \cup C$, we have
	\begin{align}
		\left|\set{j \in \Ball_{C}\left(p, \frac{\delta t}{4 + 3\delta}  \right): R_j \geq t} \right| \leq \frac{\rho (1 + 3\delta/4)^q}{(1-\delta/4)^q} \cdot \frac{U}{t^q}, \label{inequ:prep-sparse-restated}
	\end{align}
	\item \label{property:prep-good-restated} there exists a solution to ${\mathcal I}$ of cost at most $(1+\delta/2)^qU'$ where if $j$ is connected to $i$ then  $d(i, j)\leq R_j$; moreover, the total cost of clients connected to any facility $i \notin S_0$ is at most $\rho(1+\delta/2)^q U$.
\end{enumerate}
Our goal is to output a solution to $\calI$ of cost at most $\alpha_q(1+\delta/2)^q U' + O(\rho/\delta^q)U$. We do this in two parts: in this section, we use our iterative rounding framework and obtain an almost-integral solution for $\mathcal{I}$, with cost $\alpha_q(1+\delta/2)^q U'$. This in fact immediately also gives us the desired pseudo-approximation solution.
In the next section, we show how to convert the almost-integral solution to an integral one, with an additional cost of $O(\rho/\delta^q)U$.

We remark that in the pseudo-approximation setting, many parameters can be eliminated: we can set
$S_0 = \emptyset, U = \infty, \delta = 0,  R_j = \infty$ for every $j \in C$, and $\rho$ to be any positive number.  Then Property~\ref{property:prep-sparse-restated} trivially holds as $U = \infty$. Property~\ref{property:prep-good-restated} simply says that there is an integral solution to $\calI$ with cost at most $U'$.  Our goal is to output an almost-integral solution of cost at most $\alpha_q U'$.




\subsection{The Strengthened LP}
For notational simplicity, we let $\tilde U = (1+\delta/2)^qU$ and $\tilde U' = (1+\delta/2)^q {U'}$. We now present our strengthened LP, where we add constraints~\eqref{LPs:S0}-\eqref{LPs:R-d} to the basic LP.
\ifdefined\fullversion
	\begin{align*}
		\min \qquad \sum_{i \in F, j \in C}x_{i, j} d^q(i, j) \tag{$\text{LP}_{\text{strong}}$} \qquad \text{s.t. constraints in \eqref{LP:basic} and} \label{LP:strong}
	\end{align*}\vspace*{-20pt}
\else
	\begin{align*}
		\min \qquad \sum_{i \in F, j \in C}x_{i, j} d^q(i, j) \tag{$\text{LP}_{\text{strong}}$} \label{LP:strong} \\
		\qquad \text{s.t. constraints in \eqref{LP:basic} and}
	\end{align*}
\fi
	\begin{align}
		y_i &= 1 & & \forall i \in S_0 &\label{LPs:S0}\\
		 x_{i,j} &= 0 & &\forall i, j \, \text { s.t } \, d(i,j) >  R_j & \label{LPs:R} \\
	    \sum_{j} d^q(i,j) x_{i,j} & \leq \rho \tilde U y_i & & \forall i \notin S_0 & \label{LPs:star} \\[-6pt]
	    x_{i,j} &= 0 & &\forall i \notin S_0, j \, \text { s.t } \, d^q(i,j) >  \rho \tilde U &\label{LPs:R-d}
	\end{align}

Note that in the pseudo-approximation setting, we do not have Constraints~\eqref{LPs:S0} to ~\eqref{LPs:R-d}, since $S_0 = \emptyset, R_j = \infty$ for all $j \in C$ and $\tilde U = \infty$. Thus \eqref{LP:strong} is the same as \eqref{LP:basic}.
\begin{lemma}
	The cost of the optimum solution to~\eqref{LP:strong} is at most $\tilde U' = (1+\delta/2)^q{U'}$.
\end{lemma}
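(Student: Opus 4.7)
The plan is to exhibit a feasible integer solution to~\eqref{LP:strong} whose cost is at most $\tilde U'$, by directly reading it off the near-optimal integral solution guaranteed by Property~\ref{property:prep-good-restated}. Concretely, I will take the solution promised by~\ref{property:prep-good-restated}, call it $(S^\dagger, C^\dagger)$ with $S^\dagger \supseteq S_0$ (this inclusion is implicit because $S_0$ consists of pre-opened facilities that all feasible solutions must contain), $|S^\dagger|\le k$, $|C^\dagger|\ge m$, and total cost at most $(1+\delta/2)^q U' = \tilde U'$. Set $y_i = \mathbf 1[i\in S^\dagger]$ and, for every $j\in C^\dagger$, let $i_j\in S^\dagger$ be the facility $j$ is assigned to and put $x_{i_j,j}=1$, with all other $x$-entries equal to $0$. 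The objective value then equals the cost of $(S^\dagger, C^\dagger)$, which is at most $\tilde U'$.

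It then remains to verify all the constraints of~\eqref{LP:strong}. The constraints inherited from~\eqref{LP:basic} are immediate: $\sum_i y_i=|S^\dagger|\le k$; $x_{i,j}\le y_i$ because $x_{i,j}=1$ forces $i\in S^\dagger$; $\sum_i x_{i,j}\in\{0,1\}\le 1$; and $\sum_{i,j}x_{i,j}=|C^\dagger|\ge m$. Constraint~\eqref{LPs:S0} follows from $S_0\subseteq S^\dagger$. Constraint~\eqref{LPs:R} holds because~\ref{property:prep-good-restated} guarantees that $d(i_j,j)\le R_j$ for every $j\in C^\dagger$, so whenever $d(i,j)>R_j$ we have $x_{i,j}=0$.

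For the two remaining constraints I use the second part of~\ref{property:prep-good-restated}. Fix $i\notin S_0$. If $y_i=0$, then $x_{i,j}=0$ for all $j$ and~\eqref{LPs:star} and~\eqref{LPs:R-d} hold trivially. If $y_i=1$, then $i\in S^\dagger\setminus S_0$ and the total cost of clients connected to $i$ is, by~\ref{property:prep-good-restated}, at most $\rho(1+\delta/2)^q U = \rho\tilde U$; this is exactly the right-hand side of~\eqref{LPs:star} when $y_i=1$. The same bound yields~\eqref{LPs:R-d}: any client $j$ with $x_{i,j}=1$ would contribute $d^q(i,j)$ to the sum $\sum_{j'} d^q(i,j')x_{i,j'}\le \rho\tilde U$, so we cannot have $d^q(i,j)>\rho\tilde U$.

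There is no real obstacle here; the lemma is a sanity check that the new constraints~\eqref{LPs:S0}--\eqref{LPs:R-d} have been chosen to be implied by the near-optimal structural solution from the preprocessing. The only tiny thing to be careful about is matching the factor $(1+\delta/2)^q$ in Property~\ref{property:prep-good-restated} with the definitions $\tilde U=(1+\delta/2)^q U$ and $\tilde U'=(1+\delta/2)^q U'$, which is exactly why the star-cost bound in~\ref{property:prep-good-restated} was phrased with the extra $(1+\delta/2)^q$ factor.
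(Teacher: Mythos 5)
Your proposal is correct and follows essentially the same route as the paper: plug in the indicator vector of the near-optimal solution guaranteed by Property~\ref{property:prep-good-restated} and check that it satisfies the constraints of \eqref{LP:basic} together with \eqref{LPs:S0}--\eqref{LPs:R-d}, with cost at most $\tilde U'$. Your write-up just spells out the constraint-by-constraint verification (including the observation that the per-star bound implies \eqref{LPs:R-d}) that the paper states more tersely.
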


\begin{proof}
Consider the solution satisfying Property~\ref{property:prep-good-restated}, and set $y_i = 1$ if $i$ is open and $x_{i, j} = 1$ if $j$ is connected to $i$ in the solution. Then the constraints in \eqref{LP:basic} and \eqref{LPs:S0} hold as we have a valid solution. Constraints~\eqref{LPs:R},~\eqref{LPs:star} and~\eqref{LPs:R-d} hold, and the cost of the solution is also at most $(1+\delta/2)^q{U'} = \tilde U'$ from Property~\ref{property:prep-good-restated}.

In the pseudo-approximation setting, the lemma holds trivially since \eqref{LP:basic} is a valid LP relaxation to the instance $\calI$.
\end{proof}


\subsection{Eliminating the $x_{ij}$ variables}
	\label{subsec:splitting}
Suppose we have an optimal solution $(x, y)$ to \eqref{LP:strong}, such that for all $i,j$, either $x_{ij} = y_i$ or $x_{ij} = 0$, then we can easily eliminate the $x_{ij}$ variables and deal with an LP purely over the $y$ variables. Indeed, for the standard $k$-median problem, such a step, which is the starting point for many primal-based rounding algorithms, is easy by simply splitting each $y_i$ and making collocated copies of facilities in $F$. In our case, we need to take extra care to handle Constraint~\eqref{LPs:star}.

\begin{lemma} \label{properties:ystar-after-splitting-lemma}
By adding collocated facilities to $F$, we can efficiently obtain a vector $y^* \in [0, 1]^F$ and a set $F_j \subseteq \Ball_F(j, R_j)$ for every $j \in C$, with the following properties.
\begin{enumerate}[topsep=3pt,itemsep=0pt,label=(\ref{properties:ystar-after-splitting-lemma}\alph*),leftmargin=*]
	\item For each client $j$, we have $y^*(F_j) \leq 1$. \label{property:y-star-bound}
    \item The total fractionally open facilities satisfies $y^*(F) \leq k$. \label{property:y-k-bound}
	\item The total client coverage satisfies $\sum_{j\in C} y^*(F_j) \geq m$. \label{property:y-star-coverage}
    \item The solution cost is bounded: $\sum_{j \in C} \sum_{i \in F_j} d^q(i,j) y^*_i \leq \tilde U'$. \label{property:y-cost}
    \item For each $i \in S_0$, we have $\sum_{i'~\text{collocated with}~i} y^*_{i'} = 1$. \label{property:y-s0}
	\item For every facility $i$ that is not collocated with any facility in $S_0$, the ``star-cost'' of $i$ satisfies $\sum_{j \in C: i \in F_j}$ $d^q(i, j) \leq 2\rho \tilde U$. \label{property:y-star-star}
\end{enumerate}
\end{lemma}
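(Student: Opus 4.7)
Start with an optimal solution $(x, y)$ to \eqref{LP:strong}. The plan is to split each facility $i \in F$ with $y_i > 0$ into a collection of collocated copies $\{i^{(k)}\}_k$ and assign to each copy a subset $S_{i,k} \subseteq \bar C_i := \{j \in C : x_{ij} > 0\}$ of clients, in such a way that $\sum_k y^*_{i^{(k)}} = y_i$ and $\sum_{k : j \in S_{i,k}} y^*_{i^{(k)}} = x_{ij}$ for every $j \in C$. We then set $F_j := \{i^{(k)} : j \in S_{i,k}\}$. Given such a per-facility decomposition, properties~\ref{property:y-star-bound}--\ref{property:y-cost} and~\ref{property:y-s0} follow by direct accounting (using $\sum_i x_{ij} \leq 1$, $\sum_i y_i \leq k$, the coverage lower bound, cost preservation, and $y_i = 1$ for $i \in S_0$); the containment $F_j \subseteq \Ball_F(j, R_j)$ is immediate from~\eqref{LPs:R}.

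The substantive claim is property~\ref{property:y-star-star}, which requires that each copy $i^{(k)}$ of $i \notin S_0$ have star cost $\sum_{j \in S_{i,k}} d^q(i, j) \leq 2\rho \tilde U$. My plan here is to use a Carath\'eodory-based decomposition. For each $i \notin S_0$ with $y_i > 0$, let $\bar{x}_i = (x_{ij}/y_i)_{j \in \bar C_i}$. By~\eqref{LPs:star}, $\bar{x}_i$ lies in the polytope
\[
	P_i := \Big\{ z \in [0, 1]^{\bar C_i} \;:\; \sum\nolimits_{j \in \bar C_i} z_j\, d^q(i,j) \leq \rho \tilde U \Big\}.
\]
Write $\bar{x}_i$ as a convex combination of vertices $v_1, \ldots, v_K$ of $P_i$. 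Each vertex is either (I) the $0/1$-indicator of a set $S$ with $\sum_{j \in S} d^q(i, j) \leq \rho \tilde U$, or (II) a vector with exactly one fractional coordinate $v(j^\star) \in (0, 1)$ and with the linear constraint of $P_i$ tight. For each type-(II) vertex $v$, further split it as $v = v(j^\star)\, \mathbf{1}_{S \cup \{j^\star\}} + (1 - v(j^\star))\, \mathbf{1}_S$ where $S = \{j \neq j^\star : v(j) = 1\}$. Invoking~\eqref{LPs:R-d}, which gives $d^q(i, j^\star) \leq \rho \tilde U$ for $i \notin S_0$, the rounded-up indicator has star cost at most $\rho \tilde U + d^q(i, j^\star) \leq 2\rho \tilde U$, and the rounded-down indicator has star cost at most $\rho \tilde U$. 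Scaling these indicators by $y_i$ produces the copies of $i$; each has size in $[0, y_i]$, the sizes sum to $y_i$, and each meets the star-cost bound. For $i \in S_0$, property~\ref{property:y-star-star} is not required, so any decomposition of $y_i = 1$ satisfying the two marginal conditions suffices.

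The main obstacle I anticipate is property~\ref{property:y-star-star}; a naive iterative splitting (splitting $i$ into two pieces along each $(i,j)$ pair with $0 < x_{ij} < y_i$) does not control the star cost per copy, because each such split can push the star-cost-to-opening ratio of the smaller piece up by an additional $d^q(i, j)$ term and these losses compound. The Carath\'eodory argument sidesteps this by producing the entire decomposition in one shot, and the key quantitative step is that the sole fractional coordinate of any vertex of $P_i$ can contribute at most one $d^q(i, j^\star)$ term upon rounding, which~\eqref{LPs:R-d} bounds by $\rho \tilde U$.
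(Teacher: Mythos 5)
Your proposal is correct, but it takes a genuinely different route from the paper's proof. The paper constructs the split greedily: for each pair $(i,j)$ with $x_{i,j}>0$ it orders the copies collocated with $i$ by their current star cost and assigns $x_{i,j}$ mass to the least-loaded copies (splitting one copy if needed); the bound of Property~\ref{property:y-star-star} then follows by viewing this as greedy makespan minimization on identical machines, where the maximum load is at most the minimum load plus the maximum job size, the minimum load is at most the average load $\leq \rho\tilde U$ by \eqref{LPs:star}, and every job size is at most $\rho\tilde U$ by \eqref{LPs:R-d}. You instead decompose, for each $i\notin S_0$, the normalized vector $\bar x_i = (x_{ij}/y_i)_j$ into vertices of the single-knapsack-plus-box polytope $P_i$, observe that each vertex has at most one fractional coordinate, and round that coordinate up at an extra cost of one $d^q(i,j^\star)\leq\rho\tilde U$ term, again giving per-copy star cost $2\rho\tilde U$; the marginal conditions then yield Properties~\ref{property:y-star-bound}--\ref{property:y-s0} exactly as in the paper. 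Both arguments use \eqref{LPs:star} and \eqref{LPs:R-d} in the same roles, and both are valid; the paper's greedy is slightly more elementary and self-evidently polynomial (and shows, contrary to your cautionary remark, that a carefully ordered iterative split does suffice), while your one-shot Carath\'eodory decomposition is more modular but requires noting that writing $\bar x_i$ as a convex combination of at most $|\bar C_i|+1$ vertices of the explicitly described polytope $P_i$ can be done in polynomial time, so that the number of copies stays polynomial.
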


	\begin{proof}
		Let us denote the solution obtained by solving ~\eqref{LP:strong} as $(x, y)$.
		To avoid notational confusion, we first create a copy $F'$ of $F$ and transfer all the $y$ values from $F$ to $F'$, i.e., for each facility $i \in F$, there is a facility $i' \in F'$ collocated with $i$ and we set $y^*_{i'} = y_i$. Our final vector $y^*$ (and all $F_j$'s) will be entirely supported only in $F'$. We shall create a set $F_j \subseteq F'$ for each $j$, where initially we have $F_j = \emptyset$ for every $j$. The star cost of a facility $i' \in F'$ is simply $\sum_{j \in C: i' \in F_j} d^q(i', j)$. During the following procedure, we may split a facility $i' \in F'$ into two facilities $i'_1$ and $i'_2$ collocated with $i'$ with $y^*_{i'_1} + y^*_{i'_2} = y^*_{i'}$. Then, we update $F' \gets F' \setminus \set{i'} \cup \set{i'_1, i'_2}$ and for each $j$ such that $i' \in F_j$, we shall update $F_j \gets F_j \setminus \set{i'} \cup \set{i'_1, i'_2}$.
		
		
		
		For every facility $i \in F$, for every client $j \in C$, with $x_{i, j} > 0$, we apply the following procedure. We order the facilities in $F'$ collocated with $i$ in non-decreasing order of their current star costs.  We choose the first $o$ facilities in this sequence whose total $y^*$ value is exactly $x_{i, j}$; some facility may need to be split in order to guarantee this. We then add these facilities to $F_j$.
		
		
		Properties~\ref{property:y-star-bound} to~\ref{property:y-s0} are easy to see since the new solution given by $y^*$ and $(F_j)_{j \in C}$ is the same as the solution $(x, y)$ up to splitting of facilities. It remains to prove Property~\ref{property:y-star-star}; this is trivial for the pseudo-approximation setting as $\tilde U = \infty$. 
		We can view the above procedure for any fixed $i \in F$ as a greedy algorithm for makespan minimization on identical machines. Let $M$ be a large integer such that all considered fractional values are multiplies of $1/M$. There are $My_i$ machines, and for every $j \in C$, there are $Mx_{i, j}$ jobs of size $d^q(i,j)$ corresponding to each client $j$.  There is an extra constraint that all the jobs corresponding to $j$ must be scheduled on different machines.  The above algorithm is equivalent to the following: for each $j \in C$, we choose the $Mx_{i, j}$ machines with the minimum load and assign one job correspondent to $j$ to each machine. (This algorithm may run in exponential time; but it is only for analysis purpose.)
			
			First, note that size of any job is at most $\rho \tilde U$ from \eqref{LPs:R-d}.  Thus, the highest load over all machines is at most the minimal one plus $\rho \tilde U$.  The minimal load over all machines is at most the average load, which is equal to $\left( \sum_{j}d^q(i,j) \cdot \frac{ Mx_{ij}}{My_i} \right) \leq \rho\tilde U$, by \eqref{LPs:star}. Thus, the maximum load is at most $2\rho \tilde U$.  Now, we redefine $F$ as $F'$ and finish the proof of the Lemma.
		%
		%
		%
	\end{proof}


\subsection{Random Discretization of Distances} \label{sec:discretize}
Our next step in the rounding algorithm is done in order to optimize our final approximation factor. To this end, if $q=1$, let $\tau = \arg \min \frac{3\tau-1}{\ln\tau} \approx 2.3603$, and then $\alpha_1 = \frac{3\tau-1}{\ln\tau} < 7.081$; if $q=2$, let $\tau = \arg \min \frac{(\tau + 1)(3\tau-1)^2}{2(\tau-1)\ln \tau} \approx 2.24434$ and then $\alpha_2 = \frac{(\tau + 1)(3\tau-1)^2}{2(\tau-1)\ln \tau} \leq 53.002$. We choose a random offset $a \sim [1, \tau)$ such that $\ln a$ is uniformly distributed in $[0, \ln \tau)$. Then, define $D_{-2} = -1, D_{-1} = 0$, and $D_\ell = a \tau^\ell$ for every integer $\ell \geq 0$. Now, we \emph{increase} each distance $d$ to the smallest value $d' \geq d$ belonging to the set $\set{D_{-1}, D_0, D_1, D_2, \cdots,}$. Formally, for every $i \in F,j \in C$, let $d'(i,j) = D_\ell$, where $\ell$ is the minimum integer such that $d(i, j)\leq D_\ell$. Note that the distances $d'(i,j)$ may not satisfy the triangle inequality anymore, but nevertheless this discretization will serve useful in our final analysis for optimizing the approximation factor.

\begin{lemma}
\label{lemma:discrete}
	For all $i,j$, we have $d'(i,j) \geq d(i,j)$ and $\E_a [{d'}^q(i,j)] = \frac{\tau^q-1}{q\ln \tau} d^q (i,j)$.
\end{lemma}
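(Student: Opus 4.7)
The first statement $d'(i,j) \geq d(i,j)$ is immediate from the definition: we pick $d'(i,j)$ to be the smallest element of $\{D_{-1}, D_0, D_1, \ldots\}$ that is at least $d(i,j)$. For the expectation, the case $d(i,j) = 0$ is trivial since then $d'(i,j) = D_{-1} = 0$, so both sides vanish. Hence I may fix $i,j$ with $d := d(i,j) > 0$ and focus on computing $\E_a[(d')^q]$ where $d' := d'(i,j)$.

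The main idea is to show that $\ln(d'/d)$ is uniformly distributed on $[0,\ln\tau)$; the value $(d')^q$ then follows by direct integration. Writing $w := \ln a$, the hypothesis says $w$ is uniform on $[0,\ln\tau)$. By construction, $d' = a\tau^\ell$ where $\ell$ is the smallest integer satisfying $w + \ell \ln \tau \geq \ln d$. In particular $\ln d' = w + \ell \ln \tau$ is the smallest element of the arithmetic progression $\{w + k\ln\tau : k \in \Z\}$ that is $\geq \ln d$. Equivalently, $\ln d' - \ln d$ is the unique representative of the residue class $(w - \ln d) \bmod \ln \tau$ lying in $[0,\ln\tau)$.

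The map $w \mapsto (w - \ln d) \bmod \ln \tau$ is a measure-preserving bijection of $[0,\ln\tau)$ to itself, so $\ln(d'/d)$ is uniform on $[0,\ln\tau)$. Therefore
\begin{align*}
	\E_a[(d')^q] = d^q \cdot \E_a\!\left[\left(\frac{d'}{d}\right)^q\right] = d^q \cdot \frac{1}{\ln\tau}\int_0^{\ln\tau} e^{qu}\,du = d^q \cdot \frac{\tau^q - 1}{q\ln\tau},
\end{align*}
as desired. I don't anticipate any substantial obstacles here: the only slightly delicate point is verifying that $\ell$ is indeed well-defined as a nonnegative integer (it is, since $d > 0$ forces $\ell \geq 0$, whereas $D_{-1} = 0$ would only be chosen in the zero-distance case), after which the residue-class argument yields the claim cleanly.
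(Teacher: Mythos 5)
Your argument is essentially the paper's own proof in a cleaner wrapper: the paper writes $d(i,j)=\tau^{\ell+p}$ and integrates over the two cases $\beta<p$ and $\beta\geq p$, which is exactly the explicit computation of your observation that $\ln\bigl(d'(i,j)/d(i,j)\bigr)$ is the uniform offset $(\ln a-\ln d)\bmod \ln\tau$; both routes reduce to the same single integral $\frac{1}{\ln\tau}\int_0^{\ln\tau}e^{qu}\,du$.

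The one inaccurate point is the parenthetical claim that $d>0$ already forces the index $\ell$ to be nonnegative. That is false: since the allowed rounded values are only $\{D_{-1}=0\}\cup\{a\tau^\ell:\ell\geq 0\}$, a distance with $0<d(i,j)<1$ can have its nearest progression element at a negative index (e.g.\ $d\leq a/\tau$), while the actual rounding snaps it up to $D_0=a$; in that regime $\ln(d'/d)$ is no longer uniform on $[0,\ln\tau)$ and the identity $\E_a[{d'}^q]=\frac{\tau^q-1}{q\ln\tau}d^q$ itself fails (one gets $\E_a[a^q]=\frac{\tau^q-1}{q\ln\tau}$ instead). What makes the argument work is the normalization the paper invokes at the start of its proof, namely that nonzero distances satisfy $d(i,j)\geq 1$ (achievable by scaling): together with $a<\tau$ this gives $a\tau^{-1}<1\leq d$, hence $\ell\geq 0$, and your residue-class argument then goes through verbatim. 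So replace ``$d>0$ forces $\ell\geq 0$'' by ``$d\geq 1$ and $a<\tau$ force $\ell\geq 0$'' (treating $d=0$ separately, as you already do), and the proof is complete and matches the paper's.
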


	\begin{proof}
		We can assume $d(i, j) \geq 1$, since the case for $d(i, j) = 0$ is trivial. Let $\beta = \log_{\tau}{a}$. $\beta$ is distibuted uniformly in $[0,1)$. Let $d(i,j) = \tau^{\ell + p}$ where $\ell$ is an integer and $0\leq p<1$. When $\beta$ is less than $p$, $d(i,j)$ is rounded to $d'(i,j)=\tau^{\ell+1+\beta}$. If $\beta$ is at least $p$, then $d'(i,j)=\tau^{\ell+\beta}$.
\ifdefined\fullversion
			\begin{align*}
				\E_a\left[{d'}^q(i,j)\right] &= \int_{\beta=0}^{p}\tau^{q(\ell+1+\beta)}+\int_{\beta=p}^{1}\tau^{q(\ell+\beta)}
				 =  \tfrac{\tau^{q(\ell+1+\beta)}}{q\ln{\tau}}\Big|_{0}^{p} + \tfrac{\tau^{q(\ell + \beta)}}{q\ln{\tau}}\Big|_{p}^{1}\\
				 &=  \tfrac{\tau^{q(\ell+1+p)}-\tau^{q(\ell+1)}}{q\ln{\tau}}+ \tfrac{\tau^{q(\ell+1)}-\tau^{q(\ell+p)}}{q\ln{\tau}}
				=  \tfrac{\tau^{q(\ell+1+p)}-\tau^{q(\ell+p)}}{q\ln{\tau}}
				 =  \tau^{q(\ell+p)}\tfrac{\tau^q-1}{q\ln{\tau}}
				 = \tfrac{\tau^q-1}{q\ln{\tau}}d^q(i,j).\hfill \qedhere
			\end{align*}
\else
			\begin{align*}
				\E_a\left[{d'}^q(i,j)\right] &= \int_{\beta=0}^{p}\tau^{q(\ell+1+\beta)}+\int_{\beta=p}^{1}\tau^{q(\ell+\beta)} \\ \allowdisplaybreaks
				 &=  \tfrac{\tau^{q(\ell+1+\beta)}}{q\ln{\tau}}\Big|_{0}^{p} + \tfrac{\tau^{q(\ell + \beta)}}{q\ln{\tau}}\Big|_{p}^{1}\\
				 &=  \tfrac{\tau^{q(\ell+1+p)}-\tau^{q(\ell+1)}}{q\ln{\tau}}+ \tfrac{\tau^{q(\ell+1)}-\tau^{q(\ell+p)}}{q\ln{\tau}}\\ 
				&=  \tfrac{\tau^{q(\ell+1+p)}-\tau^{q(\ell+p)}}{q\ln{\tau}}
				 =  \tau^{q(\ell+p)}\tfrac{\tau^q-1}{q\ln{\tau}}
				 = \tfrac{\tau^q-1}{q\ln{\tau}}d^q(i,j).\hfill \qedhere
			\end{align*}
\fi		
	\end{proof}


\subsection{Auxiliary LP and Iterative Rounding}

The crucial ingredient of our rounding procedure is the following \emph{auxiliary LP} \eqref{LP:iter} which we iteratively update and solve. At each step, we maintain a partition of $C$ into $\cfull$, the set of clients which need to be fully connected,
and $\cpart$, the set of partially connected clients. For each client $j$, we also maintain a set $F_j$ of allowed connections, a \emph{radius-level} $\ell_j$ for each client $j$ such that $D_{\ell_j}$ is the maximum connection cost for $j$, and a set $B_j$, called the \emph{inner-ball} of $j$ which only includes facilities from $F_j$ at distance at most $D_{\ell_j-1}$ from $j$. Finally, we also maintain a subset \cstar~of full clients such that their $F_j$ balls are disjoint, and also such that, every other full client $j \notin \cstar$~is ``close'' (within $O(1) D_{\ell_j}$) to the set \cstar. A small technicality is that for each facility $i \in S_0$, we also view $i$ as a \emph{virtual client} that we shall include in $C^*$ at the beginning. This is done in order to ensure each $i \in S_0$ will be open eventually: since we split the facilities, $y^*$ obtained from~\Cref{properties:ystar-after-splitting-lemma} only has $\sum_{i'\text{ collocated with }i}y^*_{i'} = 1$ for every $i \in S_0$. Our algorithm will operate in a way that these virtual clients will never get removed from $C^*$, using which we will show that we open $S_0$ in the end.

Initially, $\cfull \gets \emptyset$, $\cpart\gets C$ and $\cstar \gets S_0$. For each client $j \in C$, the initial set $F_j$ is the one computed in Lemma~\ref{properties:ystar-after-splitting-lemma}, and we define $\ell_j$ to be the integer such that $D_{\ell_j} = \max_{i \in F_j}d'(i, j)$.  For a virtual client $j \in S_0$, let $F_j$ be the set of facilities collocated with $j$ and thus $y^*(F_j) = 1$, and define $\ell_j = -1$ (thus $D_{\ell_j} = 0$). We remark that as we proceed with the rounding algorithm, the $F_j$'s and $\ell_j$'s will be updated but $F_j$ will only \emph{shrink} and $\ell_j$ can only decrease. Crucially, at every step the set $F_j$ will always be contained in $\Ball_F(j,D_{\ell_j})$.
\vspace*{-15pt}

\ifdefined\fullversion
	\begin{align}
	  \min \qquad \sum_{j \in \cpart} \sum_{i \in F_j} {d'}^q(i, j)y_i
	  +  \sum_{j \in \cfull}\left( \sum_{i \in B_j} {d'}^q(i, j)y_i + (1 - y(B_j)) D^q_{\ell_j} \right)  \qquad \textrm {s.t.}  \tag{$\text{LP}_{\text{iter}}$} \label{LP:iter}
	\end{align}\vspace*{-20pt}
	
	\noindent\begin{minipage}{0.4\textwidth}
		\begin{alignat}{2}
	  y(F) &\leq \, k   \label{LPiter:1} \\[3pt]
			y(F_j) &= \, 1 &\qquad &\forall j \in \cstar \label{LPiter:y-full} \\[12pt]
			y(B_j) &\leq \, 1 &\qquad &\forall j \in \cfull  \label{LPiter:y-B}
		\end{alignat}
	\end{minipage}
	\begin{minipage}{0.6\textwidth}
		\begin{alignat}{2}
			y(F_j) &\leq \, 1 &\qquad &\forall j \in \cpart \label{LPiter:y-part}\\
	   		|\cfull |+\sum_{j\in \cpart}y(F_j) &\geq \, m \label{LPiter:coverage} \\
	   		y_i &\in  [0, 1] &\qquad &\forall i \in F \label{LPiter:6}
		\end{alignat}
	\end{minipage}\bigskip
\else
	\begin{multline}
	  \min \qquad \sum_{j \in \cpart} \sum_{i \in F_j} {d'}^q(i, j)y_i \\
	  +  \sum_{j \in \cfull}\left( \sum_{i \in B_j} {d'}^q(i, j)y_i + (1 - y(B_j)) D^q_{\ell_j} \right)  \qquad \textrm {s.t.}  \tag{$\text{LP}_{\text{iter}}$} \label{LP:iter}
	\end{multline}\vspace*{-20pt}

	\begin{align}
		y(F_j) &= 1 &  &\forall j \in \cstar &\label{LPiter:y-full}\\
		y(F) &\leq k  & &  &\label{LPiter:1}\\
		y(B_j) &\leq 1 & & \forall j \in \cfull &\label{LPiter:y-B}\\
		y(F_j) &\leq 1 & &\forall j \in \cpart &\label{LPiter:y-part}\\
		 |\cfull |+\sum_{j\in \cpart}y(F_j) &\geq m & & &\label{LPiter:coverage}\\
		  y_i &\in  [0, 1] & &\forall i \in F &\label{LPiter:6}
	\end{align}
\fi

In the above LP, for a client $j \in \cpart$, the quantity $y(F_j)$ denotes the extent that $j$ is connected, hence the constraint~\eqref{LPiter:y-part}. Constraint~\eqref{LPiter:coverage} enforces that the total number of covered clients (full plus partial) is at least $m$. Then, constraint~\eqref{LPiter:y-full} ensures that clients in $\cstar$ are covered fully.
Since we don't enforce~\eqref{LPiter:y-full} for full clients that are not in $\cstar$, we make sure that such client are close to $\cstar$.

Now, we describe the objective function of \eqref{LP:iter}. Notice that we use $d'$ instead of $d$ in the objective function. For any client $j \in \cpart$, $y(F_j)$ is precisely the extent to which $j$ is connected, and so, we can simply use $\sum_{i \in F_j}{d'}^q(i, j) y_i$ to denote the connection cost of $j$. For a client $j \in \cfull$, $j$ is required to be fully connected but we may not have $y(F_j) = 1$. Hence, $\sum_{i \in F_j} {d'}^q(i,j) y_i$ is no longer a faithful representation of its connection cost, and hence we need to express the cost in a different manner. To this end, for every $j \in \cfull$, we require $y(B_j) \leq 1$ in~\eqref{LPiter:y-B}. We guarantee that $B_j$ is always $\set{i \in F_j: d'(i, j)\leq D_{\ell_j - 1}}$. Then, for connections between $j$ and any facility $i \in B_j$, we use the rounded distance $d'(i, j)$ in the objective function. For the remaining $1-y(B_j)$ fractional connection of $j$, we use the term $D_{\ell_j}$. This gives us the objective function, and in turn completes the description of~\eqref{LP:iter}.
\begin{lemma}
	\label{lem:initial-y-star-good}
The $y^*$ computed in Section~\ref{subsec:splitting} is a feasible solution to \eqref{LP:iter}. Moreover, the expected value of the objective function (over the random choice of $a$ defined in~\Cref{sec:discretize}) is at most $\frac{\tau^q-1}{q\ln \tau}\tilde U'$.
\end{lemma}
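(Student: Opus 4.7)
The proof is essentially a verification exercise, matching each constraint of~\eqref{LP:iter} to the corresponding property of $y^*$ established in Lemma~\ref{properties:ystar-after-splitting-lemma}, and then applying linearity of expectation for the objective. The plan is as follows.

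First I would verify feasibility at initialization, where $\cfull = \emptyset$, $\cpart = C$, and $\cstar = S_0$ consists of the virtual clients. Constraint~\eqref{LPiter:1} ($y^*(F)\leq k$) is exactly Property~\ref{property:y-k-bound}. Constraint~\eqref{LPiter:y-full} must be checked only for virtual clients $j \in S_0$: since $F_j$ is defined as the set of facilities collocated with $j$, Property~\ref{property:y-s0} gives $y^*(F_j) = 1$. Constraint~\eqref{LPiter:y-part} ($y^*(F_j)\leq 1$ for $j \in \cpart = C$) is Property~\ref{property:y-star-bound}. Constraint~\eqref{LPiter:y-B} is vacuous since $\cfull = \emptyset$. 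Constraint~\eqref{LPiter:coverage} reduces to $\sum_{j\in C} y^*(F_j)\geq m$, which is Property~\ref{property:y-star-coverage}. Constraint~\eqref{LPiter:6} follows since $y^*$ was obtained by splitting facilities from an LP solution, so each coordinate lies in $[0,1]$.

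Next, I would compute the objective value. Because $\cfull = \emptyset$, the objective collapses to
\[
    \sum_{j \in C}\sum_{i \in F_j} {d'}^q(i,j)\, y^*_i.
\]
Taking expectation over the random offset $a$ and applying Lemma~\ref{lemma:discrete} together with linearity of expectation yields
\[
    \E_a\!\left[\sum_{j \in C}\sum_{i \in F_j} {d'}^q(i,j)\, y^*_i\right] = \frac{\tau^q-1}{q\ln\tau}\sum_{j \in C}\sum_{i \in F_j} d^q(i,j)\, y^*_i \;\leq\; \frac{\tau^q-1}{q\ln\tau}\,\tilde U',
\]
where the final inequality is Property~\ref{property:y-cost}. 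This establishes both claims of the lemma.

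There is no real obstacle here; the only minor subtlety is correctly accounting for the virtual clients in $S_0$, but their defining $F_j$ makes constraint~\eqref{LPiter:y-full} hold with equality by~Property~\ref{property:y-s0}, and they contribute nothing to the objective since $\cfull = \emptyset$ initially (and $D_{\ell_j} = 0$ anyway). Properties~\ref{property:y-star-bound}–\ref{property:y-cost} were designed precisely to make this verification immediate.
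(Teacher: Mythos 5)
Your proof is correct and follows essentially the same route as the paper: feasibility is verified constraint-by-constraint via Properties~\ref{property:y-star-bound}--\ref{property:y-star-coverage} and Property~\ref{property:y-s0} for the virtual clients in $S_0$, and the objective bound follows from Property~\ref{property:y-cost} combined with Lemma~\ref{lemma:discrete} and linearity of expectation. The paper's proof is just a terser version of the same verification.
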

\begin{proof}
	Initially we have $\cstar=S_0$, $\cfull=\emptyset$ and $\cpart=C$. The feasibility of $y^*$ is guaranteed by Properties~\ref{property:y-star-bound}-\ref{property:y-star-coverage}, and the fact that for every virtual client $j \in S_0$, we have $y^*(F_j) = 1$. Now, since $\cpart = C$ and $\cfull = \emptyset$, the objective value of $y^*$ w.r.t \eqref{LP:iter} is $\sum_{j \in C, i \in F_j}d'(i, j)y^*_i$. Now, from Property~\ref{property:y-cost}, we have that $\sum_{j \in C, i \in F_j}d(i, j)y^*_i \leq \tilde U'$. Combining this with Lemma~\ref{lemma:discrete} bounds the objective value.
\end{proof}


\begin{algorithm}[h]
	\caption{Iterative Rounding Algorithm} \label{alg:iter}
	\begin{itemize}[leftmargin=*]
		\item 	\textbf{Input}: $\mathcal{I} = (F, C, d, k, m, S_0), \rho,\delta \in (0, 1/2), U \geq 0, y^* \in [0, 1]^F, (F_j)_{j \in C \cup S_0}$ and $(\ell_j)_{j \in C \cup S_0}$
		\item \textbf{Output}: a new solution $y^*$ which is almost-integral
	\end{itemize}\vspace*{-5pt}
	
	\rule{\linewidth}{0.5pt}
	\begin{algorithmic}[1]
		\State $\cfull \gets \emptyset, \cpart \gets C, \cstar \gets S_0$
		\While{true} \Comment{(main loop)}
			\State find an optimum vertex point solution $y^*$ to~\eqref{LP:iter} \label{state:iter-solve-LP}
			\If{there exists some $j \in C_{\parti}$ such that $y^*(F_j) = 1$}
				\State $C_{\parti} \gets C_{\parti} \setminus \set{j}$,  $C_{\full} \gets C_{\full} \cup \set{j}$, \ifdefined\fullversion\else\newline\fi$B_j \gets \set{i \in F_j: d'(i, j) \leq D_{\ell_j - 1}}$
                $\textsf{update-}C^*(j)$ \label{state:iter-move-j-to-full}
			\ElsIf{there exists $j \in C_\full$ such that $y^*(B_j) = 1$}
				\State $\ell_j \gets \ell_j-1$, $F_j \gets B_j$, \ifdefined\fullversion\else\newline\fi $B_j \gets \set{i \in F_j: d'(i, j) \leq D_{\ell_j-1}}$
                $\textsf{update-}C^*(j)$ \label{state:iter-decrease-lj}
			\Else
				\State break
			\EndIf
		\EndWhile
		\State \Return $y^*$
	\end{algorithmic}
	\vspace*{-5pt}
	\rule{\linewidth}{0.5pt}
	
	$\textsf{update-}C^*(j)$:
	\begin{algorithmic}[1]
		\If {there exists no $j' \in C^*$ with $\ell_{j'} \leq \ell_j$ and $F_j \cap F_{j'} \neq \emptyset$}
			\State remove from $C^*$ all $j'$ such that $F_j \cap F_{j'} \neq \emptyset$
			\State $C^* \gets C^* \cup \set{j}$
		\EndIf
	\end{algorithmic}
\end{algorithm}

We can now describe our iterative rounding algorithm, formally stated in Algorithm~\ref{alg:iter}. In each iteration, we solve the LP to obtain a vertex solution $y^*$ in Step~\ref{state:iter-solve-LP}. If~\eqref{LPiter:y-part} is tight for some partial client $j \in \cpart$, then we update $\cfull \gets \cfull \cup \{j\}$ and remove $j$ from $\cpart$. We also update $\cstar$ to ensure that there is a facility in $\cstar$ that is close to all full clients; this is done in the procedure $\textsf{update-}C^*$. Likewise if \eqref{LPiter:y-B} is tight for some client $j \in \cfull$, then we decrease $\ell_j$ by 1, update $F_j \gets B_j$ and $B_j$ to be an even smaller set. We again call $\textsf{update-}C^*(j)$ to ensure \cstar~is close to all full clients --- this is needed since we decreased $\ell_j$ and the definition of ``close'' becomes more strict. If neither of the constraints~\eqref{LPiter:y-B} and \eqref{LPiter:y-part} are tight for $y^*$, then we show that $y^*$ is almost integral and we return $y^*$.

%

\subsection{Analysis}
Our proof proceeds as follows: we first show that, if the vertex point $y^*$ computed satisfies all constraints~\eqref{LPiter:y-B} and~\eqref{LPiter:y-part} with strict inequality, then it is almost integral. Next we show that the objective value of the solutions computed is non-increasing as we iterate. Then, we show that all full clients are close to \cstar, using which we conclude that there is one unit of fractional facilities opened within distance $O(1) D_{\ell_j}$ for all $j \in \cfull$. This suggests that the objective function of \eqref{LP:iter} captures the actual cost well. Finally, we show that all virtual clients in $S_0$ will always be in $C^*$, implying that there will be an integrally open facility at each location in $S_0$. Combining all these leads to an almost-integral solution with bounded cost.

We begin by describing some simple invariants that are maintained in the algorithm; throughout this section, we assume that every step indicated by a number in Algorithm~\ref{alg:iter} is atomic.

\begin{claim} \label{cl:invts}
	After each step of the algorithm, the following hold.
	\begin{enumerate}[topsep=3pt,itemsep=0pt, label=\arabic*]
		\item The sets $\cfull$ and $\cpart$ form a partition of $C$, $S_0 \subseteq C^*$ and $\cstar \setminus S_0 \subseteq \cfull$.
		\item The sets $\set{F_j:j \in C^*}$ are mutually disjoint.
		\item If $j \in \cfull$, then $B_j = \set{i \in F_j: d'(i, j) \leq D_{\ell_j - 1}}$.
		\item For every $j \in C$ and $i \in F_j$, we have $d'(i, j) \leq D_{\ell_j}$.
        \item For every $j$, $D_{\ell_j} \leq \tau R_j$.
		\item For every $j$, $\ell_j \geq -1$.
	\end{enumerate}
\end{claim}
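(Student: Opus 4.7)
The plan is to proceed by induction on the number of atomic steps performed by Algorithm~\ref{alg:iter}, showing that each of the six invariants is preserved throughout.

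For the base case at initialization, I would check invariants~1--6 directly from the setup: invariants~1 and 3 hold essentially vacuously since $\cfull = \emptyset$, $\cpart = C$, and $\cstar = S_0 \subseteq C^*$; invariant~2 uses the assumption $d(i, i') > 0$ for distinct $i, i' \in F$ together with the fact that the initial $F_j$ for $j \in S_0$ is the set of facilities collocated with $j$; invariant~4 is the defining property of $\ell_j$ (and for virtual $j \in S_0$, $D_{\ell_j} = 0$ while $F_j$ contains only facilities collocated with $j$); invariant~5 follows from Lemma~\ref{properties:ystar-after-splitting-lemma} (giving $F_j \subseteq \Ball_F(j, R_j)$) together with the fact that the random discretization of Section~\ref{sec:discretize} inflates each distance by at most a factor of $\tau$; invariant~6 is immediate.

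For the inductive step there are three types of local modifications to consider: Step~\ref{state:iter-move-j-to-full} (moving $j$ from $\cpart$ to $\cfull$), Step~\ref{state:iter-decrease-lj} (decreasing $\ell_j$), and any call to $\textsf{update-}C^*$. Most verifications are short: Step~\ref{state:iter-move-j-to-full} preserves the partition in invariant~1 and sets $B_j$ exactly as required by invariant~3 without touching $F_j$ or $\ell_j$; Step~\ref{state:iter-decrease-lj} preserves invariant~4 because the new $F_j$ equals the old $B_j$, and preserves invariant~5 because $D_{\ell_j-1} < D_{\ell_j} \leq \tau R_j$; $\textsf{update-}C^*$ preserves invariant~2 by construction, since whenever it adds $j$ to $C^*$, it simultaneously removes every $j'$ with $F_j \cap F_{j'} \neq \emptyset$.

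The two subtleties I anticipate are invariant~6 and the $S_0 \subseteq C^*$ part of invariant~1. For invariant~6, the key observation is that if $\ell_j = -1$, then $D_{\ell_j - 1} = D_{-2} = -1 < 0$, so $B_j = \emptyset$ and $y^*(B_j) = 0 \neq 1$; hence Step~\ref{state:iter-decrease-lj} is never triggered on a client with $\ell_j = -1$. For $S_0 \subseteq C^*$, the point is that every virtual client $j' \in S_0$ has $\ell_{j'} = -1$, which is minimal; so whenever $\textsf{update-}C^*(j)$ encounters a virtual $j' \in C^*$ with $F_j \cap F_{j'} \neq \emptyset$, the guarding condition of the subroutine (``no $j' \in C^*$ with $\ell_{j'} \leq \ell_j$ and $F_j \cap F_{j'} \neq \emptyset$'') fails and no removal occurs. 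Finally, $\cstar \setminus S_0 \subseteq \cfull$ is preserved because $\textsf{update-}C^*(j)$ is invoked only after $j$ has been placed into $\cfull$ (either just added in Step~\ref{state:iter-move-j-to-full}, or already a full client in Step~\ref{state:iter-decrease-lj}).
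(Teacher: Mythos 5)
Your proposal is correct and follows essentially the same argument as the paper's proof: the partition is maintained because clients only move from $\cpart$ to $\cfull$, invariants 2--4 follow from how $\textsf{update-}C^*$ and the redefinitions of $F_j, B_j$ are performed, invariant 5 from the initial bound $D_{\ell_j}=\max_{i\in F_j}d'(i,j)\leq \tau R_j$ and monotonicity, and invariant 6 (and the persistence of $S_0$ in $C^*$) from the observation that level $-1$ clients have $B_j=\emptyset$ and minimal radius-level. The only difference is presentational — you organize it as an explicit induction over atomic steps — which matches the paper's intent.
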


	\begin{proof}
		$\cfull$ and $\cpart$ form a partition of $C$ since we only \emph{move} clients from $\cpart$ to $\cfull$.  Moreover, virtual clients in $S_0$ will never be removed from $C^*$ since each such client $j$ has $\ell_j = -1$. Finally, we call $\textsf{update-}C^*(j)$ only if $j$ is already in $\cfull$ and we only add $j$ to $C^*$ in the procedure. Thus, Property 1 holds. Also, before we add $j$ to $C^*$ in $\textsf{update-}C^*(j)$, we removed all $j'$ from $C^*$ such that $F_j \cap F_{j'} \neq \emptyset$, and so Property 2 holds.  Property 3 holds since every time we move $j$ from $\cpart$ to $\cfull$, or update $F_j$ for some $j \in \cfull$, we define $B_j = \set{i \in F_j: d'(i, j) \leq D_{\ell_j - 1}}$. Property 4 holds at the beginning of the algorithm, and before we change $F_j$ to $B_j$ in Step~\ref{state:iter-decrease-lj}, we have $d'(i, j) \leq D_{\ell_j - 1}$ for every $i \in B_j$. Property 5 holds initially because initially $D_{\ell_j} = \max_{i \in F_j} d'(i,j) \leq \tau \max_{i \in F_j} d(i,j) \leq \tau R_j$, and $D_{\ell_j}$ is non-increasing over time. Property 6 holds since if $\ell_j = -1$ for some $j \in \cfull$ then $B_j$ is empty thus $y^*(B_j)$ will never become 1.
	\end{proof}

We now proceed to the analysis of the algorithm, by proving the following lemmas.

\begin{lemma} \label{lem:2fractional}
	If after Step~\ref{state:iter-solve-LP} in Algorithm~\ref{alg:iter}, none of the constraints~\eqref{LPiter:y-B} and~\eqref{LPiter:y-part} are tight for $y^*$, then there are at most $2$ strictly fractional $y^*$ variables.
\end{lemma}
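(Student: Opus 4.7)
The plan is a vertex-of-polyhedron rank argument combining the disjointness of the sets $F_j$ for $j \in C^*$ (Claim~\ref{cl:invts}(2)) with a subtle integrality property of the cardinality constraint $y(F) \leq k$. Let $F_{\mathrm{frac}} = \{i \in F : 0 < y^*_i < 1\}$ and $f = |F_{\mathrm{frac}}|$. Since $y^*$ is a vertex, the $|F|-f$ tight box constraints must be supplemented by exactly $f$ more linearly independent tight constraints when restricted to $F_{\mathrm{frac}}$. By hypothesis, \eqref{LPiter:y-B} and \eqref{LPiter:y-part} are not tight, so these $f$ constraints can come only from: (a) the equalities $y(F_j) = 1$ for $j \in C^*$; (b) $y(F) = k$, if tight; and (c) the coverage constraint, if tight.

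I would first exploit disjointness. Let $s' = |\{j \in C^* : F_j \cap F_{\mathrm{frac}} \neq \emptyset\}|$. By Claim~\ref{cl:invts}(2) the relevant (a) constraints have pairwise disjoint supports and hence contribute rank exactly $s'$. Moreover, for each such $j$ the equality $y(F_j) = 1$ minus the integer contribution from coordinates with $y^*_i = 1$ inside $F_j$ forces the fractional coordinates in $F_j$ to sum to a non-zero integer; since each such coordinate lies in $(0,1)$, this integer must be $1$ and at least two fractional coordinates live in $F_j$. Writing $f_1 = |F_{\mathrm{frac}} \cap \bigcup_{j \in C^*} F_j|$ and $f_2 = f - f_1$, this yields $f_1 \geq 2 s'$.

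The heart of the argument, which I expect to be the main obstacle, is an integrality observation about (b). If $y(F) = k$ is tight, subtracting the integer contribution from coordinates with $y^*_i \in \{0,1\}$ gives $\sum_{i \in F_{\mathrm{frac}}} y^*_i \in \Z$. Inside $\bigcup_j F_j$ the fractional sum equals $s'$ by (a), so $\sum_{i \in F_{\mathrm{frac}} \setminus \bigcup_j F_j} y^*_i \in \Z$ as well. A sum of $f_2$ values in $(0,1)$ cannot be integral when $f_2 = 1$, so (b) tight forces $f_2 \neq 1$; when $f_2 = 0$ the same equation is implied by (a) and contributes no rank. Thus (b) contributes rank at most $1$, and only when $f_2 \geq 2$.

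The proof then finishes with a case split on $f_2$. When $f_2 \leq 1$, only (a) and (c) can contribute, so $f \leq s' + 1$; combined with $f \geq 2s' + f_2$ this gives $s' + f_2 \leq 1$, whose three admissible pairs $(s', f_2) \in \{(0,0),(0,1),(1,0)\}$ yield $f \in \{0,1,2\}$. When $f_2 \geq 2$, we have $f \leq s' + 2$ while $f \geq 2s' + 2$, which forces $s' = 0$, hence $f_1 = 0$ and $f = f_2 \leq 2$. In every case $f \leq 2$. Without the $f_2 = 1$ integrality observation a naive rank count would only give $f \leq s' + 2 = 3$ when $s' = 1$, so that observation is what makes the bound tight at $2$ rather than $3$.
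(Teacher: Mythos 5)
Your proof is correct and takes essentially the same route as the paper's: a rank/counting argument at the vertex, using the disjointness of $\{F_j : j \in C^*\}$, the fact that each relevant tight constraint $y(F_j)=1$ must contain at least two fractional coordinates, and the integrality observation that $y(F)\leq k$ can only supply extra rank when at least two fractional coordinates lie outside $\bigcup_{j\in C^*}F_j$. The paper packages this as a contradiction for $n'\geq 3$ via the count $n'/2+1<n'$, while you give an equivalent explicit case analysis on $f_2$; the underlying ideas coincide.
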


	\begin{proof}
		Assume $y^*$ has $n' \geq 3$ fractional values. We pick $|F|$ independent tight inequalities in \eqref{LP:iter} that defines $y^*$. Without loss of generality, we can assume if $y^*_i \in \set{0, 1}$, then the constraint \eqref{LPiter:6} for $i$ is picked. Thus, there are exactly $|F|-n'$ tight constraints from among \eqref{LPiter:6} and $n'$ tight constraints from among \eqref{LPiter:1}, \eqref{LPiter:y-full} or \eqref{LPiter:coverage}. Each tight constraint of form~\eqref{LPiter:y-full} should contain at least 2 fractional variables, and the tight constraint \eqref{LPiter:1} should contain 2 fractional variables, that are not contained in any tight constraint of form~\eqref{LPiter:y-full}. Thus, the number of tight constraints of form \eqref{LPiter:1}, \eqref{LPiter:y-full} or \eqref{LPiter:coverage} is at most $n'/2 + 1$. This is less than $n'$ if $n' \geq 3$. Thus, $n' \leq 2$.
	\end{proof}


\begin{lemma}
	\label{lem:y-valid}
	After any step of the algorithm, the maintained solution $y^*$ is feasible to \eqref{LP:iter}.
\end{lemma}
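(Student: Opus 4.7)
The plan is to induct on the number of atomic steps of Algorithm~\ref{alg:iter}, with the base case handled by Lemma~\ref{lem:initial-y-star-good}, and the inductive step verifying, one atomic operation at a time, that no constraint of \eqref{LP:iter} is violated. The operations that can possibly break feasibility are (i) re-solving the LP in Step~\ref{state:iter-solve-LP} (trivial, the new $y^*$ is the optimum over the current polytope), (ii) Step~\ref{state:iter-move-j-to-full} which moves $j$ from $\cpart$ into $\cfull$ and defines $B_j$, (iii) Step~\ref{state:iter-decrease-lj} which decrements $\ell_j$, replaces $F_j$ by $B_j$ and shrinks $B_j$, and (iv) the insertions/deletions in $\textsf{update-}C^*(j)$. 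Throughout, the key observation is that $y^*$ itself is not changed by (ii), (iii), or (iv); only the sets indexing the constraints change, so feasibility reduces to checking that each constraint that newly appears is implied by a constraint already known to hold.

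For operation (ii), $j$ is moved only when $y^*(F_j)=1$. The old constraint \eqref{LPiter:y-part} for $j$ disappears, and a new constraint \eqref{LPiter:y-B}, namely $y^*(B_j)\le 1$, appears. Since the freshly defined $B_j\subseteq F_j$, this holds. The coverage constraint \eqref{LPiter:coverage} is unchanged in value because $|\cfull|$ increases by $1$ while $\sum_{j'\in\cpart}y^*(F_{j'})$ decreases by exactly $y^*(F_j)=1$. For operation (iii), $\ell_j$ is decreased only when $y^*(B_j)=1$, i.e., when the old $B_j$ already has total mass $1$; after the update, the new $F_j$ equals the old $B_j$, so $y^*(\text{new }F_j)=1$, which preserves \eqref{LPiter:y-full} if $j\in C^*$, and trivially implies $y^*(\text{new }B_j)\le 1$ for the new, still smaller inner-ball, preserving \eqref{LPiter:y-B}. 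No other constraints are touched: \eqref{LPiter:1} is unaffected, and no $y(F_{j'})$ for $j'\in\cpart$ is involved.

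For operation (iv), $\textsf{update-}C^*(j)$ only removes clients $j'$ from $C^*$ (which drops constraints of type \eqref{LPiter:y-full} and is always safe, recalling from Claim~\ref{cl:invts} that $j'\in C^*\setminus S_0\subseteq \cfull$ still satisfies the weaker \eqref{LPiter:y-B}) and inserts $j$ into $C^*$, which adds the constraint $y^*(F_j)=1$. But the call to $\textsf{update-}C^*(j)$ is made either immediately after Step~\ref{state:iter-move-j-to-full} (where by assumption $y^*(F_j)=1$) or immediately after Step~\ref{state:iter-decrease-lj} (where the new $F_j$ equals the old $B_j$, and $y^*(\text{old }B_j)=1$ was the trigger); in both cases the new equality constraint is already satisfied by $y^*$. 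Virtual clients in $S_0$ never leave $C^*$ by Claim~\ref{cl:invts}, so the constraint $y^*(F_j)=1$ they impose is maintained from initialization onwards via Property~\ref{property:y-s0} of Lemma~\ref{properties:ystar-after-splitting-lemma}.

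The only mildly delicate point, and hence the step I expect to require the most care, is making sure the bookkeeping across $\cstar$, $\cfull$, $\cpart$, $F_j$ and $B_j$ is consistent at the exact atomic instant each operation fires, so that the constraints being introduced are the ones actually implied by the triggering tightness condition. Once this is spelled out, the verification is a straightforward per-constraint check using $B_j\subseteq F_j$, monotonicity of $F_j$ under shrinking, and the identity that the coverage sum is preserved when $y^*(F_j)=1$ triggers a move from $\cpart$ to $\cfull$.
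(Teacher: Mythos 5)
Your proposal is correct and follows essentially the same route as the paper's proof: only the two update steps matter, and the tightness conditions that trigger them ($y^*(F_j)=1$ resp.\ $y^*(B_j)=1$) immediately imply the newly introduced or modified constraints, since $B_j\subseteq F_j$ and the new $F_j$ equals the old $B_j$. Your write-up is in fact slightly more thorough than the paper's, which leaves the coverage constraint and the $\textsf{update-}C^*$ bookkeeping implicit, but the underlying argument is identical.
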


	\begin{proof}
		We only need to consider the time point after we run Step~\ref{state:iter-move-j-to-full} or \ref{state:iter-decrease-lj} during the algorithm. In Step~\ref{state:iter-move-j-to-full}, we move $j$ from $\cpart$ to $\cfull$ and define $B_j$. We were guaranteed that $y^*(F_j) = 1$ and thus after the step we have $y^*(B_j) \leq y^*(F_j) \leq 1$. Before Step~\ref{state:iter-decrease-lj}, we were guaranteed $y^*(B_j) = 1$. Thus, after the step, we have $y^*(F_j) = 1$ and $y^*(B_j) \leq 1$.
	\end{proof}

\begin{lemma} \label{lem:cost}
	In every iteration, the~\eqref{LP:iter} objective value of the solution $y^*$ is non-increasing.
\end{lemma}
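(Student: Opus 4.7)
The plan is to prove the stronger statement that the optimum $y^*$ found at the start of an iteration achieves \emph{exactly the same} value under the updated \eqref{LP:iter} objective as under the pre-update one. Combined with Lemma~\ref{lem:y-valid}, which already gives that $y^*$ remains feasible after any update, this immediately implies that the optimum computed at the start of the next iteration is at most the old optimum. The $\textsf{update-}C^*$ subroutine is trivial to handle separately: it only introduces equality constraints of the form~\eqref{LPiter:y-full} and does not touch the objective function, so it leaves the value at $y^*$ unchanged. Thus I only need to analyze Steps~\ref{state:iter-move-j-to-full} and~\ref{state:iter-decrease-lj}, and in each case only the contribution of the single updated client $j$ needs attention.

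For Step~\ref{state:iter-move-j-to-full}, client $j$ moves from $\cpart$ to $\cfull$, so its contribution changes from $\sum_{i \in F_j}{d'}^q(i,j)\,y^*_i$ to $\sum_{i \in B_j}{d'}^q(i,j)\,y^*_i + (1 - y^*(B_j))\,D^q_{\ell_j}$. Using the triggering condition $y^*(F_j) = 1$, the difference (old minus new) simplifies to
\[ \sum_{i \in F_j \setminus B_j} \bigl({d'}^q(i,j) - D^q_{\ell_j}\bigr)\, y^*_i. \]
Here the discretization from Section~\ref{sec:discretize} plays the key role: every $d'(i,j)$ lies in $\{D_{-1}, D_0, D_1, \ldots\}$, and combining the definition of $B_j$ with Claim~\ref{cl:invts} forces $D_{\ell_j - 1} < d'(i,j) \leq D_{\ell_j}$ for every $i \in F_j \setminus B_j$; hence $d'(i,j) = D_{\ell_j}$ exactly, and each term in the sum vanishes.

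Step~\ref{state:iter-decrease-lj} is handled analogously. This time the triggering condition $y^*(B_j) = 1$ kills the $(1 - y^*(B_j))\,D^q_{\ell_j}$ term in the old contribution outright, so the old contribution reduces to $\sum_{i \in B_j}{d'}^q(i,j)\,y^*_i$. The newly defined inner ball $\{i \in B_j : d'(i,j) \leq D_{\ell_j - 2}\}$ differs from the old $B_j$ only on indices with $D_{\ell_j - 2} < d'(i,j) \leq D_{\ell_j - 1}$, which by the same discrete-ladder argument forces $d'(i,j) = D_{\ell_j - 1}$, exactly the new value of $D_{\ell_j}$. The arithmetic then matches and the contribution is preserved.

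The only subtle point, and the one place where the proof could fail, is the direction of the inequality. Without the distance discretization one would only have $d'(i,j) \leq D_{\ell_j}$ for $i \in F_j \setminus B_j$, which would show the new objective is at \emph{least} the old one---the wrong direction. The fact that distances live in the discrete ladder $\{D_\ell\}$ upgrades this inequality to the equality $d'(i,j) = D_{\ell_j}$ on the relevant support, and that is what makes the value-preservation (and hence the non-increase) go through.
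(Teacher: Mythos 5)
Your proof is correct and follows essentially the same route as the paper: show the contribution of the updated client is exactly preserved (using the tight constraint $y^*(F_j)=1$ or $y^*(B_j)=1$ together with the fact that discretized distances in $F_j\setminus B_j$ equal $D_{\ell_j}$), and then invoke feasibility (Lemma~\ref{lem:y-valid}) so that re-solving \eqref{LP:iter} can only decrease the value. Your closing remark correctly isolates why the discretization is what makes the exact preservation work.
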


	\begin{proof}
		In Statement~\ref{state:iter-solve-LP}, we resolve the LP and thus the value of $y^*$ can only go down (due to Lemma~\ref{lem:y-valid}). Thus, we only need to consider the situation in which we change the objective function, which happens in Step~\ref{state:iter-move-j-to-full} and \ref{state:iter-decrease-lj}. Indeed, we show that the objective value is not affected by these operations.
		\begin{enumerate}	
			\item Consider the situation where a client $j$ is moved to $\cfull$ from $\cpart$ in Statement~\ref{state:iter-move-j-to-full}. The cost that $j$ is contributing in the old LP is
\ifdefined\fullversion
			 \begin{align*}
				 \sum_{i\in F_j}{d'}^q(i,j)y^*_i
				 &=\sum_{i\in F_j:{d'}(i,j)\leq D_{\ell_j-1}}{d'}^q(i,j)y^*_i+\sum_{i\in F_j:{d'}(i,j) = D_{\ell_j}}{d'}^q(i,j)y^*_i \\
				 &= \sum_{i\in B_j}{d'}^q(i,j)y^*_i+(1-y^*(B_j))D_{\ell_j}^q,
			\end{align*}
\else
			 \begin{align*}
				 &\quad \sum_{i\in F_j}{d'}^q(i,j)y^*_i \\
				 &=\sum_{i\in F_j:{d'}(i,j)\leq D_{\ell_j-1}}{d'}^q(i,j)y^*_i+\sum_{i\in F_j:{d'}(i,j) = D_{\ell_j}}{d'}^q(i,j)y^*_i \\
				 &= \sum_{i\in B_j}{d'}^q(i,j)y^*_i+(1-y^*(B_j))D_{\ell_j}^q,
			\end{align*}
\fi
			   which is exactly the contribution of $j$ to the new objective function.
			\item Then, consider Statement~\ref{state:iter-decrease-lj}. To avoid confusion, let $\ell_j$, $F_j$ and $B_j$ correspond to the values before executing the statement, and $\ell'_j$, $F'_j$ and $B'_j$ correspond to the values after executing the statement.  The contribution of $j$ to the old function is
\ifdefined\fullversion
			\begin{align*}
				&\qquad \sum_{i \in B_j}{d'}^q(i, j) y^*_i + \left(1-y(B_j)\right)D^q_{\ell_j}
				= \sum_{i \in B_j}{d'}^q(i, j) y^*_i = \sum_{i \in F'_j}{d'}^q(i, j) y^*_i\\
				&= \sum_{i \in F'_j: d'(i, j) \leq D_{\ell'_j-1}} {d'}^q(i, j)y^*_i + \sum_{i \in F'_j: d'(i, j) = D_{\ell'_j}} {d'}^q(i, j) y^*_i
				= \sum_{i \in B'_j} {d'}^q(i, j)y^*_i + \left(1-y(B'_j)\right)D^q_{\ell'_j},
			\end{align*}
\else
			\begin{align*}
				&\qquad \sum_{i \in B_j}{d'}^q(i, j) y^*_i + \left(1-y(B_j)\right)D^q_{\ell_j}\\
				&= \sum_{i \in B_j}{d'}^q(i, j) y^*_i = \sum_{i \in F'_j}{d'}^q(i, j) y^*_i \\
				&= \sum_{i \in F'_j: d'(i, j) \leq D_{\ell'_j-1}} {d'}^q(i, j)y^*_i + \sum_{i \in F'_j: d'(i, j) = D_{\ell'_j}} {d'}^q(i, j) y^*_i \\
				&= \sum_{i \in B'_j} {d'}^q(i, j)y^*_i + \left(1-y(B'_j)\right)D^q_{\ell'_j},
			\end{align*}
\fi
			which is the contribution of $j$ to the new function. \qedhere
		\end{enumerate}
	\end{proof}

\begin{lemma}
\label{lem:5r}
	At the conclusion of the algorithm, for every $j\in \cfull$, there exists at least $1$ unit of open facilities within distance $\frac{3\tau-1}{\tau-1}D_{\ell_{j}}$ from $j$. Formally, $\sum_{i : d(i,j) \leq \frac{3\tau-1}{\tau-1}D_{\ell_{j}}} y^*_i \geq 1$.
\end{lemma}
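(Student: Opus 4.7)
The plan is to produce, for each $j \in \cfull$ at termination, a single client $j^\circ \in \cstar$ such that (i) $y^*(F_{j^\circ}) = 1$ by constraint~\eqref{LPiter:y-full}, and (ii) $F_{j^\circ}$ lies entirely inside the ball of radius $\frac{3\tau-1}{\tau-1}D_{\ell_j}$ around $j$. Since $F_{j^\circ}\subseteq \Ball_F(j^\circ, D_{\ell_{j^\circ}})$ by Property~4 of Claim~\ref{cl:invts} (and $d\le d'$ by Lemma~\ref{lemma:discrete}), this reduces the claim to controlling $d(j, j^\circ)$ in the true metric.

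First I would anchor the analysis at the last time $t_0$ when $\textsf{update-}C^*(j)$ was invoked. After $t_0$ neither $\ell_j$ nor $F_j$ changes, because any such change in Steps~\ref{state:iter-move-j-to-full} or~\ref{state:iter-decrease-lj} triggers another call. If $j$ was added to $\cstar$ at $t_0$ put $j^*_0:=j$; otherwise the failed condition of $\textsf{update-}C^*$ hands us a witness $j^*_0\in \cstar$ at time $t_0$ with $F_{j^*_0}\cap F_j\neq\emptyset$ and $\ell_{j^*_0}^{(t_0)}\le \ell_j$. In either case $\ell_{j^*_0}^{(t_0)}\le \ell_j$.

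Next I would chase the chain of evictions. If $j^*_{k-1}$ is removed from $\cstar$ at some time $t_k$, it can only be because some $j^*_k$ is inserted in $\textsf{update-}C^*(j^*_k)$, and the insertion condition forces $F_{j^*_k}^{(t_k)}\cap F_{j^*_{k-1}}^{(t_k)}\neq\emptyset$ together with $\ell_{j^*_k}^{(t_k)} < \ell_{j^*_{k-1}}^{(t_k)}$. Iterating produces $j^*_0, j^*_1, \ldots, j^*_K$ with $j^*_K\in \cstar$ at termination. Setting $\ell_k:=\ell_{j^*_k}^{(t_k)}$ and using that radius levels only decrease over time, I get the strictly decreasing chain $\ell_K<\ell_{K-1}<\cdots<\ell_0\le \ell_j$, hence $D_{\ell_k}\le D_{\ell_j}/\tau^k$ from $D_\ell=a\tau^\ell$ (with the convention $D_{-1}=0$ covering the boundary).

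Finally I would telescope triangle inequalities: any common facility of $F_{p}^{(t)}\cap F_{p'}^{(t)}$ gives $d(p,p')\le D_{\ell_p^{(t)}}+D_{\ell_{p'}^{(t)}}$. Taking any $i \in F_{j^*_K}$ at termination, together with $d(i,j^*_K)\le D_{\ell_{j^*_K}^{(T)}}\le D_{\ell_K}$ (since $\ell_{j^*_K}$ can only drop after $t_K$), I obtain
\[
d(i,j)\;\le\;D_{\ell_K}+\sum_{k=1}^{K}\bigl(D_{\ell_{k-1}}+D_{\ell_k}\bigr)+\bigl(D_{\ell_0}+D_{\ell_j}\bigr)\;=\;D_{\ell_j}+2\sum_{k=0}^{K}D_{\ell_k}\;\le\;D_{\ell_j}\Bigl(1+\tfrac{2\tau}{\tau-1}\Bigr)\;=\;\tfrac{3\tau-1}{\tau-1}\,D_{\ell_j},
\]
which together with $y^*(F_{j^*_K})=1$ finishes the proof. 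The main bookkeeping to be careful about is verifying that $t_0$ really does freeze $\ell_j$ and $F_j$ thereafter, and that the strict inequality $\ell_k<\ell_{k-1}$ (not just $\le$) holds at every transition, since it is what powers the geometric decay into a closed-form constant.
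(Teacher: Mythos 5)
Your proposal is correct and takes essentially the same approach as the paper: the paper packages your unrolled eviction chain as an induction on radius levels (its Claim~\ref{cl:evict}), using the same anchoring at the last call to $\textsf{update-}C^*(j)$, the same strict decrease of radius levels at each eviction via the insertion condition, and the same intersecting-ball triangle inequalities, with the geometric decay summing to the identical bound $2D_{\ell_j}+\frac{\tau+1}{\tau-1}D_{\ell_j}=\frac{3\tau-1}{\tau-1}D_{\ell_j}$. No gaps beyond the bookkeeping you already flag, which matches the paper's treatment.
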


	\begin{proof}
		The following claims will serve useful in the proof of the lemma.
		
		\begin{claim} \label{cl:dj1j2}
		If at any stage, $F_j \cap F_{j'} \neq \emptyset$, then we have that $d(j,j') \leq D_{\ell_j}+ D_{\ell_{j'}}$.
		\end{claim}
		
		\begin{claim} \label{cl:evict}
			Consider a client $j$ which is added to $\cstar$ at Step 3 of $\textsf{update-}C^*(j)$ when its radius-level $\ell_j$ was some $\ell \geq -1$. Then, there exists at least $1$ unit of open facilities in the final solution $y^*$ within distance $\frac{\tau+1}{\tau-1}D_\ell$ from $j$.
		\end{claim}
	
		\begin{proof}
			The proof is by induction on radius-levels. Indeed, clients with radius-level $-1$ can never be removed from $\cstar$, so if such clients are added to $\cstar$, then they remain in $\cstar$ at the end, and so the claim is true. So suppose the claim is true up to some radius-level $\ell$-1, and consider a client $j$ which got added to $\cstar$ with radius-level $\ell$ at some time in Step 3 of $\textsf{update-}C^*(j)$. Then, either it remains in $\cstar$ till the end, in which case we are done (because its $\ell_j$ is non-increasing over time), or another client $j'$ was subsequently added to $\cstar$ such that $\ell_{j'} < \ell_j$ and $F_j \cap F_{j'} \neq \emptyset$. Here, $\ell_j$ and $\ell_{j'}$ are both the radius-levels of $j$ and $j'$ at the time $j$ was removed by $j'$. Since these values are non-increasing over time, we get that $\ell_{j'} < \ell_j \leq \ell$. Now, in this latter case, note that by the induction hypothesis there exists one unit of facility in the final solution $y^*$ within distance $\frac{\tau+1}{\tau-1}D_{\ell_{j'}} \leq \frac{\tau+1}{\tau-1} \frac{D_{\ell_{j}}}{\tau} \leq \frac{\tau+1}{\tau-1} \frac{D_{\ell}}{\tau}$ from $j'$. Moreover, we know $d(j,j') \leq D_{\ell} + D_{\ell}/\tau$ from Claim~\ref{cl:dj1j2}. The proof then follows from the triangle inequality and $\frac{\tau+1}{\tau-1}\frac{D_\ell}{\tau} + D_\ell + \frac{D_\ell}{\tau} = \frac{\tau+1}{\tau-1}D_\ell$.
		\end{proof}

		Now the proof of~\Cref{lem:5r} is simple: consider a client $j$ which belongs to $\cfull$ at the end of the algorithm, and consider the last time when $\textsf{update-}C^*(j)$ is invoked, and let $\ell_j$ denote its radius-level at this time. Note that the value $\ell_j$ has not changed subsequently.
		
		At this point, there are two cases depending on whether $j$ was added to $\cstar$ or not. If it was added, we are done by Claim~\ref{cl:evict}. In case it was not added, then there must be a client $j'$ which belonged to $\cstar$ such that $F_{j'} \cap F_j \neq \emptyset$ and $\ell_{j'} \leq \ell_j$. Then, note that $d(j,j') \leq 2 D_{\ell_j}$ from Claim~\ref{cl:dj1j2}, and moreover, by Claim~\ref{cl:evict}, there is at least $1$ unit of open facilities in the final solution $y^*$ within distance $\frac{\tau+1}{\tau-1}D_{\ell_{j'}} \leq \frac{\tau+1}{\tau-1}D_{\ell_{j}}$ from $j'$. We can then complete the proof by applying triangle inequality.
	\end{proof}

With all the above lemmas, we can conclude this section with our main theorem.

\begin{theorem} \label{thm:rounding}
	When the algorithm terminates, it returns a valid solution $y^* \in [0, 1]^F$ to \eqref{LP:basic} with at most two fractional values. $y^*_i = 1$ for every $i \in S_0$. Moreover, the cost of $y^*$ w.r.t \eqref{LP:basic}, in expectation over the randomness of $a$ is at most $\alpha_q \tilde U'$.
\end{theorem}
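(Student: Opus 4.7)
I would prove the three claims of the theorem in sequence: termination and almost-integrality, feasibility for \eqref{LP:basic} with $y^*_i=1$ on $S_0$, and the expected-cost bound.

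\emph{Termination and structure of $y^*$.} Each non-terminating iteration of the main loop executes exactly one of Step~\ref{state:iter-move-j-to-full}, which strictly shrinks $|\cpart|$, or Step~\ref{state:iter-decrease-lj}, which decreases some $\ell_j$ by one; since $\ell_j \ge -1$ throughout (Claim~\ref{cl:invts}) and the initial values are finite, both events can happen only finitely many times, so the algorithm terminates. When it exits, the break condition says that neither \eqref{LPiter:y-B} nor \eqref{LPiter:y-part} is tight at $y^*$, so Lemma~\ref{lem:2fractional} gives at most two fractional coordinates. Property~1 of Claim~\ref{cl:invts} asserts $S_0 \subseteq \cstar$ at all times, so \eqref{LPiter:y-full} forces $y^*(F_i)=1$ for the virtual client at each $i \in S_0$; because $F_i$ consists solely of copies of $i$ collocated at distance zero, one can collapse these copies (without changing any distance, and without increasing the number of fractional coordinates) to read $y^*_i=1$ for every $i \in S_0$.

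\emph{Feasibility for \eqref{LP:basic}.} I would construct compatible $x_{ij}$ as follows: for every $j\in C$ and $i\in F_j$, set $x_{ij}=y^*_i$; additionally, for each $j\in\cfull$ with $y^*(F_j)<1$, assign the residual mass $1-y^*(F_j)$ to facilities in $\Ball_F(j,\tfrac{3\tau-1}{\tau-1}D_{\ell_j})\setminus F_j$, which by Lemma~\ref{lem:5r} collectively carry $y^*$-mass at least $1-y^*(F_j)$. Then $x_{ij}\le y^*_i$ and $\sum_i x_{ij}\le 1$ follow from the construction and \eqref{LPiter:y-part}, $y^*(F)\le k$ is \eqref{LPiter:1}, and $\sum_{i,j}x_{ij}=|\cfull|+\sum_{j\in\cpart}y^*(F_j)\ge m$ by \eqref{LPiter:coverage}.

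\emph{Cost bound.} The pivotal identity is that, for any $j\in\cfull$, since $d'(i,j)=D_{\ell_j}$ for every $i\in F_j\setminus B_j$,
\[
\sum_{i\in B_j}{d'}^q(i,j)\,y^*_i+(1-y^*(B_j))\,D^q_{\ell_j}\;=\;\sum_{i\in F_j}{d'}^q(i,j)\,y^*_i+(1-y^*(F_j))\,D^q_{\ell_j}.
\]
Combined with $d\le d'$ and the assignment above, this yields a per-client bound: the \eqref{LP:basic} contribution of any client $j$ is at most $\left(\tfrac{3\tau-1}{\tau-1}\right)^q$ times its \eqref{LP:iter} contribution. For $j\in\cpart$ and for $j\in\cfull$ with $y^*(F_j)\ge 1$ the ratio is in fact $1$ (in the latter, connect one unit of $j$ using the cheapest fractional mass inside $F_j$, namely all of $B_j$ and $1-y^*(B_j)$ of $F_j\setminus B_j$, each at distance $\le D_{\ell_j}$). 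For $j\in\cfull$ with $y^*(F_j)<1$, the residual $1-y^*(F_j)$ mass is paid for at rate at most $\left(\tfrac{3\tau-1}{\tau-1}\right)^q D^q_{\ell_j}$ by Lemma~\ref{lem:5r}, and factoring out $\left(\tfrac{3\tau-1}{\tau-1}\right)^q \ge 1$ over both terms matches cost$_{\text{iter}}(j)$ via the identity. Summing, invoking Lemma~\ref{lem:cost} (the \eqref{LP:iter} objective is non-increasing) and Lemma~\ref{lem:initial-y-star-good} (the expected initial \eqref{LP:iter} objective is at most $\tfrac{\tau^q-1}{q\ln\tau}\tilde U'$), the expected \eqref{LP:basic} cost is at most $\left(\tfrac{3\tau-1}{\tau-1}\right)^q \cdot \tfrac{\tau^q-1}{q\ln\tau}\tilde U' = \alpha_q \tilde U'$, matching the definitions of $\alpha_1$ and $\alpha_2$. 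The trickiest part of the argument, in my view, is exactly this per-client bound for full clients with $y^*(F_j)<1$: a naive lower bound on the iter cost via $(1-y^*(B_j))D^q_{\ell_j}$ becomes useless as $y^*(B_j)\to 1$; the identity above resolves this by surfacing the slack $1-y^*(F_j)$ that Lemma~\ref{lem:5r} is designed to control.
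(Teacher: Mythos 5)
Your proof is correct and follows essentially the same route as the paper: almost-integrality via Lemma~\ref{lem:2fractional}, the observation that the virtual clients of $S_0$ stay in $C^*$ so $y^*(F_i)=1$ there, an explicit assignment $x$ whose residual mass is placed using Lemma~\ref{lem:5r}, and the per-client ratio $\left(\frac{3\tau-1}{\tau-1}\right)^q$ combined with Lemmas~\ref{lem:initial-y-star-good} and~\ref{lem:cost} to get $\alpha_q\tilde U'$. The only deviation is that you route full clients' mass through all of $F_j$, which forces your identity and the case split on $y^*(F_j)\ge 1$ (the paper instead assigns $y^*_i$ only on $B_j$ and sends exactly $1-y^*(B_j)$ to the far ball, avoiding both); note that your blanket rule $x_{ij}=y^*_i$ for all $i\in F_j$ would by itself violate $\sum_i x_{ij}\le 1$ when $y^*(F_j)>1$ for a full client $j\notin C^*$, a situation the LP does not exclude, but your later refinement (connect one unit using $B_j$ plus $1-y^*(B_j)$ of $F_j\setminus B_j$) correctly repairs this.
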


	\begin{proof}
		Firstly, at the culmination of our iterative algorithm, note that none of the constraints~\eqref{LPiter:y-B} and~\eqref{LPiter:y-part} are tight for the final $y^*$, otherwise our algorithm would not have terminated. As a result, it follows from~\Cref{lem:2fractional} that there are at most two strictly fractional $y^*$ variables.
		
		As for bounding the cost, we show this by exhibiting a setting of $x_{ij}$ for the given $y^*$ with the desired bound on cost.  To this end, for clients in $\cpart$, set $x_{ij} = y^*_i$ for all $i \in F_j$. For clients in $\cfull$, set $x_{ij} = y^*_i$ if $i \in B_j$, and set the remaining $1 - y^*(B_j)$ fractional amount arbitrarily among open facilities within distance $\frac{3\tau-1}{\tau-1} D_{\ell_j}$ from $j$ ---~\Cref{lem:5r} guarantees the existence of such open facilities.
		
		 We now verify that such a solution satisfies the constraints of~\eqref{LP:basic}. Indeed, it is easy to see that $\sum_{i} y^*_i \leq k$, and also that $x_{i,j} \leq y^*_i$. As for the coverage constraint, note that for full clients $j$, we have $\sum_{i} x_{i,j} = 1$ and for partial clients $j$, we have $\sum_{i} x_{i,j} = \sum_{i \in F_j} y^*_i = y^*(F_j) \leq 1$. Finally the total coverage $\sum_{j,i} x_{i,j} = |\cfull| + \sum_{j \in \cpart}y^*(F_j) \geq m$ since $y^*$ satisfies~\eqref{LPiter:coverage}.  Also, every virtual client $j \in S_0$ is always in $C^*$ and we always have $F_j$ is the set of facilities collocated with $j$. Eventually we have $y^*(F_j) = 1$. W.l.o.g we can assume $y^*_i = 1$ for every $i \in S_0$.
		
		To complete the proof, we compute the cost of the solution we have constructed. Indeed, the contribution of a partial client $j$ in our solution is at most their contribution to the objective value of $y^*$ for~\eqref{LP:iter}. This is because $\sum_{i} x_{i,j} d(i,j) \leq \sum_{i} x_{i,j} d'(i,j) = \sum_{i \in F_j} d'(i,j) y^*_i$. Now consider a full client $j \in \cfull$. By our setting we have $x_{i,j} = y^*_i$ for $i \in B_j$ and hence $\sum_{i \in B_j} x_{i,j} d(i,j) \leq \sum_{i \in B_j} x_{i,j} d'(i,j) = \sum_{i \in B_j} y^*_i d'(i,j)$. Hence, the contribution to the objective value of assignment within $B_j$ for full clients has a one-to-one correspondence with their contribution in the auxiliary LP~\eqref{LP:iter}. It remains to bound the connection cost to facilities outside $B_j$. For bounding this, note that the extend of outside connections for each client $j$ is exactly $1 - y^*(B_j)$, and from~\Cref{lem:5r}, we get that all such connections are at a distance at most $\frac{3\tau-1}{\tau-1} D_{\ell_j}$ from $j$, incurring a cost of $\frac{(3\tau-1)^q}{(\tau-1)^q} D^q_{\ell_j}$. In contrast, these connections contribute a total of $D^q_{\ell_j} (1 - y^*(B_j))$  to the objective value of $\eqref{LP:iter}$. Thus, the cost of our solution w.r.t \eqref{LP:basic} is at most $\frac{(3\tau -1)^q}{(\tau - 1)^q}$ times the cost of $y^*$ w.r.t \eqref{LP:iter}, which in expectation is at most $\frac{\tau^q-1}{q\ln \tau}$ times $\tilde U'$, by Lemma~\ref{lem:initial-y-star-good} and \ref{lem:cost}. Hence, the cost of our constructed almost-integral solution w.r.t. \eqref{LP:basic} is at most $\frac{(3\tau -1)^q}{(\tau - 1)^q}\cdot \frac{\tau^q-1}{q\ln \tau} \cdot \tilde U'$ in expectation. By the way we choose $\tau$, this is exactly $\alpha_q \tilde U'$.
	\end{proof}
{\em Pseuso-approximation Setting:} for this setting, we simply change the two possible fractionally open facilities to be integrally open, and hence we can obtain an $\alpha_q$-approximation for \kmwo/\kmeanswo with $k+1$ open facilities. 

\section{Opening exactly $k$ facilities}
\label{sec:open-k}
In this section, we show how to convert the almost-integral $y^*$ computed by the iterative rounding~\Cref{alg:iter} into a fully integral solution with bounded cost. Indeed, this last step of converting an almost-integral solution into a fully-integral one is why we needed the three steps of pre-processing and the stronger LP relaxation in the first place.
From~\Cref{thm:rounding}, we know that the final $y^*$ computed has at most two fractional values and moreover, the expected cost of $y^*$ to the \eqref{LP:basic} is at most $\alpha_q \tilde U'$.

\begin{claim} \label{cl:i1i2}
The solution $y^*$ either is already fully integral, or has exactly two fractional values, say, $y_{i_1}$ and $y_{i_2}$. Moreover it holds that $y^*_{i_1} + y^*_{i_2} = 1$.
\end{claim}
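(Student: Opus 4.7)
The plan is to study the vertex structure of $y^*$ in the terminating iteration of Algorithm~\ref{alg:iter}, exploiting the disjointness of $\{F_j : j \in C^*\}$ from Claim~\ref{cl:invts}. Let $T := \{i : y^*_i \in (0,1)\}$; we have $|T| \leq 2$ by Theorem~\ref{thm:rounding}. At termination, neither \eqref{LPiter:y-B} nor \eqref{LPiter:y-part} is tight, so the only tight constraints beyond the box constraints $y^*_i \in \{0, 1\}$ (for $i \notin T$) that can pin the fractional coordinates are drawn from \eqref{LPiter:y-full}, \eqref{LPiter:1}, and \eqref{LPiter:coverage}.

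First I would make the following integrality observation: since $y^*_i \in \{0,1\}$ for every $i \notin T$, any tight constraint of the form $y^*(S) = c$ with $c \in \Z$ (namely $y^*(F_j) = 1$ for $j \in \cstar$, or $y^*(F) = k$ when \eqref{LPiter:1} is tight) implies $\sum_{i \in S \cap T} y^*_i \in \Z$. Consequently, if $|S \cap T| = 1$ then the lone fractional value is forced to be an integer, a contradiction. Combined with the disjointness of $\{F_j\}_{j \in C^*}$ from Claim~\ref{cl:invts}, this means at most one $j^* \in \cstar$ has $F_{j^*} \cap T \neq \emptyset$, and for that $j^*$ we must have $F_{j^*} \supseteq T$.

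For the main case $|T| = 2$, write $T = \{i_1, i_2\}$. If some $j^* \in \cstar$ has $F_{j^*} \supseteq T$, the tight equality $y^*(F_{j^*}) = 1$ immediately gives $y^*_{i_1} + y^*_{i_2} \in \Z \cap (0, 2) = \{1\}$. Otherwise every tight \eqref{LPiter:y-full} constraint is implied by the fixed integer coordinates (hence dependent on the box constraints), and vertex-ness demands two independent tight constraints involving $T$; the second constraint can only be \eqref{LPiter:1}, whose tightness yields $y^*_{i_1} + y^*_{i_2} \in \Z \cap (0, 2) = \{1\}$ by the same integrality argument.

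The main obstacle is the case $|T| = 1$. Here \eqref{LPiter:y-full} and \eqref{LPiter:1} cannot serve as the pinning constraint (each would force the lone fractional value integer), so only \eqref{LPiter:coverage} can be tight on that coordinate, and its coefficient $|\{j \in \cpart : i_1 \in F_j\}|$ must be at least $2$. To fit this case into the statement, I would split the lone fractional facility $i_1$ into two collocated copies carrying $y$-values $y^*_{i_1}$ and $1 - y^*_{i_1}$ respectively, placing the second copy in no $F_j$. The split preserves every \eqref{LPiter:y-full} and \eqref{LPiter:coverage} value, and the extra opening $1 - y^*_{i_1}$ is absorbed by the slack $y^*(F) < k$ (which must hold, since otherwise \eqref{LPiter:1} would be tight and force $y^*_{i_1}$ integer). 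After this purely cosmetic adjustment, $y^*$ has exactly two fractional values summing to $1$, yielding the claim.
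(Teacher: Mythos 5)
Your proof is correct, but it takes a genuinely different route from the paper's. The paper disposes of the claim in one line: since opening more facilities never hurts (the assignment built in Theorem~\ref{thm:rounding} stays feasible and coverage only grows), one may assume WLOG that $y^*(F)=k$ by raising fractional coordinates; then the at most two fractional values must sum to an integer, which rules out a single fractional coordinate and forces the sum to be $1$. You instead analyze the tight constraints defining the terminating vertex: using the disjointness of $\{F_j\}_{j\in C^*}$ from Claim~\ref{cl:invts} and the integrality of the right-hand sides of \eqref{LPiter:y-full} and \eqref{LPiter:1}, you show that when there are two fractional coordinates, some tight constraint must contain both of them, so $y^*_{i_1}+y^*_{i_2}=1$ already holds at the vertex with no modification of $y^*$ --- a slightly stronger structural statement than the paper proves, at the price of a longer argument. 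In the single-fractional case both proofs ultimately make the same move: exploit the slack $k-y^*(F)\geq 1-y^*_{i_1}$ (which holds because $k-\lfloor y^*(F)\rfloor$ is a positive integer, as \eqref{LPiter:1} cannot be tight there) to add $1-y^*_{i_1}$ units of opening; your facility-splitting is a cosmetic repackaging of the paper's ``open more facilities,'' and you could simplify it by just raising $y^*_{i_1}$ to $1$, which makes the solution fully integral. The one thing worth spelling out (the paper's WLOG needs it implicitly too) is that the modification is harmless downstream: feasibility and cost with respect to \eqref{LP:basic}, the guarantee of Lemma~\ref{lem:5r}, and the sets $F_j,B_j$ are unaffected since the new copy lies in no $F_j$, so the conversion argument of Section~\ref{sec:open-k} applies verbatim with $C_2=\emptyset$ and $J=\emptyset$.
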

\begin{proof}
WLOG we can assume $y^*(F)=k$, as if $y^*(F) < k$, then we could open more facilities in $y^*$. This along with the fact that $y^*$ has at most two fractional values establishes the proof.
\end{proof}

Now, in case $y^*$ is already integral, \Cref{thm:rounding} already gives a solution of cost $\alpha_q\tilde U'$. Hence, 
we focus on the other case, where, by Claim~\ref{cl:i1i2}, there are exactly two fractional facilities, say $i_1$ and $i_2$. 
We assume that we are given as input the $y^*$ output by~\Cref{alg:iter}, along with the sets $F_j$ for $j \in C$, radius-levels $\ell_j$, and the partition of $C$ into $\cpart$ and $\cfull$ maintained by~\Cref{alg:iter} when it terminated.

Let $C_1 = \set{j \in \cpart \, :\, i_1 \in F_j\ \text{and}\, i_2 \notin F_j }$, and similarly let $C_2 = \set{j \in \cpart \, :\, i_1 \notin F_j \, \text{and}\, i_2 \in F_j }$. By renaming $i_1$ and $i_2$ if necessary, we assume $|C_1| \geq |C_2|$.   Our final solution $\hat y$ will be defined as: $\hat{y}_{i_1} = 1$, $\hat{y}_{i_2} = 0$, and $\hat{y}_i = y^*_i$ for $i \notin \{i_1,i_2\}$. Our algorithm will connect $\cfull \cup C_1$ to their respective nearest facilities in $\hat y$.

\begin{lemma} \label{lem:mcov}
The solution described above connects at least $m$ clients.
\end{lemma}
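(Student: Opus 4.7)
The plan is to combine the LP coverage constraint \eqref{LPiter:coverage} with a careful case analysis of the values $y^*(F_j)$ for $j \in \cpart$ at termination. Since $\cfull$ and $C_1$ are disjoint by construction (one is a subset of $\cpart$, the other is disjoint from it), it suffices to show $|\cfull| + |C_1| \geq m$.

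First I would note that at termination of Algorithm~\ref{alg:iter}, no partial client $j \in \cpart$ satisfies $y^*(F_j) = 1$, otherwise the main loop would not have exited. So for every $j \in \cpart$ we have $y^*(F_j) < 1$. Combined with Claim~\ref{cl:i1i2}, which tells us that the only fractional values of $y^*$ are $y^*_{i_1}, y^*_{i_2}$ with $y^*_{i_1} + y^*_{i_2} = 1$, I would do a quick case analysis on $F_j \cap \{i_1, i_2\}$ for a partial client $j$. If $F_j$ contains both $i_1$ and $i_2$, then $y^*(F_j) \geq y^*_{i_1} + y^*_{i_2} = 1$, contradicting $y^*(F_j) < 1$. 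Hence $F_j$ contains at most one of the two fractional facilities, and any further contribution from integrally open facilities in $F_j$ would push $y^*(F_j)$ to at least $1$, which is again impossible. Therefore
\[
y^*(F_j) \;=\; \begin{cases} y^*_{i_1} & \text{if } j \in C_1, \\ y^*_{i_2} & \text{if } j \in C_2, \\ 0 & \text{otherwise.} \end{cases}
\]

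Plugging this into constraint \eqref{LPiter:coverage} gives
\[
|\cfull| + |C_1|\,y^*_{i_1} + |C_2|\,y^*_{i_2} \;\geq\; m.
\]
Using $y^*_{i_1} + y^*_{i_2} = 1$ and the assumed ordering $|C_1| \geq |C_2|$, the left-hand side is at most $|\cfull| + |C_1|(y^*_{i_1} + y^*_{i_2}) = |\cfull| + |C_1|$. Hence $|\cfull| + |C_1| \geq m$, and since $\cfull \cap C_1 = \emptyset$, the algorithm indeed connects at least $m$ clients, as desired.

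There is no real obstacle here: the work was done in maintaining the invariants of the iterative rounding so that partial clients end up with $y^*(F_j) < 1$. The only subtle point is the case analysis showing that $F_j$ cannot contain both $i_1$ and $i_2$ for a partial client, which follows immediately from $y^*_{i_1}+y^*_{i_2}=1$ and $y^*(F_j)<1$. The choice to keep the side $C_1$ with $|C_1|\geq |C_2|$ is exactly what makes the convex-combination bound tight enough to conclude.
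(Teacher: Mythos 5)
Your proof is correct and follows essentially the same route as the paper: bound $\sum_{j \in \cpart} y^*(F_j)$ by $|C_1|y^*_{i_1} + |C_2|y^*_{i_2} \leq |C_1|$ using $|C_1| \geq |C_2|$ and $y^*_{i_1}+y^*_{i_2}=1$, then invoke the coverage constraint~\eqref{LPiter:coverage}. Your explicit case analysis showing that a partial client's $F_j$ cannot contain both fractional facilities (nor any integrally open one) just spells out what the paper states implicitly when it says the only connected partial clients are those in $C_1$ and $C_2$.
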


\begin{proof}
This is easy to see, as it covers all full clients, and the only partial clients which are connected in the original LP solution $y^*$ output by~\Cref{alg:iter} are those in $C_1$ and $C_2$, contributing a total of $y_{i_1} |C_1| + y_{i_2} |C_2|$ to constraint~\eqref{LPiter:coverage}. Since $|C_1| \geq |C_2|$ and $y_{i_1} + y_{i_2} = 1$, we have $|C_1| + |\cfull| \geq y_{i_1} |C_1| + y_{i_2} |C_2| + |\cfull| \geq m$. 
\end{proof}

\begin{lemma} \label{lem:c1cost}
The total connection cost of all clients in $C_1$ is at most $2 \rho \tilde U$.
\end{lemma}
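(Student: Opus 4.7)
The plan is to bound the cost of $C_1$ by routing every $j \in C_1$ to the (now integrally open) facility $i_1$, and then to invoke the star-cost guarantee from Property~\ref{property:y-star-star} of Lemma~\ref{properties:ystar-after-splitting-lemma}.

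Concretely, for every $j \in C_1$ we have $i_1 \in F_j$ by the definition of $C_1$, and $\hat y_{i_1} = 1$ by construction. Hence the cost $j$ contributes in $\hat y$ is at most $d^q(i_1,j)$, so the total connection cost of $C_1$ is bounded by
\[
\sum_{j \in C_1} d^q(i_1,j) \;\le\; \sum_{j \in C :\, i_1 \in F_j} d^q(i_1,j).
\]
Property~\ref{property:y-star-star} then says that the right-hand side is at most $2\rho \tilde U$, provided $i_1$ is not collocated with any facility in $S_0$. So the only real content is to justify applying this property to $i_1$.

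The likely obstacle, and the one thing that needs care, is verifying that $i_1$ is not collocated with any $i \in S_0$. To handle this, I would combine two facts about the final $y^*$ returned by Algorithm~\ref{alg:iter}: Theorem~\ref{thm:rounding} shows $y^*_i = 1$ for every $i \in S_0$, while Property~\ref{property:y-s0} of Lemma~\ref{properties:ystar-after-splitting-lemma} is preserved throughout the algorithm and gives $\sum_{i'\text{ collocated with }i} y^*_{i'} = 1$ for each $i \in S_0$. Together these force $y^*_{i'} = 0$ on every $i' \ne i$ collocated with $i \in S_0$; since $y^*_{i_1}$ is strictly fractional by Claim~\ref{cl:i1i2}, it cannot be collocated with any $S_0$-facility.

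With that in hand the proof is immediate: Property~\ref{property:y-star-star} applies to $i_1$, giving $\sum_{j:i_1 \in F_j} d^q(i_1,j) \le 2\rho \tilde U$, and since $C_1 \subseteq \{j : i_1 \in F_j\}$, the total connection cost of $C_1$ in $\hat y$ is at most $2\rho \tilde U$, as required.
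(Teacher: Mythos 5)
Your proposal is correct and follows essentially the same route as the paper's proof: connect every $j \in C_1$ to the open facility $i_1$, bound the sum by the star cost of $i_1$ via Property~\ref{property:y-star-star}, and justify that $i_1$ is not collocated with $S_0$ (your detailed argument via Theorem~\ref{thm:rounding}, Property~\ref{property:y-s0} and Claim~\ref{cl:i1i2} is just an expanded version of the paper's one-line remark). The only point worth making explicit is that Property~\ref{property:y-star-star} refers to the \emph{initial} sets $F_j$ from Lemma~\ref{properties:ystar-after-splitting-lemma}, while $C_1$ is defined using the final ones, so you should add the observation (as the paper does) that each $F_j$ only shrinks during Algorithm~\ref{alg:iter}, whence $i_1 \in F_j$ at termination implies $i_1 \in F_j$ initially.
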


\begin{proof}
Notice that $i_1$ is open in the solution $\hat y$. The facility $i_1$ is not collocated with $S_0$ since there are exactly one integral facility open at each location in $S_0$. The connection cost is thus at most
\begin{align*}
	\sum_{j \in C_1} d^q(i_1, j) \leq \sum_{j: i_1 \in F_j} d^q (i_1, j) \leq 2\rho \tilde U.
\end{align*}
The first inequality holds, since if $j$ was only partially connected to $i_1$, then $i_1 \in F_j$. Also, since $F_j$ can only shrink during the algorithm and by Property~\ref{property:y-star-star}, the second inequality holds.
\end{proof}

It remains to consider what we need to change in Theorem~\ref{thm:rounding} for the analysis of connection cost for full clients, when we change $y^*$ to $\hat y$. Notice that for every $j \in \cfull$, we have one unit of open facility in $\Ball_F(j, \frac{3\tau-1}{\tau-1}D_{\ell_j})$ in $\hat y$. This holds due to Lemma~\ref{lem:5r}, and the fact that $y^*_{i_2}$ is strictly smaller than $1$, meaning that if $i_2$ is in such a ball, then so is $i_1$, or some other facility $i_3$ with $y^*_{i_3} = \hat y_{i_3} = 1$.

Thus, we only need to focus on the set $J = \set{j \in \cfull: i_2 \in B_j}$ of clients and consider the total connection cost of these clients in the new solution. Let $\delta' = \Theta(\delta)$ be a number such that $\delta' \leq \frac{\delta}{4+3\delta}\cdot\frac{(\tau-1)(1-\delta')}{\tau(3\tau-1)}$; setting $\delta' = \frac{\delta}{6}\cdot\frac{\tau-1}{2\tau(3\tau-1)}$ satisfies the property. Let $i^*$ be the nearest open facility to $i_2$ in the solution $\hat y$ and let $t' = d(i_2, i^*)$. Let $J_1 = \set{j \in J:d(j, i_2) \geq \delta' t'}$ and $J_2 = J \setminus J_1 = \set{j \in J: d(j, i_2) < \delta' t'}$.

For every $j \in J_1$, we have $d(j, i^*) \leq d(j, i_2) + d(i_2, i^*) \leq d(j, i_2) + \frac{d(j, i_2)}{\delta'} = (1+1/\delta')d(j, i_2)$. Thus,
\ifdefined\fullversion
	\begin{align*}
		\sum_{j \in J_1} d^q (j, i^*) &\leq \left(1+\frac1{\delta'}\right)^q\sum_{j \in J_1} d^q(j, i_2) \leq \left(1+\frac1{\delta'}\right)^q\cdot 2\rho\tilde U = O\left(\frac\rho{\delta^q}\right) U.
	\end{align*}
\else
	\begin{align*}
		\sum_{j \in J_1} d^q (j, i^*) &\leq \left(1+\frac1{\delta'}\right)^q\sum_{j \in J_1} d^q(j, i_2) \\
		&\leq \left(1+\frac1{\delta'}\right)^q\cdot 2\rho\tilde U = O\left(\frac\rho{\delta^q}\right) U.
	\end{align*}
\fi
To see the second inequality, we know that for every $j \in J_1 \subseteq J$, we have $i_2 \in B_j \subseteq F_j$ at the end of Algorithm~\ref{alg:iter}. Thus, $i_2 \in F_j$ at the beginning, and hence by Property~\ref{property:y-star-star}, we have $\sum_{j \in J_1}d^q(j, i_2) \leq 2\rho \tilde U$.

For every $j \in J_2$, we have $t' - d(i_2, j)$ is at most the distance from $j$ to its the nearest open facility in $\hat y$, which is at most $\frac{3\tau-1}{\tau-1} D_{\ell_j}$. Thus, $R_j \geq \frac{D_{\ell_j}}\tau \geq \frac{\tau-1}{\tau(3\tau-1)}(t' - d(i_2, j)) \geq \frac{\tau-1}{\tau(3\tau - 1)}(1-\delta')t'$.

Let $t = \frac{\tau-1}{\tau(3\tau - 1)}(1-\delta)t'$. Then for every $j \in J_2$, we have $R_j \geq t$. Also, by our choice of $\delta'$, we have $\frac{\delta t}{4+3\delta} \geq \delta't'$. So,
\ifdefined\fullversion
	\begin{align*}
		|J_2| &\leq \left|\set{j \in \Ball_{C}\left(i_2, \frac{\delta t}{4+3\delta}\right):R_j \geq t}\right| \leq \frac{\rho(1+3\delta/4)^q}{(1-\delta/4)^q} \cdot \frac{U}{t^q} = O(\rho)\cdot \frac{U}{t^q}.
	\end{align*}
\else
	\begin{align*}
		|J_2| &\leq \left|\set{j \in \Ball_{C}\left(i_2, \frac{\delta t}{4+3\delta}\right):R_j \geq t}\right| \\
		&\leq \frac{\rho(1+3\delta/4)^q}{(1-\delta/4)^q} \cdot \frac{U}{t^q} = O(\rho)\cdot \frac{U}{t^q}.
	\end{align*}
\fi
The second inequality is by Property~\ref{property:prep-sparse-restated}.  So,
\ifdefined\fullversion
	\begin{align*}
		\sum_{j \in J_2}d^q(j, i^*) &\leq |J_2| ((1+\delta')t')^q \leq O(1)\cdot{t'}^q \cdot O(\rho)\cdot\frac{U}{t^q} = O(\rho) U.
	\end{align*}
\else
	\begin{align*}
		\sum_{j \in J_2}d^q(j, i^*) &\leq |J_2| ((1+\delta')t')^q \\&\leq O(1)\cdot{t'}^q \cdot O(\rho)\cdot\frac{U}{t^q} = O(\rho) U.
	\end{align*}
\fi
Thus, overall, we have $\sum_{j \in J}d^q(j, i^*) \leq O(\rho/\delta^q) U$. This shows that the additional cost incurred due to changing the almost-integral solution to an integral one is at most $O(\rho/\delta^q)U$, finishing the proof of Theorem~\ref{thm:main}.


\section{Improved Approximation Algorithm for Matroid Median}
\label{sec:matroid}

The Matroid Median problem~(\matmed) is a generalization of $k$ median where we require the set of open facilities to be an independent set of a given matroid. The $k$ median problem is a special case when the underlying matroid is a uniform matroid of rank $k$.  We now show how we can use our framework to obtain an $\alpha_1$ approximation for \matmed.

Unlike \kmwo, the natural LP relaxation for \matmed only has a small integrality gap, and so we can skip the pre-processing steps and jump directly to the iterative rounding framework. To give a description in the simplest way, we just assume $U=\infty$, $U' = \tilde U'$ is a guessed upper bound on the cost of the optimum solution, $R_j = \infty$ for every $j \in C$ and $m = n$, at the beginning of Section~\ref{sec:framework}. The only change is that $y(F)\leq k$ in \eqref{LP:strong} and \eqref{LP:iter} need to be replaced with the matroid polytope constraints $\sum_{v\in S}y_v \leq \, r_\mathcal{M}(S), \forall S \subseteq F$.  Eventually, all clients $j$ will be moved to $\cfull$ and thus the Constraint~\eqref{LPiter:coverage} becomes redundant. The final $y^*$ is a vertex point in the intersection of a partition matroid polytope and the given matroid polytope. Thus $y^*$ is integral. This leads to an integral solution with expected cost $\alpha_1 U'$, leading to an $\alpha_1$-approximation for \matmed.

\section{Improved approximation algorithm for Knapsack Median} \label{sec:knapsack}

In the knapsack median problem (\knapmed), we are given a knapsack constraint on the facilities instead of a bound on number of facilities opened as in $k$ median.
Formally, for every facility $i$, there is a non-negative weight $w_i$ associated with it, and we have an upper bound $W$ on the total weight of facilities. The problem is to choose a vector $y\in \{0,1\}^n$ that satisfies $\sum_{i\in F}w_iy_i\leq W$ and we minimize the total connection cost of all the clients.
The classical $k$ median is a special case of knapsack median, when all the weights $w_i$ are equal to $1$ and $W$ is equal to $k$. Due to the knapsack constraint, the natural LP for \knapmed has an unbounded integrality gap. We perform the similar preprocessing as for \kmwo (\Cref{sec:preprocessing}), except now we use a simple way to give the upper bound vector $(R_j)$, instead of applying Theorem~\ref{thm:preprocessing}.

\begin{definition}[Sparse Instances for \knapmed]
	\label{def:sparse-kp}
	Suppose we are given an extended \knapmed instance ${\mathcal{I}} = (F, C, d, w, B, S_0)$, and parameters $\rho, \delta \in (0, 1/2)$ and $U \geq 0$. Let $S^*$ be a solution to $\mathcal{I}$ with cost at most $U$, and $(\kappa^*, c^*)$ be the nearest-facility-vector-pair for $S^*$. Then we say the instance ${\mathcal {I}}$ is $(\rho,\delta, U)$-sparse w.r.t the solution $S^*$, if
	\begin{enumerate}[itemsep=0pt,topsep=3pt,label=(\ref{def:sparse-kp}\alph*),leftmargin=*]
		\item \label{property:sparse-star-cheap-kp} for every $i \in S^* \setminus S_0$, we have $\sum_{j \in C:\kappa^*_j = i} c^*_j \leq \rho U$,
		\item \label{property:sparse-sparse-kp} for every $p \in F \cup C$, we have $\big|\Ball_C(p, \delta c^*_p)\big|\cdot c^*_p \leq \rho U$.
	\end{enumerate}
\end{definition}


Using a similar argument as in Theorem~\ref{thm:reduce-to-sparse-instances}, given a \knapmed instance $\mathcal{I}$, we can produce a family of $n^{O(1/\rho)}$ extended \knapmed instances, one of which, say $\mathcal{I'} = (F, C' \subseteq C, d, w, B, \allowbreak S_0 \subseteq S^*)$, is $(\rho, \delta, U)$-sparse w.r.t $S^*$, the optimum solution of $\mathcal{I}$. Moreover, $\frac{1-\delta}{1+\delta}\sum_{j \in C \setminus C'}d(j, S_0) + \sum_{j \in C'} d(j, S^*) \leq U$. Therefore it suffices for us to prove the following theorem:
\begin{theorem}
	\label{thm:main-kp}
	Let $\mathcal{I'} = (F, C', d, w, B, S_0)$ be a $(\rho, \delta, U)$-sparse instance w.r.t some solution $S^*$, for some $\rho, \delta \in (0, 1/2)$ and $U \geq 0$. Let $U'$ be the cost of $S^*$ for $\mathcal{I'}$. There is an efficient randomized algorithm that computes a solution to $\mathcal{I'}$ with expected cost at most $\alpha_1 U' + O(\rho/\delta)U$.
\end{theorem}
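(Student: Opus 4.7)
I would mirror the three-stage approach developed for \kmwo in Sections~\ref{sec:preprocessing}--\ref{sec:open-k}, adapted to the knapsack constraint. The main changes are: (i) the cardinality constraint $y(F)\le k$ in every LP is replaced by the knapsack constraint $\sum_{i} w_i y_i \le B$; (ii) since \knapmed has no outliers, the coverage constraint is replaced by $\sum_i x_{ij}=1$ for all $j\in C'$, so eventually $\cpart=\emptyset$; and (iii) the conversion of an almost-integral solution to a fully integral one must respect the knapsack rather than the cardinality bound. Throughout, $q=1$ and $\tau = \arg\min (3\tau-1)/\ln\tau$, so the target multiplicative constant is $\alpha_1$.

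\textbf{Obtaining the vector $R$.} The first step is to compute upper bounds $R_j$ on the connection cost of each client $j$. As noted in Section~\ref{sec:knapsack}, for \knapmed one can use a simpler procedure than \Cref{thm:preprocessing}. Concretely, I would run Algorithm~\ref{alg:construct-R} verbatim with $q=1$, using Definition~\ref{def:sparse-kp} in place of Definition~\ref{def:sparse}: since $\calI'$ is $(\rho,\delta,U)$-sparse w.r.t.\ $S^*$, the existence of a near-optimal solution respecting $d(i,j)\le R_j$ follows by the same mapping-and-triangle-inequality argument used in the proof of \Cref{thm:preprocessing}, and the number of clients in any ball $\Ball_{C'}(p, \Theta(\delta t))$ with $R_j\ge t$ is at most $O(\rho)\cdot U/t$, i.e.\ the $q=1$ analogue of \eqref{inequ:prep-sparse-restated}.

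\textbf{Strengthened LP and iterative rounding.} The strengthened LP is obtained from \eqref{LP:strong} by keeping \eqref{LPs:S0}--\eqref{LPs:R-d} (with $q=1$ and $\tilde U = (1+\delta/2)U$), replacing $y(F)\le k$ by $\sum_i w_i y_i \le B$, and changing the coverage constraint of \eqref{LP:basic} to $\sum_i x_{ij}=1$ for every $j\in C'$. The facility-splitting of \Cref{subsec:splitting} applies unchanged and produces a vector $y^*$ together with sets $F_j$ satisfying the analogues of Properties~\ref{property:y-star-bound}, \ref{property:y-cost}, \ref{property:y-s0}, \ref{property:y-star-star} (with $\sum_i w_i y^*_i\le B$ in place of \ref{property:y-k-bound} and $y^*(F_j)=1$ in place of \ref{property:y-star-coverage}). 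After the random discretisation of \Cref{sec:discretize}, I would run \Cref{alg:iter} with the knapsack constraint taking the place of \eqref{LPiter:1}. Because all clients must be served, $\cpart$ eventually becomes empty and \eqref{LPiter:coverage} is vacuous. A vertex-counting argument in the spirit of \Cref{lem:2fractional}, where the single knapsack constraint plays the role previously played by $y(F)\le k$, shows that the terminating vertex solution has at most two fractional $y$-values. The cost analysis of \Cref{thm:rounding} then carries over and yields an almost-integral $y^*$ of expected cost at most $\alpha_1 \tilde U'$ with respect to the underlying LP.

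\textbf{Rounding the last two fractional values and main obstacle.} Let $i_1,i_2$ be the two fractional facilities, and assume WLOG $w_{i_1}\le w_{i_2}$. Opening both may violate the knapsack and opening neither leaves clients unassigned, so exactly one must be opened. Setting $\hat y_{i_1}=1$, $\hat y_{i_2}=0$, $\hat y_i=y^*_i$ elsewhere preserves the knapsack: at the vertex we have $y^*_{i_1}+y^*_{i_2}\le 1$ (since any tight \eqref{LPiter:y-full} constraint separates $F_{j_1}$ from $F_{j_2}$ and cannot contain both $i_1$ and $i_2$), so $w_{i_1} = w_{i_1}(y^*_{i_1}+y^*_{i_2}) \le w_{i_1}y^*_{i_1}+w_{i_2}y^*_{i_2}$, and the total weight does not increase. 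Clients of $\cfull$ previously served fractionally by $i_2$ are reassigned to their nearest open facility; this cost is bounded exactly as in \Cref{sec:open-k}, splitting into clients near $i_2$ (handled by the star-cost bound \eqref{LPs:star}) and clients far from $i_2$ (handled by the $R_j$-based sparsity), giving an additive error of $O(\rho/\delta)U$. The main obstacle is the loss of freedom in picking which fractional facility to open: in \kmwo the choice is driven by $|C_1|\ge|C_2|$ to satisfy coverage, whereas here it is forced by the weights, and one must show that the sparsity bound on $R$ controls the reassignment cost regardless of which of $i_1,i_2$ turns out to be ``heavier''. Verifying the inequality $y^*_{i_1}+y^*_{i_2}\le 1$ at the vertex, and isolating the correct structural argument pairing each fractional variable with a tight \eqref{LPiter:y-full} constraint, will be the key technical work.
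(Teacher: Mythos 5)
Your proposal follows the same overall route as the paper: define radius bounds $R_j$ from sparsity, solve a strengthened LP with the knapsack constraint in place of $y(F)\le k$ and full coverage in place of the outlier constraint, split facilities, discretize distances, run the iterative rounding (so that at termination only the knapsack constraint and the disjoint equalities $y(F_j)=1$, $j\in\cstar$, are tight and the vertex has at most two fractional coordinates), open the lighter and close the heavier of the two fractional facilities, and charge the reassignment of $J=\{j:\,i_2\in B_j\}$ to the star-cost bound ($J_1$) and to the $R_j$-sparsity bound ($J_2$). However, there are two places where your argument as written does not go through.

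First, the construction of $R$. You propose to rerun Algorithm~\ref{alg:construct-R} and invoke ``the same mapping-and-triangle-inequality argument'' of \Cref{thm:preprocessing}. That argument reassigns an expensive client $j$ to a nearby \emph{unserved} client $j'\notin f({C'}^*)$; it exploits the outliers. In \knapmed every client must be served, so ${C'}^*=C'$ and there are no free clients to map to — the remapping step has nothing to work with. The fix is easy but is not the argument you cite: either observe that when all clients are served, the same sparsity contradiction shows directly that \emph{no} client can have $c^*_j>(1+3\delta/4)\widehat R_j$ (so the identity map, i.e.\ the optimum itself, already respects the bounds), or, as the paper does, bypass Algorithm~\ref{alg:construct-R} entirely and set $R_j$ to be the largest $R$ with $|\Ball_{C}(j,\delta R)|\cdot R\le\rho U$, which gives $c^*_j\le R_j$ immediately from Property~\ref{property:sparse-sparse-kp} with $p=j$.

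Second, the conversion of the almost-integral vertex. Your structural claim is backwards: a tight constraint $y(F_j)=1$ containing exactly one fractional variable is impossible (integers plus one proper fraction cannot sum to $1$), so any tight equality touching a fractional variable must contain \emph{both} $i_1$ and $i_2$; since the sets $F_j$, $j\in\cstar$, are disjoint, exactly one such equality is tight, and it forces $y^*_{i_1}+y^*_{i_2}=1$ (the remaining coordinates in that $F_j$ are $0$). This equality is what you actually use in the chain $w_{i_1}=w_{i_1}(y^*_{i_1}+y^*_{i_2})\le w_{i_1}y^*_{i_1}+w_{i_2}y^*_{i_2}$; with only $y^*_{i_1}+y^*_{i_2}\le 1$, as you state it, the rounding can violate the knapsack (e.g.\ $w_{i_1}=w_{i_2}=1$, $y^*_{i_1}=y^*_{i_2}=0.3$: opening $i_1$ adds weight $0.4$ beyond the LP weight). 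You also need the residual case of exactly one fractional variable, which can occur here (the knapsack constraint alone tight on it); the paper handles it by simply closing that facility and charging the reassignment exactly as for $i_2$. With these two corrections the rest of your cost analysis matches the paper's and yields $\alpha_1U'+O(\rho/\delta)U$.
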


Again, we shall replace $\mathcal{I'}$ with $\mathcal{I}$ and $C'$ with $C$.  We use a very simple way to define $R_j$: for every $j$, let $R_j$ be the maximum number $R$ such that $|\Ball_C(j, \delta R)|\cdot R \leq \rho U$. Let $(\kappa^*, c^*)$ be the nearest-facility-vector-pair for $S^*$. By Property~\ref{property:sparse-sparse-kp}, we have, $c^*_j \leq R_j$. We then formulate a stronger LP relaxation:
\begin{equation}
	\min \qquad \sum_{i \in F, j \in C}x_{i, j} d(i, j) \qquad \text{s.t.} \tag{$\text{LP}_{\text{k-strong}}$} \label{LP:strong-kp} 
\end{equation} \vspace*{-20pt}

\ifdefined\fullversion
	\noindent \begin{minipage}{0.35\textwidth}
	\centering
	\begin{alignat}{2}
	        y(F) &\leq k \label{LPs:k-kp} \\
			x_{i, j} &\leq y_i &\qquad &\forall i, j \\
			\sum_{i \in F}x_{i, j} &= 1 &\qquad &\forall j\\
	        y_i &= 1 &\qquad & \forall i \in S_0 \label{LPs:S0-kp}
	\end{alignat}
	\end{minipage}
	\begin{minipage}{0.65\textwidth}
	\centering
	\begin{alignat}{2}
	        x_{i,j} &= 0 &\qquad &\forall i, j \, \text { s.t } \, d(i,j) >  R_j \label{LPs:R-kp} \\
	        x_{i,j} &= 0 &\qquad &\forall i \in F\setminus S_0, j \, \text { s.t } \, d(i,j) >  \rho U \label{LPs:R-d-kp}\\
	        \sum_{i} d(i,j) x_{i,j} & \leq \rho U y_i &\qquad & \forall i \in F \setminus S_0 \label{LPs:star-kp} \\[5pt]
	        \nonumber
	\end{alignat}
	\end{minipage}\medskip
\else
\begin{align}
        \sum_i w_i y_i &\leq W \label{LPs:k-kp} \\
		x_{i, j} &\leq y_i &\qquad &\forall i, j \\
		\sum_{i \in F}x_{i, j} &= 1 &\qquad &\forall j\\
        y_i &= 1 &\qquad & \forall i \in S_0 \label{LPs:S0-kp}\\
        x_{i,j} &= 0 &\qquad &\forall i, j \, \text { s.t } \, d(i,j) >  R_j \label{LPs:R-kp} \\
        x_{i,j} &= 0 &\qquad &\forall i \in F\setminus S_0, j \, \text { s.t } \, d(i,j) >  \rho U \label{LPs:R-d-kp}\\
        \sum_{i} d(i,j) x_{i,j} & \leq \rho U y_i &\qquad & \forall i \in F \setminus S_0 \label{LPs:star-kp}
\end{align}
\fi

There is a solution $(x, y)$ to the above LP with cost at most $U'$, as the indicator vector $(x, y)$ for solution $S^*$ satisfies all the above constraints.  After this LP, we eliminate the $x_{i,j}$ variables akin to~\Cref{subsec:splitting}, then perform a randomized discretization of distances identical to the one in~\Cref{sec:discretize}, and obtain a feasible solution to the following auxiliary LP.  
Like in~\Cref{sec:framework}, we begin with $\cpart = C$, $\cfull = \cstar = \emptyset$.

Then we formulate an LP for the iterative rounding. Now we can require $y(F_j)  = 1$ for every $j \in \cpart$ and thus the coverage constraint is not needed:
\begin{multline}
  \min \qquad \sum_{j \in \cpart} \sum_{i \in F_j} {d'}(i, j)y_i  \ifdefined\fullversion\else\\\fi
  +  \sum_{j \in \cfull}\left( \sum_{i \in B_j} {d'}(i, j)y_i + (1 - y(B_j)) d_{\ell_j} \right)  \qquad \textrm {s.t.}  \tag{$\text{LP}_{\text{k-iter}}$} \label{knapsack-LP:iter}
\end{multline}
\ifdefined\fullversion
	\vspace*{-15pt}
	
	\noindent\begin{minipage}{0.5\linewidth}
		\begin{alignat}{2}
	  \sum_{i} w_i y_i &\leq \, W&\qquad  & \label{knapsack-LPiter:1} \\
			y(F_j) &= \, 1 &\qquad &\forall j \in \cstar \cup \cpart \label{knapsack-LPiter:y-full}
		\end{alignat}
	\end{minipage}
	\begin{minipage}{0.5\linewidth}
		\begin{alignat}{2}
			y(B_j) &\leq \, 1 &\qquad &\forall j \in \cfull  \label{knapsack-LPiter:y-B}\\[10pt]
	   		y_i &\in  [0, 1] &\qquad &\forall i \in F \label{knapsack-LPiter:6}
		\end{alignat}
	\end{minipage}\medskip
\else
	\vspace*{-20pt}

		\begin{alignat}{2}
	  \sum_{i} w_i y_i &\leq \, W&\qquad  & \label{knapsack-LPiter:1} \\
			y(F_j) &= \, 1 &\qquad &\forall j \in \cstar \cup \cpart \label{knapsack-LPiter:y-full}\\
			y(B_j) &\leq \, 1 &\qquad &\forall j \in \cfull  \label{knapsack-LPiter:y-B}\\
	   		y_i &\in  [0, 1] &\qquad &\forall i \in F \label{knapsack-LPiter:6}
		\end{alignat}
\fi
We then run the same~\Cref{alg:iter} except we solve~\eqref{knapsack-LP:iter} each time instead of~\eqref{LP:iter}. 
Eventually every client will become a full client, i.e., $\cfull = C$ and $\cpart = \emptyset$. 
At the end of iterative rounding algorithm, the only tight constraints are \eqref{knapsack-LPiter:1}, ~\eqref{knapsack-LPiter:y-full} and \eqref{knapsack-LPiter:6}. Now, we use the fact that the intersection of a laminar family of constraints and a knapsack polytope is an almost-integral polytope with at most two fractional variables. Moreover, following the same analysis as the proof of~\Cref{thm:rounding}, we can bound the expected connection cost to be at most $\frac{3\tau - 1}{\ln {\tau}} U' = \alpha_1 U'$, where the expectation is over the random choice for $a$ in discretization step.

In order to get an integral solution $\hat y$, we let $\hat y = y$ and consider following two cases. If there is exactly one strictly fractional facility, we call it $i_2$ and close it in the solution $\hat y$. If there are two strictly fractional facilities $i_1$ and $i_2$, then we must have $y^*_{i_1} + y^*_{i_2} = 1$. Let $i_1$ be the facility with smaller weight and $i_2$ be the one with bigger weight. We open $i_1$ and close $i_2$ in $\hat y$. Thus, we ensure that the knapsack constraint is not violated in the integral solution.

Unlike \kmwo, there is only one type of cost that we incur during this process since all clients are fully connected. The clients that were using $i_2$ partially now need to be connected to their nearest open facility.  We bound the increase of connection cost of these clients.  Again, it suffices to focus on the set $J = \set{j \in \cfull: i_2 \in B_j}$ of clients and consider the total connection cost of these clients in the new solution $\hat y$. In the solution $\hat y$, for every client $j \in \cfull = C$, there is always a open facility with distance $D_{\ell_j}$ from $j$.

Let $\delta' = \Theta(\delta)$ be a number such that $\delta' \leq \frac{(\tau-1)(1-\delta')}{2\tau(3\tau-1)}\delta$; setting $\delta' = \frac{\tau-1}{4\tau(3\tau-1)}\delta$ satisfies the property. Let $i^*$ be the nearest open facility to $i_2$ in the solution $\hat y$ and let $t = d(i_2, i^*)$. Let $J_1 = \set{j \in J:d(j, i_2) \geq \delta' t}$ and $J_2 = J \setminus J_1 = \set{j \in J: d(j, i_2) < \delta' t}$.

For every $j \in J_1$, we have $d(j, i^*) \leq d(j, i_2) + d(i_2, i^*) \leq d(j, i_2) + \frac{d(j, i_2)}{\delta'} = (1+1/\delta')d(j, i_2)$. Thus,
\begin{align*}
	\sum_{j \in J_1} d (j, i^*) \leq \left(1+\frac1{\delta'}\right)\sum_{j \in J_1} d(j, i_2) \leq \left(1+\frac1{\delta'}\right)\cdot 2\rho U = O\left(\frac\rho{\delta}\right) U.
\end{align*}
To see the second inequality, we know that for every $j \in J_1 \subseteq J$, we have $i_2 \in B_j \subseteq F_j$ in the end of Algorithm~\ref{alg:iter}. Thus, $i_2 \in F_j$ after the splitting. By the property that each star has small cost, we have $\sum_{j \in J_1}d(j, j_2) \leq 2\rho U$.


Assuming $J_2 \neq \emptyset$. Fix any ${j^*} \in J_2$. $t - \delta' t \leq t - d({j^*}, i_2) = d(i_2, i^*) - d({j^*}, i_2) \leq d({j^*}, i^*) \leq \frac{3\tau-1}{\tau-1} D_{\ell_j} \leq \frac{(3\tau-1)\tau}{\tau-1}R_j$. Thus $R_j\geq \frac{(1-\delta')(\tau-1)}{\tau(3\tau-1)} t$.  By our choice of $\delta'$, we have $\delta R_j \geq 2\delta' t$. So,
\begin{align*}
	|J_2| \leq \left|\Ball_{C}\left({j^*}, \delta R_j\right)\right| \leq \frac{\rho U}{R_j}.
\end{align*}
So,
\begin{align*}
	\sum_{j \in J_2}d(j, i^*) \leq |J_2| (1+\delta')t \leq \frac{\rho U}{R_j} \cdot O(1) R_j = O(\rho) U.
\end{align*}

Thus, overall, we have $\sum_{j \in J}d(j, i^*) \leq O(\rho/\delta) U$. This shows that the additional cost incurred due to changing the almost-integral solution to an integral one is at most $O(\rho/\delta)U$, finishing the proof of Theorem~\ref{thm:main-kp}.

\paragraph{Acknowledgements.} RK would like to thank Deeparnab Chakrabarty, Shivam Garg and Prachi Goyal for several useful discussions that led to this work. The work of SL is supported by NSF grants CCF-1566356 and CCF-1717134.

\bibliography{paper}



\end{document}